\documentclass[12pt]{amsart}
\usepackage{amsmath}
\usepackage{amsthm}
\usepackage{amssymb}
\usepackage{amscd}
\usepackage{bbold}
\usepackage[pdftex]{graphicx}
\usepackage{enumitem}
\usepackage{subcaption}
\usepackage{float}




\usepackage{bbm}
\usepackage[utf8]{inputenc}
\usepackage[english]{babel}
\usepackage{mathrsfs}

\usepackage{appendix}

\setlength{\footskip}{0.25in}

\textwidth = 6.5 in
\textheight = 9 in
\oddsidemargin = 0.0 in
\evensidemargin = 0.0 in
\topmargin = 0 in
\headheight = 0.0 in
\headsep = 0.2 in
\parskip = 0 in
\parindent = 0.25in



\newtheorem{lemma}{Lemma}[section]
\newtheorem{theorem}[lemma]{Theorem}
\newtheorem{proposition}[lemma]{Proposition}

\theoremstyle{definition}
\newtheorem{definition}[lemma]{Definition}

\newtheorem{remark}[lemma]{Remark}

\DeclareMathOperator*{\argmin}{arg\,min}


\newcommand{\CC}{{\mathbb C}}

\newcommand{\NN}{{\mathbb N}}

\newcommand{\RR}{{\mathbb R}}

\newcommand{\ZZ}{{\mathbb Z}}


\newcommand{\Ecal}{\mathcal{E}}

\newcommand{\Hcal}{\mathcal{H}}

\newcommand{\Scal}{\mathcal{S}}


\def\benm{\begin{enumerate}}            
\def\eenm{\end{enumerate}}              
\title{Analysis of Decimation on Finite Frames with Sigma-Delta Quantization}

\date{}

\subjclass[2010]{42C15}

\keywords{Decimation, Sigma-Delta Quantization, Unitarily Generated Frames}

\author{Kung-Ching Lin}
\address{Norbert Wiener Center\\
         Department of Mathematics \\
         University of Maryland \\
         College Park, MD 20742 \\
         USA, \\
         Tel. +1 301-405-5058; Fax. +1 301-314-0827}
\email{kclin@math.umd.edu}
\thanks{The author gratefully acknowledges the support of ARO Grant W911NF-17-1-0014}

%

\begin{document}

\newcounter{bean}

\begin{abstract}
In Analog-to-digital (A/D) conversion, signal decimation has been proven to greatly improve the efficiency of data storage while maintaining high accuracy. When one couples signal decimation with the $\Sigma\Delta$ quantization scheme, the reconstruction error decays exponentially with respect to the bit-rate. In this study, similar results are proven for finite unitarily generated frames. We introduce a process called alternative decimation on finite frames that is compatible with first and second order $\Sigma\Delta$ quantization. In both cases, alternative decimation results in exponential error decay with respect to the bit usage.
\end{abstract}

\maketitle


\section{Introduction}\label{intro}

\subsection{Background and Motivation}\label{background}
	Analog-to-digital (A/D) conversion is a process where bandlimited signals, e.g., audio signals, are digitized for storage and transmission, which is feasible thanks to the classical sampling theorem. In particular, the theorem indicates that discrete sampling is sufficient to capture all features of a given bandlimited signal, provided that the sampling rate is higher than the Nyquist rate.\par
	
	Given a function $f\in L^1(\mathbb{R})$, its Fourier transform $\hat{f}$ is defined as
\[
	\hat{f}(\gamma)=\int_{-\infty}^{\infty} f(t)e^{-2\pi\imath t\gamma}\, dt.
\]
	The Fourier transform can also be uniquely extended to $L^2(\mathbb{R})$ as a unitary transformation.\par
\begin{definition}
	Given $f\in L^2(\mathbb{R})$, $f\in PW_{\Omega}$ if its Fourier transform $\hat{f}\in L^2(\mathbb{R})$ is supported in $[-\Omega,\Omega]$. 
\end{definition}
	An important component of A/D conversion is the following theorem:
\begin{theorem}[Classical Sampling Theorem]
	Given $f\in PW_{1/2}$, for any $g\in L^2(\RR)$ satisfying 
\begin{itemize}
\item	$\hat{g}(\omega)=1$ on $[-1/2, 1/2]$
\item	$\hat{g}(\omega)=0$ for $|\omega|\geq 1/2+\epsilon$,
\end{itemize}
	 with $\epsilon>0$ and $T\in(0,1-2\epsilon)$, $t\in\mathbb{R}$, one has\\
\begin{equation}
	f(t)=T\sum_{n\in\mathbb{Z}}f(nT)g(t-nT),
\end{equation}
	where the convergence is both uniform on compact sets of $\RR$ and in $L^2(\RR)$.
\end{theorem}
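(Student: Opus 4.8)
The plan is to work on the Fourier side and exploit the oversampling condition $T \in (0, 1-2\epsilon)$, which forces the sampling rate $1/T$ to exceed the Nyquist rate $1$ with room to spare. First I would periodize $\hat{f}$ at the sampling rate by setting
$$\Phi(\gamma) = \sum_{k \in \ZZ} \hat{f}\left(\gamma - \frac{k}{T}\right),$$
a function of period $1/T$. Because $\hat{f}$ is supported in $[-1/2,1/2]$, an interval of length $1 < 1/T$, the translates appearing in this sum have pairwise disjoint supports, so $\Phi$ agrees with $\hat{f}$ on the central period $[-1/(2T), 1/(2T)]$. Computing the Fourier coefficients of $\Phi$ over one period and invoking Fourier inversion identifies them with the samples: one finds $\Phi(\gamma) = T \sum_{n \in \ZZ} f(nT) e^{-2\pi \imath n T \gamma}$, with the series converging in $L^2$ of a period since $\hat{f} \in L^2$.

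The crux is the identity $\hat{f}(\gamma) = \Phi(\gamma)\hat{g}(\gamma)$ for all $\gamma \in \RR$, and this is where the precise hypotheses on $T$ and $\hat{g}$ are consumed. On $[-1/2,1/2]$ we have $\hat{g} = 1$ and $\Phi = \hat{f}$, so the product returns $\hat{f}$; outside the support of $\hat{g}$, i.e. for $|\gamma| \geq 1/2 + \epsilon$, the product vanishes and so does $\hat{f}$. The delicate region is the transition band $1/2 < |\gamma| < 1/2 + \epsilon$, where $\hat{g}$ need not vanish; here I must verify that $\Phi$ still equals $\hat{f}(\gamma) = 0$, which amounts to checking that no shifted copy $\hat{f}(\gamma - k/T)$ with $k \neq 0$ intrudes into this band. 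The nearest such copy begins at $\gamma = 1/T - 1/2$, so the requirement is $1/T - 1/2 \geq 1/2 + \epsilon$, that is $1/T \geq 1 + \epsilon$; and this follows from $T < 1 - 2\epsilon$, since $1/(1-2\epsilon) \geq 1+\epsilon$ for every $\epsilon > 0$. Thus the oversampling gap guarantees that the transition band of $\hat{g}$ lies strictly between the spectral copies of $\hat{f}$, and the identity holds everywhere.

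With this identity in hand, I would substitute the Fourier series for $\Phi$ to obtain $\hat{f}(\gamma) = T\sum_{n} f(nT)\, e^{-2\pi \imath n T \gamma}\, \hat{g}(\gamma)$, recognize $e^{-2\pi \imath n T \gamma}\hat{g}(\gamma)$ as the Fourier transform of the translate $g(\cdot - nT)$, and apply the inverse Fourier transform to recover $f(t) = T\sum_{n} f(nT)\, g(t - nT)$. The $L^2$ convergence then transfers directly from the $L^2$ convergence of the Fourier series of $\Phi$ via Plancherel and the boundedness of $\hat{g}$. For uniform convergence on compact sets I expect the main technical obstacle: one notes that $\{f(nT)\}_{n} \in \ell^2$ (they are the Fourier coefficients of $\Phi$, which lies in $L^2$ of a period) and controls the tail $\sum_{|n|>N} f(nT) g(t-nT)$ by Cauchy--Schwarz, using the decay of $g$ to make $\sum_{|n|>N} |g(t-nT)|^2$ small uniformly for $t$ ranging over a fixed compact set. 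This last uniform estimate, rather than the algebraic identity, is the part that demands the most care.
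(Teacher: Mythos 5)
The paper does not prove this statement: it is quoted in Section \ref{background} as classical background (the sampling theorem of Shannon--Whittaker--Kotelnikov type), so there is no in-paper argument to compare yours against. Your periodization proof is the standard route and the core of it is correct: the identification of the Fourier coefficients of $\Phi(\gamma)=\sum_k \hat f(\gamma-k/T)$ with $Tf(nT)$, the identity $\hat f=\Phi\hat g$, and in particular the arithmetic showing that the transition band $1/2<|\gamma|<1/2+\epsilon$ is disjoint from every shifted spectral copy precisely because $T<1-2\epsilon$ gives $1/T>1/(1-2\epsilon)>1+\epsilon$ --- this is exactly where the hypothesis on $T$ is spent, and you locate it correctly. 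Two small points deserve care. First, you invoke ``the boundedness of $\hat g$'' to push $L^2$ convergence through the multiplication by $\hat g$, but the theorem only assumes $g\in L^2(\RR)$, so $\hat g$ is a priori only square-integrable on the transition band; what you do get for free is that $\hat g$ is compactly supported and in $L^2$, hence in $L^1$, so $g$ has a bounded continuous representative --- which is what the pointwise and uniform statements require --- while the $L^2$ step as written implicitly strengthens the hypothesis to $\hat g\in L^\infty$ (true in every practical instance, e.g.\ the sinc example in the paper, but not literally granted by the statement). Second, for uniform convergence on compacta your Cauchy--Schwarz tail estimate needs $\sum_{|n|>N}|g(t-nT)|^2\to0$ uniformly on compacta; this follows from a Plancherel--P\'olya type bound for the bandlimited function $g$ (again using $1/T>1+2\epsilon$, so $T$ is sub-Nyquist for $g$'s band as well), and you are right to flag it as the step demanding the most care rather than treating it as automatic decay of $g$.
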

	As an extreme case, for $g(t)=\sin(\pi t)/(\pi t)$ and $T=1$, the following identity holds in $L^2(\RR)$:
\[
	f(t)=\sum_{n\in\ZZ} f(n)\frac{\sin(\pi(t-n))}{\pi(t-n)}.
\]
	However, the discrete nature of digital data storage makes it impossible to store exactly the samples $\{f(nT)\}_{n\in\mathbb{Z}}$. Instead, the quantized samples $\{q_n\}_{n\in\mathbb{Z}}$ chosen from a pre-determined finite alphabet $\mathscr{A}$ are stored. This results in the following reconstructed signal
\[
	\tilde{f}(t)=T \sum q_n g(t-nT).
\]
	As for the choice of the quantized samples $\{q_n\}_n$, we shall discuss the following two schemes
\begin{itemize}
\item	Pulse Code Modulation (PCM):\par
	Quantized samples are taken as the direct-roundoff of the current sample, i.e.,
\begin{equation}
\label{Round-off}
	q_n=Q_0(f(nT)):=\argmin_{q\in\mathscr{A}}|q-f(nT)|.
\end{equation}
\item	$\Sigma\Delta$ Quantization:\par
	A sequence of auxiliary variables $\{u_n\}_{n\in\ZZ}$ is introduced for this scheme. $\{q_n\}_{n\in\ZZ}$ is defined recursively as
\[
\begin{split}
	q_n&=Q_0(u_{n-1}+f(nT)),\\
	u_n&=u_{n-1}+f(nT)-q_n.
\end{split}
\]
\end{itemize}
	$\Sigma\Delta$ quantization was introduced \cite{HI_YY_1963} in 1963, and it is still widely used due to some of its advantages over PCM. Specifically, $\Sigma\Delta$ quantization is robust against hardware imperfection \cite{ID_VV_2006}, a decisive weakness for PCM. For $\Sigma\Delta$ quantization, and the more general noise shaping schemes to be explained below, the boundedness of $\{u_n\}_{n\in\ZZ}$ turns out to be essential. Quantization schemes with $\|u\|_\infty<\infty$ are said to be \textit{stable}.\par
	
	Despite its merits over PCM, $\Sigma\Delta$ quantization merely yields linear error decay with respect to the bit-rate as opposed to exponential error decay by its counterpart PCM. Thus, it is desirable to generalize $\Sigma\Delta$ quantization for better error decay rates.
	
	As a direct generalization, given $r\in\NN$, one can consider an \textit{$r$-th order $\Sigma\Delta$ quantization scheme}:
	
\begin{theorem}[Higher Order $\Sigma\Delta$ Quantization, \cite{ID_RD_2003}]
	Given $f\in PW_{1/2}$ and $T<1$, consider the following stable quantization scheme
\[
	f(nT)-q_n=(\Delta^ru):=\sum_{l=0}^r(-1)^l\binom{r}{l}u_{n-l},
\]
	where $\{q_n\}$ and $\{u_n\}$ are the quantized samples and auxiliary variables, respectively. Then, for all $t\in\RR$,
\[
	|f(t)-T\sum_{n\in\ZZ}q_n g(t-nT)|\leq T^r\|u\|_\infty\left\|\frac{d^r g}{dt^r}\right\|_1.
\]
\end{theorem}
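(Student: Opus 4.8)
The plan is to start from the reconstruction guaranteed by the Classical Sampling Theorem and reduce the reconstruction error to a single series governed by the quantization residuals $f(nT)-q_n=(\Delta^r u)_n$. Since $f(t)=T\sum_n f(nT)\,g(t-nT)$, subtracting the quantized reconstruction gives
\[
f(t)-T\sum_{n\in\ZZ}q_n\,g(t-nT)=T\sum_{n\in\ZZ}\bigl(f(nT)-q_n\bigr)g(t-nT)=T\sum_{n\in\ZZ}(\Delta^r u)_n\,g(t-nT).
\]
The whole point is to trade the smallness of the high-order differences $\Delta^r u$ against the smoothness of $g$, so the natural tool is summation by parts.

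Fixing $t$ and writing $a_n:=g(t-nT)$, let $D$ denote the forward difference $(Da)_n=a_{n+1}-a_n$. The backward-difference operator $\Delta$ is adjoint to $-D$ under the pairing $\sum_n$, i.e. $\sum_n(\Delta w)_n a_n=-\sum_n w_n(Da)_n$ for bounded $w$. Iterating this identity $r$ times transfers every difference onto the smooth kernel samples:
\[
\sum_{n\in\ZZ}(\Delta^r u)_n\,g(t-nT)=(-1)^r\sum_{n\in\ZZ}u_n\,(D^r a)_n.
\]
Here I would note that the boundary terms in each summation by parts vanish because $u$ is bounded (stability) while $g$ and its relevant derivatives decay at infinity — the regime in which the estimate is meaningful, since the right-hand side requires $g^{(r)}\in L^1(\RR)$, which holds for instance whenever $\hat g$ is a smooth bump, making $g$ Schwartz — so that all the series converge absolutely. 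Bounding $|u_n|\le\|u\|_\infty$ then reduces the theorem to the deterministic estimate
\[
\sum_{n\in\ZZ}\bigl|(D^r a)_n\bigr|\;\le\;T^{r-1}\left\|\frac{d^r g}{dt^r}\right\|_1 .
\]

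The heart of the argument — and the step I expect to be the main obstacle — is this last inequality, because $(D^r a)_n$ is exactly the $r$-th finite difference of $g$ at step $-T$ evaluated at $t-nT$, and a naive termwise bound produces an extraneous Riemann sum $\sum_n|g^{(r)}(\,\cdot-nT)|$ that is \emph{not} controlled by $\|g^{(r)}\|_1$. I would resolve this with the Peano-kernel (B-spline) representation of finite differences,
\[
(D^r a)_n=(-T)^r\int_0^r g^{(r)}\!\bigl(t-nT-Tu\bigr)\,N_r(u)\,du,
\]
where $N_r=\mbbu_{[0,1]}^{*r}$ is the cardinal B-spline of order $r$ (the density of a sum of $r$ independent uniforms), satisfying $N_r\ge 0$ and $\int N_r=1$. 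Taking absolute values, summing over $n$, and substituting $y=t-nT-Tu$ recombines the sum and the integral into a single integral against the periodized weight $\sum_n N_r\bigl(\tfrac{t-y}{T}-n\bigr)$. The crucial fact is the partition-of-unity property $\sum_{n\in\ZZ}N_r(x-n)\equiv 1$ (immediate from Poisson summation, since $\widehat{N_r}$ vanishes at every nonzero integer and equals $1$ at the origin), which collapses the weight to $1$ and yields exactly $\sum_n\bigl|(D^r a)_n\bigr|\le T^{r-1}\|g^{(r)}\|_1$. Combining this with the factor $T\|u\|_\infty$ from the previous step gives the claimed bound $T^r\|u\|_\infty\|d^r g/dt^r\|_1$, uniformly in $t$.
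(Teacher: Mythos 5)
The paper does not prove this statement---it is quoted as background from the cited reference \cite{ID_RD_2003}---so there is no in-paper proof to compare against. Your argument is correct and complete: the reduction to $T\sum_n(\Delta^r u)_n g(t-nT)$, the $r$-fold summation by parts (with boundary terms justified by the boundedness of $u$ and the rapid decay of $g$), and the B-spline representation of the $r$-th finite difference with the partition-of-unity identity $\sum_n N_r(x-n)\equiv 1$ together yield exactly $T^r\|u\|_\infty\|g^{(r)}\|_1$. This is essentially the standard proof from the cited source; there the finite difference is written as an iterated integral of $g^{(r)}$ over the cube $[0,T]^r$ and the sum over $n$ is absorbed into one of the $r$ integration variables via the tiling of $\RR$ by translated length-$T$ intervals, which is the same mechanism your partition-of-unity step packages more compactly.
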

	
	Higher order $\Sigma\Delta$ quantization has been known for a long time \cite{WC_PW_RG_1989, PF_AG_RA_1990}, and the $r$-th order $\Sigma\Delta$ quantization improves the error decay rate from linear to polynomial degree $r$ while preserving the advantages of a first order $\Sigma\Delta$ quantization scheme.
	
	From here, a natural question arises: is it possible to generalize $\Sigma\Delta$ quantization further so that the reconstruction error decay matches the exponential decay of PCM? Two solutions have been proposed for this question. The first one is to adopt different quantization schemes. Many of the proposed schemes, including higher order $\Sigma\Delta$ quantization, can be categorized as noise shaping quantization schemes, and a brief summary of such schemes will be provided in Section \ref{prelim}.
	
	The other possibility is to enhance data storage efficiency while maintaining the same level of reconstruction accuracy, and \textit{signal decimation} belongs in this category. Signal decimation is implemented as follows: given an r-th order $\Sigma\Delta$ quantization scheme, there exists $\{q_n^T\}, \{u_n\}$ such that\\
\begin{equation}
\label{quant_def}
	f^{(T)}_n-q_n^T=f(nT)-q_n^T=(\Delta^r u)_n,
\end{equation}
	where $\|u\|_\infty<\infty$, and $\{f^{(T)}_n\}_n=\{f(nT)\}_n$. Then, consider
\[
	\tilde{q}_n^{T_0}:=(S_\rho^r q^T)_{(2\rho+1)n},
\]
	a sub-sampled sequence of $S_\rho^r q^T$, where $(S_\rho h)_n :=\frac{1}{2\rho+1}\sum_{m=-\rho}^{\rho}h_{n+m}$. Signal decimation is the process with which one converts the quantized samples $\{q_n^T\}$ to $\{\tilde{q}_n^{T_0}\}$. See Figure \ref{Diagram} for an illustration.

	Decimation has been known in the engineering community \cite{JC_1986}, and it was observed that decimation results in exponential error decay with respect to the bit-rate, even though the observation remained a conjecture until 2015 \cite{ID_RS_2015}, when Daubechies and Saab proved the following theorem:

\begin{theorem}[Signal Decimation for Bandlimited Functions, \cite{ID_RS_2015}]\label{ID_RS_Deci}
	Given $f\in PW_{1/2}$, $T<1$, and $T_0=(2\rho+1) T< 1$, there exists a function $\tilde{g}$ such that\\
\[
	f(t)=T_0\sum[S_\rho^r f^{(T)}]_{(2\rho+1)n}\tilde{g}(t-nT_0),
\]
\begin{equation}
\label{err_dec_DS}
	|f(t)-T_0\sum \tilde{q}_n^{T_0}\tilde{g}(t-nT_0)|\leq C\|u\|_\infty\left(\frac{T}{T_0}\right)^r =:\mathcal{D},
\end{equation}
	where $\{f_n^{(T)}\}_{n\in\ZZ}$ is defined in \eqref{quant_def}, and $C$ is a constant such that $T_0^r\tilde{g}^{(r)}_1\leq C$. Moreover, the number of bits needed for each unit interval is
\begin{equation}
\label{resource_DS}
	\frac{1}{T_0}\log_{2}((2\rho+1)^r+1)\leq\frac{1}{T_0}\log_{2}\bigg(2\bigg(\frac{T_0}{T}\bigg)^r\bigg)=:\mathcal{R}.
\end{equation}
	Consequently,\\
\[
	\mathcal{D}(\mathcal{R})=2\|u\|_\infty C 2^{-T_0\mathcal{R}}.
\]

\end{theorem}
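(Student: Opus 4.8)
The plan is to split the argument into a perfect‑reconstruction step, an error‑propagation step driven by summation by parts, and a bit‑counting step, then to combine the error bound with the bit count.

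First I would build the reconstruction kernel $\tilde g$. The averaging operator acts on samples exactly like sampling a smoothed function: if I set $f_r(t)=(2\rho+1)^{-r}\sum_{|k_1|,\dots,|k_r|\le\rho} f\big(t+(k_1+\cdots+k_r)T\big)$, then $[S_\rho^r f^{(T)}]_{(2\rho+1)n}=f_r\big((2\rho+1)nT\big)=f_r(nT_0)$ and $\hat f_r(\xi)=m(\xi)^r\hat f(\xi)$, where $m(\xi)=(2\rho+1)^{-1}\sum_{k=-\rho}^{\rho}e^{2\pi\imath kT\xi}$ is a Dirichlet kernel. Because $T_0=(2\rho+1)T<1<2$, the zeros of $m$ lie outside $[-1/2,1/2]$, so $m$ is bounded away from $0$ on the band of $f$. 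I would then take $\tilde g$ with $\hat{\tilde g}$ smooth, equal to $1/m(\xi)^r$ on $[-1/2,1/2]$, and supported inside $[-(1/T_0-1/2),\,1/T_0-1/2]$; the gap $1/T_0-1/2>1/2$ (again using $T_0<1$) leaves room for a smooth transition and forces $\tilde g^{(r)}\in L^1$. Sampling $f_r$ at spacing $T_0<1$ periodizes $\hat f_r$ with period $1/T_0>1$ without aliasing on the band, multiplication by $\hat{\tilde g}=1/m^r$ recovers $\hat f$ exactly, and the smooth cutoff annihilates the shifted spectral copies; this yields $f(t)=T_0\sum_n [S_\rho^r f^{(T)}]_{(2\rho+1)n}\,\tilde g(t-nT_0)$.

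For the error, subtracting the two reconstruction formulas and using $f^{(T)}_n-q_n^T=(\Delta^r u)_n$ from \eqref{quant_def} gives
\[
f(t)-T_0\sum_n \tilde q_n^{T_0}\,\tilde g(t-nT_0)=T_0\sum_n [S_\rho^r\Delta^r u]_{(2\rho+1)n}\,\tilde g(t-nT_0).
\]
The engine of the proof is the telescoping identity $(S_\rho\Delta u)_m=(2\rho+1)^{-1}(u_{m+\rho}-u_{m-\rho-1})$, valid because $S_\rho$ and $\Delta$ are commuting convolution operators, so that $S_\rho^r\Delta^r=(S_\rho\Delta)^r$. The crucial alignment is $(2\rho+1)n-\rho-1=(2\rho+1)(n-1)+\rho$: on the subsampling lattice, one factor of $S_\rho\Delta$ becomes a \emph{unit} backward difference of the coarse sequence $v_{(2\rho+1)n+\rho}$. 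I would formalize this as an induction that peels off one factor of $S_\rho\Delta$ at a time, each step contributing $(2\rho+1)^{-1}$ and, after summation by parts, transferring one forward difference at spacing $T_0$ onto $\tilde g(t-\cdot\,T_0)$. After $r$ steps,
\[
f(t)-T_0\sum_n \tilde q_n^{T_0}\,\tilde g(t-nT_0)=\frac{T_0}{(2\rho+1)^r}\sum_n u_{(2\rho+1)n+r\rho}\,(\Delta_{T_0}^r\tilde g)(t-nT_0),
\]
where $\Delta_{T_0}^r$ is the $r$‑th difference at step $T_0$ (up to sign and a fixed shift, both harmless under $\norm{u}_\infty$).

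It remains to estimate the last sum and to count bits. Writing the finite difference as an iterated integral, $(\Delta_{T_0}^r\tilde g)(x)=\int_{[0,T_0]^r}\tilde g^{(r)}(x-s_1-\cdots-s_r)\,ds$, and summing over $n$ — the $s_1$‑integral together with the sum over $n$ sweeps $\RR$ exactly once, while each remaining $s_i$‑integral contributes a factor $T_0$ — gives $\sum_n|(\Delta_{T_0}^r\tilde g)(t-nT_0)|\le T_0^{r-1}\norm{\tilde g^{(r)}}_1$. Combining with $\norm{u}_\infty$ and the prefactor $(2\rho+1)^{-r}=(T/T_0)^r$ yields $T_0^r\norm{\tilde g^{(r)}}_1(T/T_0)^r\norm{u}_\infty\le C\norm{u}_\infty (T/T_0)^r=\mathcal D$, since $C\ge T_0^r\norm{\tilde g^{(r)}}_1$. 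For the resource count, each $\tilde q_n^{T_0}=(S_\rho^r q^T)_{(2\rho+1)n}$, after scaling by $(2\rho+1)^r$, is a sum of $(2\rho+1)^r$ alphabet entries; for the one‑bit alphabet $\{-1,1\}$ this takes at most $(2\rho+1)^r+1$ values, so $\mathcal R=T_0^{-1}\log_2((2\rho+1)^r+1)\le T_0^{-1}\log_2(2(T_0/T)^r)$. Finally $2^{-T_0\mathcal R}=\tfrac12(T/T_0)^r$, whence $\mathcal D=2\norm{u}_\infty C\,2^{-T_0\mathcal R}$, the advertised exponential law. I expect the main obstacle to be the second step: checking that the telescoping of $S_\rho\Delta$ meshes \emph{exactly} with the subsampling rate $2\rho+1$ so that summation by parts produces clean unit differences of $\tilde g$ on the coarse lattice — this index bookkeeping is what upgrades the crude operator bound $\|(S_\rho\Delta)^r u\|_\infty\le(2/(2\rho+1))^r\norm{u}_\infty$ into the sharp $\norm{\tilde g^{(r)}}_1$‑weighted estimate behind the stated constant. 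Building $\tilde g$ with controlled $\norm{\tilde g^{(r)}}_1$ is a secondary technical point handled by the smoothness of the spectral cutoff.
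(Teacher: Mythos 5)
This theorem is quoted as background from \cite{ID_RS_2015} and the paper gives no proof of it, so there is nothing internal to compare against; judged on its own, your argument is correct and is essentially the Daubechies--Saab proof: the reconstruction filter $\hat{\tilde g}=1/m(\xi)^r$ on the baseband (possible since the first zero of the Dirichlet kernel sits at $1/T_0>1$), the telescoping identity $(S_\rho\Delta u)_m=(2\rho+1)^{-1}(u_{m+\rho}-u_{m-\rho-1})$ meshed with the lattice via $(2\rho+1)n-\rho-1=(2\rho+1)(n-1)+\rho$, summation by parts onto $\Delta_{T_0}^r\tilde g$, the $T_0^{r-1}\|\tilde g^{(r)}\|_1$ estimate, and the $(2\rho+1)^r+1$ value count for a one-bit alphabet all check out. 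The only point worth making explicit in a full write-up is that $\tilde g$ is Schwartz (as $\hat{\tilde g}\in C_c^\infty$), which justifies the absolute convergence and the vanishing boundary terms in the summation by parts.
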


	From \eqref{err_dec_DS} and \eqref{resource_DS}, we can see that the reconstruction error after decimation still decays polynomially with respect to the sampling rate. As for the data storage, the number of bits needed changes from $O(T^{-1})$ to $O(\log(1/T))$. Thus, the reconstruction error decays exponentially with respect to the bits used.

\begin{figure}
\centering
\includegraphics[scale=1]{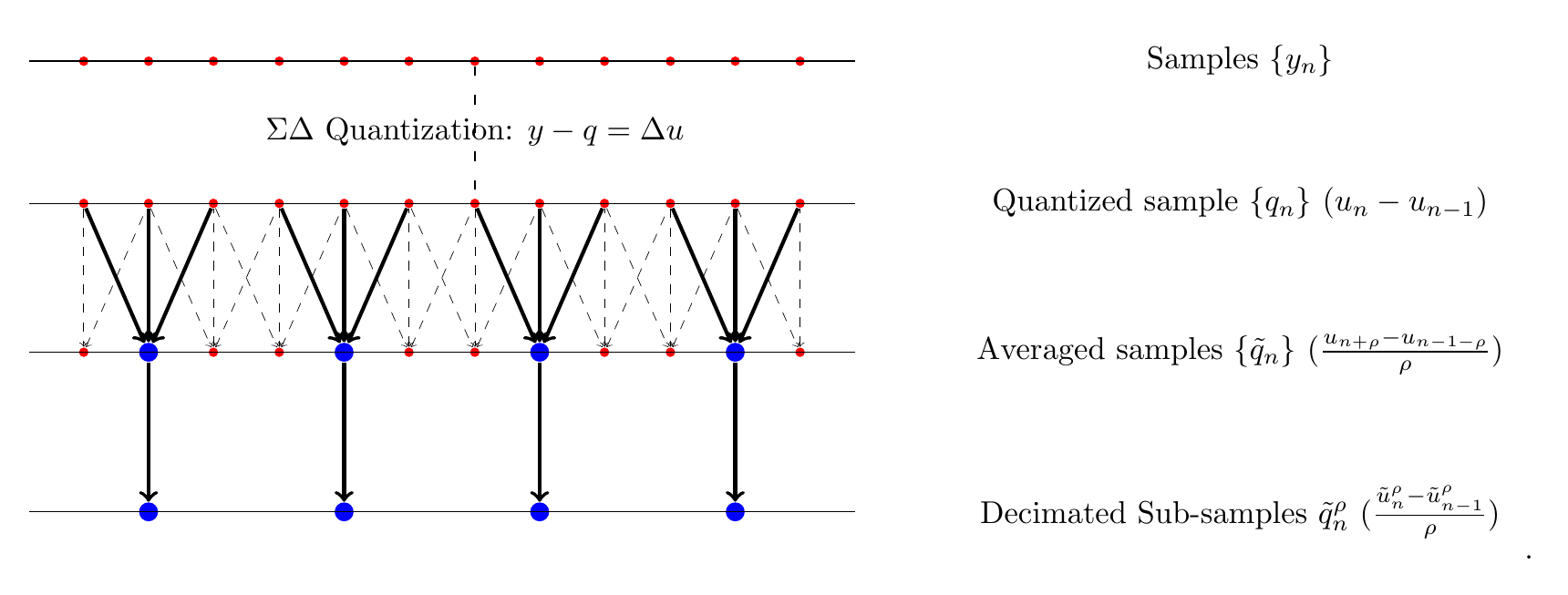}
\caption{Illustration of the first order decimation scheme. After obtaining the quantized samples $\{q_n\}_n$ in the first step, decimation takes the average of quantized samples within disjoint blocks in the second step. The outputs are used as the decimated sub-samples $\{\tilde{q}_n^\rho\}$ in the third step. The effect on the reconstruction (replacing $q_n$ with $y_n-q_n$) is illustrated in parentheses.}
\label{Diagram}
\end{figure}
	
\subsection{Outline and Results}\label{outlines}
	In this paper, we formulate and prove Theorem \ref{main_uni} and Theorem \ref{thm:high_order}, which is an extension of Theorem \ref{ID_RS_Deci} to finite frames. In particular, using our notion of alternative decimation, which will be defined in Section \ref{main_result}, we shall prove exponential error decay with respect to the total number of bits used.
	
	To provide necessary background information, we include preliminaries for signal quantization theory on finite frames in Section \ref{prelim}. We first define $\Sigma\Delta$ quantization on finite frames in Section \ref{SD_quant_frame}. Then, we give a formal definition of noise shaping schemes, which is more general than $\Sigma\Delta$ quantization, in Section \ref{NS_scheme}. Section \ref{prior_work} is devoted to perspective and prior works, and our notation is defined in Section \ref{notation}.
	
	In Section \ref{main_result}, we define alternative decimation and state our main results. Theorem \ref{main} is a special case of Theorem \ref{main_uni}, where we restrict ourselves to \textit{finite harmonic frames}, a subclass of unitarily generated frames. The same result for unitarily generated frames satisfying certain mild conditions is proven in Theorem \ref{main_uni}, and it is further extended to the second order in Theorem \ref{thm:high_order}. The multiplicative structure of decimation is proven in Theorem \ref{mult}, and this enables us to perform decimation iteratively.
	
	We prove Theorems \ref{main}, \ref{main_uni}, \ref{mult}, and \ref{thm:high_order} in Sections \ref{HAR}, \ref{gen_uni}, \ref{mult_discuss}, and \ref{high_order_generalize}, respectively. Generalization to orders greater than two is not possible with our current construction, and we illustrate its main obstacle in Appendix \ref{higher_order_diff}. Decimation for arbitrary orders can be achieved with a different approach and will be introduced in a sequel. Numerical experiments are given in Appendix \ref{num_exp}.

\section{Preliminaries on Finite Frame Quantization}\label{prelim}
	

	Signal quantization theory on finite frames is well motivated from the need to deal with data corruption or erasure \cite{VG_JK_MV_1999, VG_JK_JK_2001}. The authors considered the PCM quantization scheme described above and modeled the quantization error as random noise. In \cite{JB_AP_OY_2006}, deterministic analysis on $\Sigma\Delta$ quantization for finite frames showed that a linear error decay rate is obtained with respect to the oversampling ratio. Moreover, if the frame satisfies certain smoothness conditions, the decay rate can be super-linear for first order $\Sigma\Delta$ quantization. Noise shaping schemes for finite frames have also been investigated, some of which yield exponential error decay rate \cite{CSG_2015,EC_CSG, CSG_2016}. In this section, we shall provide necessary information on quantization for finite frames before stating our results in Section \ref{main_result}.

\subsection{$\Sigma\Delta$ Quantization on Finite Frames}\label{SD_quant_frame}

Fix a separable Hilbert space $\Hcal$ along with a set of vectors $T=\{e_j\}_{j\in\ZZ}\subset\Hcal$. The collection of vectors $T$ forms a frame for $\Hcal$ if there exist $A,B>0$ such that for any $v\in\Hcal$, the following inequality holds:
\[
	A\|v\|_{\Hcal}^2\leq \sum_{j\in\ZZ}|{<}v,e_j{>}|^2\leq B\|v\|_{\Hcal}^2.
\]

	The concept of frames is a generalization of orthonormal bases in a vector space. Different from bases, frames are usually over-complete: the vectors form a linearly dependent spanning set. Over-completeness of frames is particularly useful for noise reduction, and consequently frames are more robust against data corruption than orthonormal bases.\par
	
	Let us restrict ourselves to the case when $\Hcal=\CC^k$ is a finite dimensional Euclidean space, and the frame consists of a finite number of vectors. Given a finite frame $T=\{e_j\}_{j=1}^m$, the linear operator $E:\CC^k\to\CC^m$ satisfying $Ev=\{{<}v,e_j{>}\}_{j=1}^m$ is called the \textit{analysis operator}. Its adjoint operator $E^\ast:\CC^m\to\CC^k$ satisfies $E^\ast c=\sum_{j=1}^mc_je_j$ and is called the \textit{synthesis operator}. The \textit{frame operator} $\Scal$ is defined by $\Scal=E^\ast E: \CC^k\to\CC^k$.

	Under this framework, one considers the quantized samples $q$ of $y=Ex$ and reconstructs $\tilde{x}=\Scal^{-1}E^\ast q$, where $\Scal=E^\ast E$. The frame-theoretic greedy $\Sigma\Delta$ quantization is defined as follows: given a finite alphabet $\mathscr{A}\subset\mathbb{C}$, consider the auxiliary variable $\{u_n\}_{n=0}^{m}$, where we shall set $u_0=0$. For $n=1,\dots,m$, we calculate $\{q_n\}_n$ and $\{u_n\}_n$ as follows:
\begin{equation}
\label{recur}
\begin{split}
		q_n&= Q_0(u_{n-1}+y_n)\\
		u_{n}&=u_{n-1}+y_n-q_n,
\end{split}
\end{equation}
	where $Q_0$ is defined in \eqref{Round-off}. In the matrix form, we have\\
\begin{equation}
\label{sd_quant}
	y-q=\Delta u,
\end{equation}
	where $\Delta\in\mathbb{Z}^{m\times m}$ is the backward difference matrix, i.e., $\Delta_{i,i}=1$ for all $1\leq i\leq m$, and $\Delta_{i,i-1}=-1$ for $2\leq i\leq m$. For an $r$-th order $\Sigma\Delta$ quantization, we have instead $y-q=\Delta^r u$.

	In practice, the quantization alphabet $\mathscr{A}$ is often chosen to be $\mathscr{A}_0$ which is uniformly spaced and symmetric around the origin: given $\delta>0$, we define a mid-rise uniform quantizer $\mathscr{A}_0$ of length $2L$ to be $\mathscr{A}_0=\{(2j+1)\delta/2:\, -L\leq j\leq L-1\}$.

	For complex Euclidean spaces, we define $\mathscr{A}=\mathscr{A}_0+\imath\mathscr{A}_0$. In both cases, $\mathscr{A}$ is called a mid-rise uniform quantizer. Throughout this paper we shall always be using $\mathscr{A}$ as our quantization alphabet.\par

	
\subsection{Noise Shaping Schemes and the Choice of Dual Frames}\label{NS_scheme}

	$\Sigma\Delta$ quantization is a subclass of the more general noise shaping quantization, where the quantization scheme is designed such that the reconstruction error is easily separated from the true signal in the frequency domain. For instance, it is pointed out in \cite{CSG_2016} that the reconstruction error of $\Sigma\Delta$ quantization for bandlimited functions is concentrated in high frequency ranges. Since audio signals have finite bandwidth, it is then possible to separate the signal from the error using low-pass filters. 
	
	Noise shaping quantization has been well established for A/D conversion since the mid 20th century \cite{ST_RH_1978}, and in terms of finite frames, noise shaping schemes generalize the $\Sigma\Delta$ scheme in the following way:
\begin{equation}
	y-q=Hu,
\end{equation}
	where $y, q$, and $u$ are the samples, quantized samples, and the auxiliary variable, respectively, while the transfer matrix $H$ is lower-triangular. Now, given an analysis operator $E$, a transfer matrix $H$, and a dual $F$ to $E$, i.e.\ , $FE=I_k$, the reconstruction error in this setting is 
\[
\|x-Fq\|_2=\|F(Ex-q)\|_2=\|FHu\|_2\leq \|FH\|_{\infty,2}\|u\|_\infty, 
\]
	where $\|\cdot\|_{\infty,2}$ is the operator norm between $\ell^\infty$ and $\ell^2$, i.e.,
\[
	\|T\|_{\infty,2}:=\sup_{\|x\|_\infty=1}\|Tx\|_2.
\]

	The choice of the dual frame $F$ plays a role in the reconstruction error. For instance, \cite{JB_ML_AP_OY_2010} proved that $\argmin_{FE=I_k}\|FH\|_2=(H^{-1}E)^\dagger H^{-1}$, where given any matrix $A$, $A^\dagger$ is defined as the canonical dual $(A^\ast A)^{-1}A^\ast$. More generally, one can consider a $V$-dual, namely $(VE)^\dagger V$, provided that $VE$ is still a frame. With this terminology, decimation can be viewed as a special case of $V$-duals, and conversely every $V$-dual can be associated with corresponding post-processing on the quantized sample $q$.


\subsection{Perspective and Prior Works}\label{prior_work}

\subsubsection{Quantization for Bandlimited Functions}
	Despite its simple form and robustness, $\Sigma\Delta$ quantization only results in linear error decay with respect to the sampling period $T$ as $T\to0$. It was shown \cite{ID_RD_2003,WC_PW_RG_1989, PF_AG_RA_1990} that a generalization of $\Sigma\Delta$ quantization, namely the r-th order $\Sigma\Delta$ quantization, has error decay rate of polynomial order $r$. Leveraging the different constants for this family of quantization schemes, sub-exponential decay can also be achieved. A different family of quantization schemes was proven \cite{CSG_2003} to yield exponential error decay with a small exponent ($c\approx0.07$.) In \cite{PD_CG_FK_2011}, the exponent was improved to $c\approx 0.102$.\par

\subsubsection{Finite Frames}
	$\Sigma\Delta$ quantization can also be applied to finite frames. It was proven \cite{JB_AP_OY_2006} that for any family of finite frames with bounded frame variation, the reconstruction error decays linearly with respect to the oversampling ratio $m/k$, where the corresponding analysis operator $E$ is an $m\times k$ matrix. With different choices of dual frames, \cite{JB_ML_AP_OY_2010} proved that the so-called Sobolev dual achieves minimum induced matrix 2-norm for reconstructions. By carefully matching between the dual frame and the quantization scheme, \cite{CSG_2016} proved that using the $\beta$-dual for random frames results in exponential error decay of near-optimal exponent with high probability.\par
	
\subsubsection{Decimation}
	In \cite{JC_1986}, under the assumption that the noise in $\Sigma\Delta$ quantization is random, it was asserted that decimation greatly reduces the number of bits needed while maintaining the reconstruction accuracy. In \cite{ID_RS_2015}, a rigorous proof was given to show that the assertion is indeed valid, and the reduction of bits used improves the linear error decay into exponential error decay with respect to the bit-rate.
	
\subsubsection{Beta Dual of Distributed Noise Shaping}\label{beta_dual}
	Chou and G{\" u}nturk \cite{CSG_2016, EC_CSG} proposed a distributed noise shaping quantization scheme with beta duals as an example. The definition of a beta dual is as follows:
\begin{definition}[Beta Dual]
	Let $E\in\RR^{m\times k}$ be an analysis operator and $k\mid m$. Recall that $F_V\in\RR^{k\times m}$ is a V-dual of $E$ if\\
\begin{equation}
	F_V=(VE)^{\dagger}V
\end{equation}
	where $V\in\RR^{p\times m}$ such that $VE$ is still a frame.
\end{definition}

	Given $\beta>1$, the $\beta$-dual $F_V=(VE)^{\dagger}V$ has  $V=V_{\beta,m}$, a $k$-by-$m$ block matrix such that each block is $v=[\beta^{-1},\beta^{-2},\dots,\beta^{-m/k}]\in\RR^{1\times m/k}$.
	
	In this case, the transfer matrix $H$ is an $m$-by-$m$ block matrix where each block $h$ is an $m/k$-by-$m/k$ matrix with unit diagonal entries and $-\beta$ as sub-diagonal entries. Under this setting, it is proven that the reconstruction error decays exponentially.
	
	One may notice the similarity between the beta dual and decimation. Indeed, if one chooses $\beta=1$ and normalizes $V$ by $\frac{k}{m}$, the same result as decimation can be obtained, achieving linear error decay with respect to the oversampling ratio and exponential decay with respect to the bit usage. Nonetheless, its generalization to higher order error decay with respect to the oversampling ratio is lacking, whereas the alternative decimation we propose can be extended to the second order. In particular, the raw performance of the second order decimation is superior to the \textit{1-dual} under the same oversampling ratio.

\subsection{Notation}\label{notation}
	The following notation is used in this paper:
\begin{itemize}
\item	$x\in\CC^k$: the signal of interest.
\item $E\in\CC^{m\times k}$: a fixed frame.
\item $y=Ex\in\CC^m$: the sample.
\item	$\rho\in\NN$: the block size of the decimation.
\item	$\eta=\lfloor m/\rho\rfloor\in\NN$: the greatest integer smaller than the ratio $m/\rho$.
\item $\mathscr{A}=\mathscr{A}_0+\imath\mathscr{A}_0\subset\CC$: the quantization alphabet. $\mathscr{A}$ is said to have length $2L$ with gap $\delta$ if $\mathscr{A}_0=\{(2j+1)\delta/2:\, -L\leq j\leq L-1\}$ for some $\delta>0$.
\item	$q\in\CC^m$: the quantized sample obtained from the greedy $\Sigma\Delta$ quantization defined in \eqref{recur}.
\item	$u\in\CC^m$: the auxiliary variable of $\Sigma\Delta$ quantization.
\item $F\in\CC^{k\times m}$: a dual to the analysis operator $E$, i.e.\ $FE=I_k$.
\item	$\mathscr{E}$: the reconstruction error $\mathscr{E}=\|x-Fq\|_2$.
\item	$\mathscr{R}$: total number of bits used to record the quantized sample.
\item	$\Omega\in\CC^{k\times k}$: a Hermitian matrix with eigenvalues $\{\lambda_j\}_{j=1}^k\subset\RR$ and corresponding orthonormal eigenvectors $\{v_j\}_{j=1}^k$.
\item	$\Phi\in\CC^{m\times k}$: the analysis operator of the unitarily generated frame (UGF) with the generator $\Omega$ and the base vector $\phi_0\in\CC^k$.
\end{itemize}

\section{Main Results}\label{main_result}
	
	For the rest of the paper, we shall also assume that our $\Sigma\Delta$ quantization scheme is stable, i.e.\ , $\|u\|_\infty$ remains bounded as the dimension $m\to\infty$. Before we state our results, we shall define the notion of a unitarily generated frame.


\subsection{Unitarily Generated Frames}

	A unitarily generated frame $T_u$ is generated by a cyclic group: given a unit base vector $\phi_0\in\CC^k$ and a Hermitian matrix $\Omega\in\CC^{k\times k}$, the frame elements of $T_u$ are defined as
\begin{equation}
\label{uni_frame}
	\phi_j^{(m)}=U_{j/m}\phi_0,\quad   U_t:=e^{2\pi\imath\Omega t}.
\end{equation}
	The analysis operator $\Phi$ of $T_u$ has $\{\phi_j^\ast\}_j$ as its rows.

	As symmetry occurs naturally in many applications, it is not surprising that unitarily generated frames receive serious attention, and their applications in signal processing abound, \cite{GF_1991, HB_YE_2003, EC_CSG, CSG_2016}. \par
	
	One particular application comes from dynamical sampling, which records the spatiotemporal samples of a signal in interest. Mathematically speaking, one tries to recover a signal $f$ on a domain $D$ from the samples $\{f(X),f_{t_1}(X),\dots, f_{t_N}(X)\}$ where $X\subset D$, and $f_{t_j}=A^{t_j}f$ denotes the evolved signal. Equivalently, one recovers $f$ from $\{{<}A^{t_j}f,e_i{>}\}_{i,j}=\{{<}f,(A^{t_j})^{\ast}e_i{>}\}_{i,j}$, which aligns with the frame reconstruction problems, \cite{AA_JD_IK_2013, AA_JD_IK_2015}. In particular, Lu and Vetterli \cite{YL_MV_2009, YL_MV_2009_2} investigated the reconstruction from spatiotemporal samples for a diffusion process. They noted that one can compensate under-sampled spatial information with sufficiently over-sampled temporal data. Unitarily generated frames represent the cases when the evolution process is unitary and the spatial information is one-dimensional. 
	
	It should be noted that unitarily generated frames are group frames with the generator $G=U_{1/m}$ provided that $U_1=G^m=I_k$, while harmonic frames are tight unitarily generated frames. Here, a frame $T=\{e_j\}_j\subset\Hcal$ is tight if for all $v\in \Hcal$, there exists a constant $A>0$ such that $\sum_{j}|{<}v,e_j{>}|^2=A\|v\|^2$.
	
	 A special class of harmonic frames that we shall discuss is the exponential frame with generator $\Omega$ as a diagonal matrix with integer entries and the base vector $\phi_0=(1,\dots,1)^t/\sqrt{k}$.\par


\subsection{Main Theorems}

	It will be shown that, for unitarily generated frames $\Phi$ satisfying conditions specified in Theorem \ref{main_uni}, $\Sigma\Delta$ quantization coupled with alternative decimation still has linear reconstruction error decay rate with respect to the oversampling ratio $\rho$. As for the data storage, decimation allows for highly efficient storage, and the error decays exponentially with respect to the number of bits used. 

\begin{definition}[Alternative Decimation]
	Given fixed $m,\rho\in\NN$, the $(m,\rho)$-alternative decimation operator is defined to be $D_\rho S_\rho$, where
\begin{itemize}
\item	$ S_\rho=S_\rho^+-S_\rho^-\in\RR^{m\times m}$ is the integration operator satisfying
\begin{equation}
\label{S_definition}
\begin{split}
		&(S_\rho^+)_{l,j}=\left\{\begin{array}{ll}
	\frac{1}{\rho}&\text{if}\quad l\geq\rho,\, l-(\rho-1)\leq j\leq l\\
	0&\text{otherwise},
	\end{array}\right.\\
	&(S_\rho^-)_{l,j}=\left\{\begin{array}{ll}
	\frac{1}{\rho}&\text{if}\quad l\leq\rho-1,\, l+1\leq j\leq m-\rho+l \\
	0&\text{otherwise.}
	\end{array}\right.
\end{split}
\end{equation}
	Here, the cyclic convention is adopted: for any $s\in\ZZ$, $s\equiv s+m$.
\item	$D_\rho\in\NN^{\eta\times m}$ is the sub-sampling operator satisfying
\[
	(D_\rho)_{l,j}=\left\{\begin{array}{ll}
		1&\text{if}\quad j=\rho\cdot l\\
		0&\text{otherwise},
	\end{array}\right.
\]
	where $\eta=\lfloor m/\rho\rfloor$.
\end{itemize}
\end{definition}

\begin{remark}[Canonical Decimation $D_\rho \tilde{S}_\rho$ and Alternative Decimation $D_\rho S_\rho$]
\label{tilde_S}
	It is tempting to consider a closely related circulant matrix $\tilde{S}_\rho$ that satisfies $S_\rho=\tilde{S}_\rho-L$, where $L$ is constant on the first $(\rho-1)$ rows and zero otherwise. Visually, $\tilde{S}_\rho$ and $S_\rho$ has the following form
\begin{equation}
\label{S_shape}
	\tilde{S}_\rho=\frac{1}{\rho}\begin{pmatrix}
	1&0&\hdots&\hdots&0&1&\hdots&1\\
	\vdots&\ddots&\ddots&&&\ddots&\ddots&\vdots\\
	\vdots&&\ddots&0&\hdots&\hdots&0&1\\
	1&\hdots&\hdots&1&&&&\\
	&\ddots&&&\ddots&&&\\
	&&\ddots&&&\ddots&&\\
	&&&\ddots&&&\ddots&\\
	&&&&1&\hdots&\hdots&1
	\end{pmatrix},\,
	S_\rho=\frac{1}{\rho}\begin{pmatrix}
	0&-1&\hdots&\hdots&-1&0&&\\
	&&\ddots&&&\ddots&\ddots&\\
	&&&-1&\hdots&\hdots&-1&0\\
	1&\hdots&\hdots&1&&&&\\
	&\ddots&&&\ddots&&&\\
	&&\ddots&&&\ddots&&\\
	&&&\ddots&&&\ddots&\\
	&&&&1&\hdots&\hdots&1
	\end{pmatrix}.
\end{equation}

	Indeed, $D_\rho\tilde{S}_\rho=D_\rho S_\rho$, so there is no difference between the alternative decimation and canonical decimation. However, we will show in Appendix \ref{alt_dec_nec} that $D_\rho\tilde{S}_\rho^2\neq D_\rho S_\rho^2$, and it is necessary to consider $D_\rho S_\rho^2$ instead of $D_\rho\tilde{S}_\rho^2$ for the second order decimation.

\end{remark}

\begin{definition}[Frame variation]
	Given $A=(A_1,\dots,A_p)\in\CC^{s\times p}$, the frame variation $\sigma(A)$ is defined to be
\[
	\sigma(A)=\sum_{t=1}^{p-1}\|A_t-A_{t+1}\|_2.
\]
\end{definition}

\begin{theorem}[Special Case: Decimation for Harmonic Frames]
\label{main}
	Fix the analysis operator $E=E^{m,k}\in\mathbb{C}^{m\times k}$ with entries $E_{l,j}=\frac{1}{\sqrt{k}}exp(-2\pi\imath (n_j l)/m)$. Suppose $\{n_l\}_{l=1}^k$ are distinct integers in $[-k/2,k/2]$, then the following statements are true:
\begin{itemize}
\item[(a)] {\bf Signal reconstruction:} The matrix $D_\rho S_\rho E\in\CC^{\eta\times k}$ has rank $k$.
\item[(b)]	{\bf Error estimate}: The dual $F=(D_\rho S_\rho E)^\dagger D_\rho S_\rho$ to $E$ has reconstruction error\\
\begin{equation}
	\|x-Fq\|_2\leq\frac{\pi}{2}(\sigma(\bar{F})+\|\bar{F}_\eta\|_2)\|u\|_\infty\frac{1}{\rho},
\end{equation} 
	where $\bar{F}=(\bar{F}_1,\dots,\bar{F}_\eta)$ is the canonical dual of the matrix \\$(\frac{1}{\sqrt{k}}e^{-2\pi\imath\rho ln_j/m})_{l,j}\in\CC^{\eta\times k}$.\par
	Moreover, if $\rho\mid m$, then the reconstruction error $\mathscr{E}$ satisfies\\
\begin{equation}
\label{error_est_har}
	\mathscr{E}:=\|x-Fq\|_2\leq\left\{\begin{array}{ll}
	\frac{\pi^2(k+1)}{\sqrt{3}}\|u\|_\infty\frac{k}{m}&\text{if m, k are even and $n_j$'s are nonzero},\\
	\frac{\pi}{2}(\frac{2\pi(k+1)}{\sqrt{3}}+1)\|u\|_\infty\frac{k}{m}&\text{otherwise}.
	\end{array}\right.
\end{equation}
	In particular, the error decays linearly with respect to the oversampling ratio $m/k$.
\item[(c)] {\bf Efficient data storage}: Suppose the length of the quantization alphabet $\mathscr{A}$ is $2L$, then the decimated samples $D_\rho S_\rho q$ can be encoded by a total of $\mathscr{R}=2\lfloor m/\rho\rfloor\log(2L\rho)=2\eta\log(2L\rho)$ bits. Furthermore, suppose $\eta$ is fixed as $m\to\infty$, then as a function of the total number of bits used, the reconstruction error $\mathscr{E}$ is
\begin{equation}
	\mathscr{E}(\mathscr{R})\leq C_{F,L}\|u\|_\infty 2^{-\frac{1}{2\eta}\mathscr{R}},
\end{equation}
	where $C_{F,L}\leq\pi L(\sigma(\bar{F})+\|\bar{F}_\eta\|)$, and $\bar{F}$ is defined above.\par
	For $\rho\mid m$, we have a better estimate 
\begin{equation}
	\mathscr{E}(\mathscr{R})\leq C_{k,L}\|u\|_\infty 2^{-\frac{1}{2\eta}\mathscr{R}},
\end{equation}
	where $C_{k,L}\leq\frac{\pi kL}{\eta}(\frac{2\pi(k+1)}{\sqrt{3}}+1)$, independent of $\rho$. The optimal exponent $\frac{1}{2k}$ will be achieved in the case $\rho=m/k\in\NN$.
\end{itemize}
\end{theorem}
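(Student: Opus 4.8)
The plan is to exploit the fact that the averaging operator $S_\rho$ acts diagonally on the columns of the harmonic frame $E$. Writing $\tilde S_\rho$ for the circulant version of $S_\rho$ from Remark \ref{tilde_S} (so that $D_\rho S_\rho = D_\rho\tilde S_\rho$), a direct geometric-sum computation shows that the $j$-th column of $E$, a character of frequency $n_j$, is an eigenvector of $\tilde S_\rho$ with eigenvalue
\[
	\mu_{n_j} = \frac{1}{\rho}\sum_{s=0}^{\rho-1}e^{2\pi\imath n_j s/m} = \frac{1}{\rho}\,e^{\imath\pi n_j(\rho-1)/m}\,\frac{\sin(\pi n_j\rho/m)}{\sin(\pi n_j/m)}.
\]
Hence $D_\rho S_\rho E = (D_\rho E)\,M$ with $M = \operatorname{diag}(\mu_{n_1},\dots,\mu_{n_k})$, where $D_\rho E$ is precisely the $\eta\times k$ matrix whose canonical dual is $\bar F$. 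This single identity drives all three parts. For part (a), I would note that in the oversampling regime $\eta\ge k$ (equivalently $\rho\le m/k$, which is forced if $D_\rho S_\rho E$ is to have rank $k$) one has $|n_j|\rho/m\le 1/2$, so $n_j\rho\not\equiv 0\pmod m$ for $n_j\ne 0$ and the differences $(n_j-n_{j'})\rho$ are nonzero mod $m$; thus $M$ is invertible and the ratios $e^{-2\pi\imath n_j\rho/m}$ are distinct. A Vandermonde argument on these $k$ distinct ratios gives $\operatorname{rank}(D_\rho E)=k$, and invertibility of $M$ yields $\operatorname{rank}(D_\rho S_\rho E)=k$.

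For part (b) the starting point is $\mathscr E = \|F(y-q)\|_2 = \|F\Delta u\|_2$, using $y-q=\Delta u$. Two reductions are key. First, since $(AM)^\dagger = M^{-1}A^\dagger$ for invertible diagonal $M$, the dual factors as $F = M^{-1}\bar F\,D_\rho S_\rho$. Second, averaging a backward difference telescopes, $(\tilde S_\rho\Delta u)_l = \frac1\rho(u_l - u_{l-\rho})$, so after subsampling $D_\rho S_\rho\Delta u = \frac1\rho\,\Delta w$ where $w_t := u_{t\rho}$ and $\Delta$ is now the coarse difference with $w_0 = u_0 = 0$. Pulling out $M^{-1}$ and applying summation by parts to $\bar F\,\Delta w$, using $\|w\|_\infty\le\|u\|_\infty$, produces the factor $\sigma(\bar F)+\|\bar F_\eta\|_2$ (the boundary term $\bar F_1 w_0$ drops out). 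It remains to bound $\|M^{-1}\|_{2\to2} = \max_j|\mu_{n_j}|^{-1}$: here I would invoke Jordan's inequality $\tfrac{2}{\pi}x\le\sin x\le x$ on $[0,\pi/2]$, applicable to both the numerator and denominator of $\mu_{n_j}$ exactly because $|\pi n_j\rho/m|\le\pi/2$ in the oversampling regime, yielding $|\mu_{n_j}|\ge 2/\pi$ and hence the constant $\pi/2$.

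For the case $\rho\mid m$ I would compute $\bar F$ explicitly. Then $D_\rho E$ has entries $\frac1{\sqrt k}e^{-2\pi\imath n_j l/\eta}$, its frame operator is $\frac\eta k I_k$, so $\bar F = \frac k\eta(D_\rho E)^\ast$ is a scaled tight-frame dual; direct evaluation gives $\|\bar F_\eta\|_2 = k/\eta$ and $\|\bar F_l-\bar F_{l+1}\|_2 = \frac{2\sqrt k}\eta\big(\sum_j\sin^2(\pi n_j/\eta)\big)^{1/2}$. Estimating the sine sum by $\sin x\le x$ together with $\sum_j n_j^2\le\sum_{|n|\le k/2}n^2 = \frac{(k/2)(k/2+1)(k+1)}3$, and using $1/\rho = \eta/m$, collapses everything to the stated $k/m$-rate; the parity hypotheses ($m,k$ even, all $n_j\ne 0$) enter only in tightening the bookkeeping of the boundary contribution $\|\bar F_\eta\|_2$, which otherwise adds the extra $\frac\pi2\cdot\frac km$ seen in the ``otherwise'' constant. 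For part (c) I would count that each decimated sample $\tilde q_t=(D_\rho S_\rho q)_t$ has real and imaginary parts each equal to $\frac1\rho$ times a sum of $\rho$ elements of $\mathscr A_0$, and such a sum takes at most $\rho(2L-1)+1\le 2L\rho$ distinct values, so each sample costs $2\log(2L\rho)$ bits and the $\eta$ samples cost $\mathscr R = 2\eta\log(2L\rho)$ in total. Inverting this to get $1/\rho = 2L\cdot 2^{-\mathscr R/(2\eta)}$ and substituting into the bounds of part (b) gives the exponential-in-$\mathscr R$ estimates, with the optimal exponent $1/(2\eta)$ maximized at the smallest admissible $\eta=k$, i.e.\ $\rho=m/k$.

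I expect the main obstacle to be the multiplier estimate $|\mu_{n_j}|\ge 2/\pi$: it is the step that couples the spectral structure of $S_\rho$ to the oversampling hypothesis, and it is exactly what forces the regime $\rho\le m/k$ under which both the rank statement (a) and the clean constant $\pi/2$ in (b) hold. The explicit tight-frame computation for $\rho\mid m$, including the parity case distinction, is the most calculation-heavy but conceptually routine part.
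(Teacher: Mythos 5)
Your proposal follows essentially the same route as the paper: the diagonal (multiplier) action of $S_\rho$ on the columns of $E$ giving $D_\rho S_\rho E=(D_\rho E)M$ is the paper's Lemma \ref{twist_har}, the Vandermonde rank argument is Proposition 4.3, the telescoping identity $D_\rho S_\rho\Delta u=\tfrac1\rho\Delta w$ with $w_t=u_{t\rho}$ is Proposition \ref{difference_structure}, the summation-by-parts bound $\sigma(\bar F)+\|\bar F_\eta\|_2$, the multiplier bound $|\mu_{n_j}|\ge 2/\pi$ (you use Jordan's inequality directly where the paper proves monotonicity of $\sin(\alpha x)/(\alpha\sin x)$ and evaluates at the endpoint, but the content is identical), the tight-frame computation of $\bar F=\tfrac{k}{\eta}(D_\rho E)^\ast$ for $\rho\mid m$, and the bit count are all as in Sections \ref{HAR}--\ref{proof_har}.

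The one genuine soft spot is the even case of \eqref{error_est_har}. You locate the improvement correctly in the boundary term, but you attribute it to ``tightening the bookkeeping of $\|\bar F_\eta\|_2$,'' which is not the mechanism: $\|\bar F_\eta\|_2=k/\eta$ regardless of parity. What actually happens is that the boundary term in the summation by parts is $u_{\eta\rho}\bar F_\eta=u_m\bar F_\eta$, and the paper invokes Lemma \ref{JB_AP_even} (Theorem III.7 of Benedetto--Powell--Y{\i}lmaz): for a greedy first-order $\Sigma\Delta$ scheme with a mid-rise quantizer and a zero-sum frame, $u_m=0$ when $m$ is even. The hypothesis that all $n_j\neq 0$ is what guarantees the zero-sum condition $\sum_l e^{-2\pi\imath n_j l/m}=0$, and $m$ even is what forces $u_m=0$; only then does the term $\tfrac{\pi}{2\rho}\|\bar F_\eta\|_2\|u\|_\infty=\tfrac{\pi k}{2m}\|u\|_\infty$ drop out to yield the constant $\tfrac{\pi^2(k+1)}{\sqrt3}$. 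Without this input your argument proves only the ``otherwise'' bound in both cases; everything else in your sketch goes through as written.
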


	The more general result is as follows:
\begin{theorem}[Decimation for Unitarily Generated Frames (UGF)]
\label{main_uni}
	Given $\Omega$, $\phi_0$, $\{\lambda_j\}_j$, $\{v_j\}_j$, and $\Phi=\Phi_{m,k}$ as the generator, base vector, eigenvalues, eigenvectors, and the analysis operator of the corresponding UGF, respectively, suppose
\begin{itemize}
\item	$\{\lambda_j\}_{j=1}^k\subset[-\eta/2,\eta/2]\cap\ZZ$,
\item $C_{\phi_0}=\min_s |{<}\phi_0,v_s{>}|^2>0$, and
\item	$\rho\mid m$,
\end{itemize}
	where $\eta=m/\rho$, then the following statements are true:
\begin{itemize}
\item[(a)] {\bf Signal reconstruction:} $D_\rho S_\rho \Phi_{m,k}\in\CC^{\eta\times k}$ has rank $k$. 
\item[(b)] {\bf Error estimate:} For the dual frame $F=(D_\rho S_\rho \Phi_{m,k})^\dagger D_\rho S_\rho$, the reconstruction error $\mathscr{E}_{m,\rho}$ satisfies
\begin{equation}
\label{error_est_uni}
	\mathscr{E}_{m,\rho}\leq\frac{\pi}{2\eta C_{\phi_0}}(2\pi\max_{1\leq j\leq k}|\lambda_j|+1)\|u\|_\infty\frac{1}{\rho}.
\end{equation}
\item[(c)] {\bf Efficient data storage:} Suppose the length of the quantization alphabet is $2L$, then the total number of bits used to record the quantized samples are $\mathscr{R}=2\eta\log(2L\rho)$ bits. Furthermore, suppose $\eta=m/\rho$ is fixed as $m\to\infty$, then as a function of the total number of bits used, $\mathscr{E}_{m,\rho}$ satisfies
\begin{equation}
	\mathscr{E}(\mathscr{R})\leq C_{k,\phi_0,L,\eta}\|u\|_\infty 2^{-\frac{1}{2\eta}\mathscr{R}},
\end{equation}
	where $C_{k,\phi_0,L,\eta}=\frac{\pi L}{\eta C_{\phi_0}}(2\pi\max_{1\leq j\leq k}|\lambda_j|+1)$, independent of $\rho$.
\end{itemize}
\end{theorem}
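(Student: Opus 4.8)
The plan is to diagonalize everything through the spectral decomposition of $\Omega$ and reduce the whole problem to a clean $\eta$-point discrete Fourier system. Writing $\phi_0=\sum_{s}\langle\phi_0,v_s\rangle v_s$ and $\phi_j^{(m)}=e^{2\pi\imath\Omega j/m}\phi_0=\sum_s\langle\phi_0,v_s\rangle e^{2\pi\imath\lambda_s j/m}v_s$, the analysis operator factors as $\Phi=W\,D\,V^\ast$, where $V=[v_1,\dots,v_k]$ is unitary, $D=\mathrm{diag}(\overline{\langle\phi_0,v_s\rangle})$, and $W\in\CC^{m\times k}$ has entries $W_{j,s}=e^{-2\pi\imath\lambda_s j/m}$. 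The key observation is how $D_\rho S_\rho$ acts on a frequency-$\lambda_s$ column of $W$. Since $D_\rho S_\rho=D_\rho\tilde S_\rho$ (Remark \ref{tilde_S}) and $\tilde S_\rho$ is the circulant moving average, it multiplies that column by the Dirichlet factor $\mu_{\lambda_s}=\frac1\rho\sum_{t=0}^{\rho-1}e^{2\pi\imath\lambda_s t/m}$, and the subsequent subsampling at indices $\rho l$ turns it into the $\eta$-point exponential of frequency $\lambda_s$ (using $m=\rho\eta$). Hence $D_\rho S_\rho\Phi=\Psi^{(0)}M D V^\ast$, where $\Psi^{(0)}\in\CC^{\eta\times k}$ has columns $(e^{-2\pi\imath\lambda_s l/\eta})_{l=1}^\eta$ and $M=\mathrm{diag}(\mu_{\lambda_s})$.

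Part (a) then follows by checking each factor is nonsingular in the relevant sense. The columns of $\Psi^{(0)}$ are distinct DFT vectors: because the $\lambda_s$ are distinct integers lying in an interval of length $\eta$, they are distinct modulo $\eta$, so the columns are orthogonal with squared norm $\eta$, giving $(\Psi^{(0)})^\ast\Psi^{(0)}=\eta I_k$ and rank $k$. The diagonal $M$ is invertible because $|\mu_{\lambda_s}|\ge 2/\pi>0$; this is exactly where the spectral hypothesis $|\lambda_s|\le\eta/2$ enters, via $|\sin(\pi\lambda_s/\eta)|\ge (2/\pi)\pi|\lambda_s|/\eta$ and $|\sin(\pi\lambda_s/m)|\le \pi|\lambda_s|/m$, which combine (through $|\mu_{\lambda_s}|=\frac1\rho\frac{|\sin(\pi\lambda_s/\eta)|}{|\sin(\pi\lambda_s/m)|}$) to $|\mu_{\lambda_s}|\ge 2/\pi$. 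Finally $D$ is invertible since $|\langle\phi_0,v_s\rangle|^2\ge C_{\phi_0}>0$, and $V$ is unitary, so $D_\rho S_\rho\Phi$ has rank $k$.

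For part (b) I would start from $\mathscr{E}_{m,\rho}=\|F\Delta u\|_2=\|G^\dagger D_\rho S_\rho\Delta u\|_2$ with $G:=D_\rho S_\rho\Phi$ and $y-q=\Delta u$. The decisive simplification is a commutation identity: applying the moving average to a backward difference telescopes, and on the subsampled rows $\rho,2\rho,\dots,m$ (all $\ge\rho$, so the cyclic wraparound never interferes) one obtains $D_\rho S_\rho\Delta=\frac1\rho\Delta_\eta D_\rho$, where $\Delta_\eta$ is the $\eta\times\eta$ difference matrix. Thus, with $w=D_\rho u$ (so $\|w\|_\infty\le\|u\|_\infty$ and $w_0=u_0=0$), $\mathscr{E}_{m,\rho}=\frac1\rho\|G^\dagger\Delta_\eta w\|_2$. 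Since the outer factor $MDV^\ast$ is invertible, $G^\dagger=VD^{-1}M^{-1}\bar F^{(0)}$ where $\bar F^{(0)}=(\Psi^{(0)})^\dagger=\frac1\eta(\Psi^{(0)})^\ast$ is the canonical dual of the clean Fourier system. Peeling off operator norms gives $\mathscr{E}_{m,\rho}\le\frac1\rho\|D^{-1}\|_2\|M^{-1}\|_2\|\bar F^{(0)}\Delta_\eta w\|_2\le \frac1\rho C_{\phi_0}^{-1/2}\frac\pi2\|\bar F^{(0)}\Delta_\eta w\|_2$, and summation by parts yields $\|\bar F^{(0)}\Delta_\eta w\|_2\le(\sigma(\bar F^{(0)})+\|\bar F^{(0)}_\eta\|_2)\|u\|_\infty$. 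A direct computation on the explicit columns $\bar F^{(0)}_l=\frac1\eta(e^{2\pi\imath\lambda_s l/\eta})_s$ shows the increments are constant in $l$, so $\|\bar F^{(0)}_\eta\|_2=\sqrt k/\eta$ and $\sigma(\bar F^{(0)})=(\eta-1)\frac1\eta\big(\sum_s 4\sin^2(\pi\lambda_s/\eta)\big)^{1/2}\le\frac{2\pi\sqrt k}{\eta}\max_j|\lambda_j|$. Assembling these and then using $\sqrt k\le C_{\phi_0}^{-1/2}$ (valid because $C_{\phi_0}=\min_s|\langle\phi_0,v_s\rangle|^2\le\frac1k\sum_s|\langle\phi_0,v_s\rangle|^2=\frac1k$) converts $C_{\phi_0}^{-1/2}\sqrt k$ into $C_{\phi_0}^{-1}$ and produces exactly \eqref{error_est_uni}.

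Part (c) is then bookkeeping: each decimated coordinate $(D_\rho S_\rho q)_l$ is $\frac1\rho$ times a sum of $\rho$ alphabet symbols, so its real and imaginary parts each take at most $2L\rho$ values, i.e.\ $\log(2L\rho)$ bits; over $\eta$ complex samples this is $\mathscr{R}=2\eta\log(2L\rho)$. Solving $1/\rho=2L\,2^{-\mathscr{R}/(2\eta)}$ and substituting into \eqref{error_est_uni} gives the stated exponential decay with $C_{k,\phi_0,L,\eta}=\frac{\pi L}{\eta C_{\phi_0}}(2\pi\max_j|\lambda_j|+1)$. I expect the main obstacle to be the commutation identity $D_\rho S_\rho\Delta=\frac1\rho\Delta_\eta D_\rho$ together with the Dirichlet-factor bound $|\mu_{\lambda_s}|\ge 2/\pi$: both hinge on reconciling the cyclic convention built into $S_\rho$ with the non-circulant quantization difference $\Delta$, and on using the spectral constraint $\lambda_j\in[-\eta/2,\eta/2]$ to guarantee that the averaging operator neither annihilates nor distorts the relevant frequencies. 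Everything downstream is then a stable specialization of the harmonic-frame argument behind Theorem \ref{main}.
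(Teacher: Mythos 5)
Your proof is correct and follows essentially the same route as the paper's: the factorization $D_\rho S_\rho\Phi=\Psi^{(0)}MDV^\ast$ is Lemma \ref{twisting_uni} and Proposition \ref{frame_bound_uni} written out explicitly in the eigenbasis of $\Omega$, the commutation identity $D_\rho S_\rho\Delta^{(m)}=\frac{1}{\rho}\Delta^{(\eta)}D_\rho$ is Proposition \ref{difference_structure}, the Dirichlet-factor bound $|\mu_{\lambda_s}|\geq 2/\pi$ is Proposition \ref{C-1_est_uni}, and the Abel summation with the variation estimate mirrors Lemma \ref{uni_var} and \eqref{F2_part}. The only substantive difference is bookkeeping of the constant: the paper bounds $\|\Scal_{m/\rho}^{-1}\|_2\leq(\eta C_{\phi_0})^{-1}$ and the variation of $\Phi_{m/\rho}^\ast$ directly, whereas you bound $\|D^{-1}\|_2\|M^{-1}\|_2\leq\frac{\pi}{2}C_{\phi_0}^{-1/2}$ and the variation of the orthogonal Fourier dual $\bar F^{(0)}$, arriving at the intermediate constant $\frac{\pi\sqrt{k}}{2\eta\sqrt{C_{\phi_0}}}(2\pi\max_j|\lambda_j|+1)$, which is in fact marginally sharper and correctly relaxes to the stated bound via $kC_{\phi_0}\leq\|\phi_0\|_2^2=1$.
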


\begin{remark}
	For Theorem \ref{main} and \ref{main_uni}, if both the signal and the frame are real, then the total number of bits used will be $\mathscr{R}=\eta\log(2L\rho)$ bits, half the amount needed for the complex case.
\end{remark}

	One additional property of decimation is its multiplicative structure.\par

\begin{theorem}[The Multiplicative Structure of Decimation Schemes]
\label{mult}
	Suppose $\rho\mid m$ and $\rho=\rho_1\rho_2$, then the $(m,\rho)$-decimation is equal to the successive iterations of an $(m,\rho_1)$-decimation coupled by an $(m/\rho_1,\rho_2)$-decimation.
\end{theorem}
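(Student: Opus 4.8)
The plan is to reduce the whole identity to the circulant averaging matrix $\tilde{S}_\rho$ from Remark \ref{tilde_S} and then work inside the algebra generated by the cyclic shift. Since $\rho=\rho_1\rho_2$ and $\rho\mid m$, we have $\rho_1\mid m$ and $\rho_2\mid(m/\rho_1)$, so all three divisibility conditions needed below are in force. Writing the composed operator as $D_{\rho_2}S_{\rho_2}\,D_{\rho_1}S_{\rho_1}$ (the outer pair acting on $\CC^{m/\rho_1}$, the inner pair on $\CC^m$), I would first invoke $D_\rho S_\rho=D_\rho\tilde{S}_\rho$ in each relevant dimension to replace every $S$ by its circulant companion $\tilde{S}$. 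This turns the target into the purely circulant statement
\[
	D_{\rho_2}\tilde{S}_{\rho_2}\,D_{\rho_1}\tilde{S}_{\rho_1}=D_\rho\tilde{S}_\rho .
\]
This replacement is the key simplification: unlike $S_\rho$, the matrix $\tilde{S}_\rho$ is genuinely circulant and hence a polynomial in the shift, which is what makes the rest mechanical.

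Next I would set up the shift formalism. Let $\tau_N$ denote the cyclic shift on $\CC^N$, $(\tau_N x)_l=x_{l-1}$ with indices mod $N$, so that whenever the window $\rho$ divides $N$ one has $\tilde{S}_\rho=\frac{1}{\rho}\sum_{i=0}^{\rho-1}\tau_N^{\,i}$. The argument then rests on two elementary identities that I would verify entrywise: first, the downsamplers compose, $D_{\rho_2}D_{\rho_1}=D_\rho$, directly from $(D_{\rho_1}x)_{\rho_2 c}=x_{\rho_1\rho_2 c}=x_{\rho c}$; and second, the ``noble identity'' $\tau_{m/\rho_1}^{\,j}\,D_{\rho_1}=D_{\rho_1}\,\tau_m^{\,\rho_1 j}$, which follows from $(\tau_{m/\rho_1}^{\,j}D_{\rho_1}x)_a=x_{\rho_1(a-j)}=(D_{\rho_1}\tau_m^{\,\rho_1 j}x)_a$, the consistency of the two cyclic conventions being exactly the relation $\rho_1\cdot(m/\rho_1)=m$.

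With these in hand the computation is short. Expanding the outer average, pushing each shift $\tau_{m/\rho_1}^{\,j}$ through $D_{\rho_1}$ by the noble identity, and collapsing $D_{\rho_2}D_{\rho_1}$ to $D_\rho$ yields
\[
	D_{\rho_2}\tilde{S}_{\rho_2}\,D_{\rho_1}\tilde{S}_{\rho_1}
	=\frac{1}{\rho_1\rho_2}\,D_\rho\sum_{j=0}^{\rho_2-1}\sum_{i=0}^{\rho_1-1}\tau_m^{\,\rho_1 j+i}.
\]
The proof closes with the combinatorial observation that $(j,i)\mapsto \rho_1 j+i$ is a bijection from $\{0,\dots,\rho_2-1\}\times\{0,\dots,\rho_1-1\}$ onto $\{0,\dots,\rho-1\}$ (a base-$\rho_1$ digit expansion), so the double sum is precisely $\sum_{k=0}^{\rho-1}\tau_m^{\,k}$ and the right-hand side equals $\frac{1}{\rho}D_\rho\sum_{k=0}^{\rho-1}\tau_m^{\,k}=D_\rho\tilde{S}_\rho$, which is the desired equality.

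The hard part is not any single calculation but the bookkeeping of the cyclic conventions across the two changes of dimension $m\to m/\rho_1\to m/\rho$: one must be sure that averaging-then-downsampling in two stages reproduces exactly the single window of length $\rho$ with no spurious boundary contribution. This is precisely where reducing to $\tilde{S}$ earns its keep, since it makes the noble identity available and removes the boundary corrections that distinguish $S_\rho$ from $\tilde{S}_\rho$; the divisibility hypotheses $\rho_1\mid m$ and $\rho_2\mid(m/\rho_1)$ are exactly what guarantee those corrections are annihilated by the downsamplers in each dimension, so that the passage between $S$ and $\tilde{S}$ is legitimate at every stage.
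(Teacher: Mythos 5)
Your argument is correct, and it reaches the identity $D_{\rho_2}S_{\rho_2}D_{\rho_1}S_{\rho_1}=D_\rho S_\rho$ by a genuinely different route than the paper. The paper's proof stays inside the difference-matrix calculus it has already built: it writes $S_\rho=\frac{1}{\rho}\bar{\Delta}_\rho\Delta^{-1}$, invokes the intertwining relation $D_\rho\bar{\Delta}_\rho=\Delta^{(m/\rho)}D_\rho$ from Proposition~\ref{difference_structure} in each of the two dimensions, lets the factors $(\Delta^{(m/\rho_1)})^{-1}\Delta^{(m/\rho_1)}$ telescope, and finishes with $D_{\rho_2}D_{\rho_1}=D_\rho$. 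You instead pass to the circulant companions via $D_\rho S_\rho=D_\rho\tilde{S}_\rho$ (Remark~\ref{tilde_S}), expand $\tilde{S}_\rho=\frac{1}{\rho}\sum_{i=0}^{\rho-1}\tau^i$ as a polynomial in the cyclic shift, commute shifts past the downsampler with the noble identity $\tau_{m/\rho_1}^{\,j}D_{\rho_1}=D_{\rho_1}\tau_m^{\,\rho_1 j}$, and close with the base-$\rho_1$ digit bijection $(j,i)\mapsto\rho_1 j+i$; the shared ingredient is $D_{\rho_2}D_{\rho_1}=D_\rho$. The paper's version is shorter because it reuses machinery already needed for the error estimates, whereas yours is self-contained and arguably more illuminating: it exhibits the multiplicativity as the tiling of one window of length $\rho$ by $\rho_2$ translates of a window of length $\rho_1$, and it isolates exactly where the divisibility hypotheses enter, namely in guaranteeing that the boundary corrections $L$ distinguishing $S$ from $\tilde{S}$ are annihilated by the downsamplers at both stages. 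Both proofs are complete; yours would also transfer directly to any operator built from shift polynomials, while the paper's generalizes more naturally to the $\bar{\Delta}_\rho$-based second-order analysis of Section~\ref{high_order_generalize}.
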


	Besides the first order alternative decimation in Theorem \ref{main_uni}, it is also possible to generalize the result to the second order decimation. For such a decimation process, the reconstruction error decays quadratically (as opposed to linearly in Theorem \ref{main_uni}) with respect to the oversampling ratio $\rho$ and exponentially with respect to the bit usage.

\begin{theorem}[Second Order Decimation for UGF]
\label{thm:high_order}
	With the same assumptions as Theorem \ref{main_uni} and the additional requirement that the eigenvalues are nonzero, the following statements are true:
\begin{itemize}
\item[(a)] {\bf Signal reconstruction:} $D_\rho S_\rho^2 \Phi_{m,k}\in\CC^{\eta\times k}$ has rank $k$. 
\item[(b)] {\bf Error estimate:} For the dual frame $F=(D_\rho S_\rho^2 \Phi_{m,k})^\dagger D_\rho S_\rho^2$, the reconstruction error $\mathscr{E}_{m,\rho,r}$ has quadratic error decay rate with respect to the oversampling ratio $\rho$:
\begin{equation}
\label{error_est_uni_high_order}
	\mathscr{E}_{m,\rho,r}\leq\frac{\pi^2}{4\eta C_{\phi_0}}\bigg(9+\eta(2\pi\max_{1\leq j\leq k}|\lambda_j|\frac{1}{\eta})^2\bigg)\|u\|_\infty\frac{1}{\rho^2}.\end{equation}
\item[(c)] {\bf Efficient data storage:} Suppose the length of the quantization alphabet is $2L$, then the total number of bits used to record the quantized samples is $\mathscr{R}=4\eta\log(2Lm)$ bits. Furthermore, suppose $\eta=m/\rho$ is fixed as $m\to\infty$, then as a function of the total number of bits used $\mathscr{E}_{m,\rho}$ satisfies
\begin{equation}
	\mathscr{E}(\mathscr{R})\leq C_{k,\phi_0,L,\eta}\|u\|_\infty 2^{-\frac{1}{2\eta}\mathscr{R}},
\end{equation}
	where $C_{k,\phi_0,L,\eta}=\frac{\pi^2}{4\eta C_{\phi_0}}\bigg(9+\eta(2\pi\max_{1\leq j\leq k}|\lambda_j|\frac{1}{\eta})^2\bigg)(2 L\eta)^2$, independent of $\rho$.
\end{itemize}
\end{theorem}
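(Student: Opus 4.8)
The plan is to mirror the architecture of the proof of Theorem~\ref{main_uni}, upgrading the single telescoping of $S_\rho\Delta$ to a double telescoping of $S_\rho^2\Delta^2$ and the first frame variation of the decimated dual to its second variation. Write $G=D_\rho S_\rho^2\Phi_{m,k}\in\CC^{\eta\times k}$ and, once (a) is known, $\bar F=G^\dagger$, so that $F=\bar F\,D_\rho S_\rho^2$ and $FE=I_k$. Since $y-q=\Delta^2u$, the error collapses to
\[
	\mathscr{E}_{m,\rho,r}=\norm{x-Fq}_2=\norm{F(y-q)}_2=\norm{\bar F\,\big(D_\rho S_\rho^2\Delta^2\big)u}_2 ,
\]
so everything reduces to understanding two objects: the canonical dual $\bar F$ (its magnitude and its second column-variation) and the telescoped vector $D_\rho S_\rho^2\Delta^2u$.

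For part~(a) I would diagonalize $\Phi$ as in Theorem~\ref{main_uni}. Writing $\phi_0=\sum_s c_s v_s$ with $c_s=\la\phi_0,v_s\ra$, the $j$-th sample of $\Phi x$ is the trigonometric polynomial $\sum_s(x_s\overline{c_s})e^{-2\pi\imath\lambda_s j/m}$. The operator $S_\rho$ sends each pure mode $e^{-2\pi\imath\lambda j/m}$ to $g_\rho(\lambda)\,e^{-2\pi\imath\lambda j/m}$ off the boundary rows, where $g_\rho(\lambda)=\tfrac1\rho\sum_{p=0}^{\rho-1}e^{-2\pi\imath\lambda p/m}$ is a Dirichlet-type factor, and subsampling at multiples of $\rho$ turns $e^{-2\pi\imath\lambda l/m}$ into $e^{-2\pi\imath\lambda l/\eta}$. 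Hence $G$ factors as a diagonal multiplier built from $g_\rho(\lambda_s)^2$ times a Vandermonde matrix in the nodes $e^{-2\pi\imath\lambda_s/\eta}$ times $\mathrm{diag}(\overline{c_s})$. The hypothesis $\{\lambda_s\}\subset[-\eta/2,\eta/2]\cap\ZZ$ makes the nodes distinct modulo $\eta$, $C_{\phi_0}>0$ keeps $\mathrm{diag}(\overline{c_s})$ invertible, and the new hypothesis $\lambda_s\neq0$ guarantees $g_\rho(\lambda_s)\neq0$; together these force $\mathrm{rank}\,G=k$. Quantitatively, since $|\lambda_s|\le\eta/2$ keeps the Dirichlet factors bounded away from $0$, one gets a lower frame bound $\sigma_{\min}(G)^2\gtrsim\eta C_{\phi_0}$, so that the columns of $\bar F=G^\dagger$ have magnitude $O\big((\eta C_{\phi_0})^{-1}\big)$; this is the source of both the $(\eta C_{\phi_0})^{-1}$ and the $\tfrac{\pi^2}{4}$ (from the two factors of $g_\rho^{-1}$) prefactors in \eqref{error_est_uni_high_order}.

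For part~(b) I would first prove the second-order telescoping identity. A single application gives $(S_\rho^+\Delta u)_l=\tfrac1\rho(u_l-u_{l-\rho})$; iterating and subsampling shows $D_\rho S_\rho^2\Delta^2u=\tfrac1{\rho^2}\Delta_\eta^2\tilde u$ \emph{plus a boundary correction}, where $\tilde u_l=u_{\rho l}$ and $\Delta_\eta$ is the length-$\eta$ difference. Substituting into $\bar F\,(D_\rho S_\rho^2\Delta^2)u$ and summing by parts twice transfers $\Delta_\eta^2$ onto the columns of $\bar F$, producing the second frame variation $\sum_l\norm{\bar F_l-2\bar F_{l+1}+\bar F_{l+2}}_2$ together with endpoint terms. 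Differentiating the modal expression $e^{2\pi\imath\lambda_s l/\eta}$ twice contributes a per-mode factor $(2\pi\lambda_s/\eta)^2$; combined with the $O\big((\eta C_{\phi_0})^{-1}\big)$ column magnitude of $\bar F$ and the sum over $\eta$ indices, this reproduces the $\eta(2\pi\max_s|\lambda_s|/\eta)^2$ summand of \eqref{error_est_uni_high_order}. The endpoint and correction terms contribute the additive constant $9$, the double telescoping supplies the $\rho^{-2}$, and $\norm{\tilde u}_\infty\le\norm{u}_\infty$ closes the estimate.

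The genuine obstacle, I expect, is the boundary correction in this telescoping. By Remark~\ref{tilde_S} one has $D_\rho S_\rho^2\neq D_\rho\tilde S_\rho^2$, with discrepancy $D_\rho(\tilde S_\rho^2-S_\rho^2)=D_\rho\tilde S_\rho L$ surviving the subsampling; so, unlike the first-order case, $S_\rho$ cannot be replaced by the circulant $\tilde S_\rho$, and this extra term must be estimated directly. It is precisely this correction that forces the nonvanishing-eigenvalue hypothesis and that inflates the clean constant into the $9$ above. Finally, part~(c) is bookkeeping: each real coordinate of $D_\rho S_\rho^2q$ is an integer combination of alphabet values spread over an effective window of length $O(m)$ with granularity $O(\rho^{-2})$, so one checks it takes at most $(2Lm)^2$ admissible values, i.e.\ $2\log(2Lm)$ bits; over the $2\eta$ real coordinates this is $\mathscr{R}=4\eta\log(2Lm)$. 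Solving for $\rho$ with $\eta$ fixed and inserting into the part~(b) estimate turns the $\rho^{-2}$ decay into the advertised $2^{-\mathscr{R}/(2\eta)}$ decay, the constant acquiring the factor $(2L\eta)^2$.
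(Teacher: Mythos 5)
Your proposal follows essentially the same route as the paper: commute $S_\rho$ past $\Phi_{m,k}$ via the Dirichlet-type diagonal multiplier (squared), telescope $D_\rho S_\rho^2\Delta^2$ into $\rho^{-2}\Delta^2 D_\rho$ plus an explicit boundary correction, bound the resulting second frame variation modewise, and count bits exactly as you describe. One small misattribution: the nonzero-eigenvalue hypothesis is used to ensure the frame vectors sum to zero so that $S_\rho\Phi_{m,k}=\tilde S_\rho\Phi_{m,k}=\Phi_{m,k}\bar C_{m,\rho}$ exactly (Lemma \ref{twisting_uni_high}), whereas the boundary correction that produces the ``$+1$'' in the constant $9$ comes from the non-commutation $\Delta^{-1}\bar\Delta_\rho\Delta=\bar\Delta_\rho+\Ecal$ on the auxiliary-variable side (Lemma \ref{non-commute}) and is present regardless.
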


	To better demonstrate the ideas in the proof, Theorem \ref{main} will be proven separately in Section \ref{HAR} even though it is essentially a special case of Theorem \ref{main_uni}. Theorem \ref{main_uni} will be proven in Section \ref{gen_uni}, and Theorem \ref{mult} in Section \ref{mult_discuss}. The proof of Theorem \ref{thm:high_order} is given in Section \ref{high_order_generalize}.


\section{Decimation for Finite Harmonic Frames}\label{HAR}

	To prove Theorem \ref{main}, we break down the proof into the following steps: first, we investigate properties of $D_\rho S_\rho E$, the decimated version of the frame $E$. Then, we examine the effect of $D_\rho S_\rho\Delta$, which is essential for our error estimate.

\subsection{The Scaling Effect of Decimation}\label{twisting_action}

	Let $E=E^{m,k}=(e_{l,j})_{l,j}=(\frac{1}{\sqrt{k}}\exp(-2\pi\imath n_j l/m))_{l,j}$ where $\{n_j\}_j$ are distinct. For any $\rho\leq m$, we have the following lemma:
	
	\begin{lemma}
\label{twist_har}
	$S_\rho$ and $E$ satisfy
\begin{equation}
	\left\{\begin{array}{lcll}
	S_\rho E=E\bar{C}&\text{if} &n_j\neq0&\forall j\\
	S_\rho E=E\bar{C}-K&\text{if} & n_{j_0}=0& \text{for some }j_0,
	\end{array}\right.
\end{equation}
	where $\bar{C}\in\mathbb{C}^{k\times k}$ is a diagonal matrix with entries
\begin{equation}
\label{C_form}
	\bar{C}_{j,j}=\left\{\begin{array}{lcl}
	\frac{\sin(\rho n_j\pi/m)}{\rho\sin(n_j\pi/m)} e^{\pi\imath(\rho-1) n_j/m}&\text{if}&n_j\neq0\\
	1&\text{if}&n_j=0,
	\end{array}\right.
\end{equation}
	and $K$ is zero except for the $j_0$-th column, where
\[
	K_{l,j_0}=\left\{\begin{array}{ll}
	\frac{m}{\rho\sqrt{k}}&\text{if}\quad 1\leq l\leq \rho-1\\
	0&\text{otherwise}.
	\end{array}\right.
\]
	In either case $D_\rho S_\rho E=D_\rho E\bar{C}$ as $D_\rho K=0$.
\end{lemma}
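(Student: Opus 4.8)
The plan is to reduce everything to the genuine circulant matrix $\tilde S_\rho$ introduced in Remark \ref{tilde_S}, for which the columns of $E$ are exact eigenvectors. Writing $S_\rho = \tilde S_\rho - L$ with $L$ the correction supported on the first $\rho-1$ rows, I would compute $\tilde S_\rho E$ first, then the correction $LE$, and finally apply $D_\rho$.

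First I would diagonalize $\tilde S_\rho$ against $E$. Since $\tilde S_\rho$ performs a cyclic moving average over a window of $\rho$ consecutive indices, its action on the $j$-th column of $E$ is $(\tilde S_\rho E)_{l,j} = \frac{1}{\rho\sqrt k}\sum_{t=0}^{\rho-1}\zeta^{\,l-t}$ with $\zeta = e^{-2\pi\imath n_j/m}$. For $n_j\neq 0$ this is a geometric sum equal to $\frac{1}{\rho\sqrt k}\zeta^l\frac{1-\zeta^{-\rho}}{1-\zeta^{-1}}$; applying the identity $1-e^{\imath\theta} = -2\imath\sin(\theta/2)e^{\imath\theta/2}$ to both numerator and denominator collapses this to $e_{l,j}\,\bar C_{j,j}$ with $\bar C_{j,j}$ exactly as in \eqref{C_form}, the surviving phase being $e^{\imath\pi(\rho-1)n_j/m}$. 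When $n_j=0$ the window sum is simply $\rho$, giving $\bar C_{j,j}=1$. Hence $\tilde S_\rho E = E\bar C$.

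Next I would identify the correction term. Unwinding the definitions of $S_\rho^\pm$ shows that on each row $1\leq l\leq \rho-1$ the matrix $L=\tilde S_\rho - S_\rho$ is the constant $\tfrac1\rho$: the wrapped circulant entries of $\tilde S_\rho$ together with the negative block contributed by $S_\rho^-$ recombine into a full row of $\tfrac1\rho$'s, and $L$ vanishes on rows $l\geq\rho$. Therefore $(LE)_{l,j}=\frac{1}{\rho\sqrt k}\sum_{j'=1}^m e^{-2\pi\imath n_j j'/m}$ for $l\leq\rho-1$, where the inner sum is a complete sum of $m$-th roots of unity. This vanishes whenever $n_j\not\equiv 0\pmod m$, which is automatic since the $n_j$ lie in $[-k/2,k/2]$, and equals $m$ when $n_j=0$. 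Thus $LE=0$ if every $n_j\neq0$, while $LE=K$ (matching the stated support on rows $1,\dots,\rho-1$ and value $m/(\rho\sqrt k)$) when $n_{j_0}=0$. Combining with $S_\rho E=\tilde S_\rho E-LE=E\bar C-LE$ yields the two advertised cases.

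Finally, because $D_\rho$ samples only the rows whose index is a multiple of $\rho$, whereas $K$ is supported on rows $1,\dots,\rho-1$, we get $D_\rho K=0$ and hence $D_\rho S_\rho E = D_\rho E\bar C$ in both cases. I expect the only delicate points to be the bookkeeping with the cyclic index convention when verifying that $L$ is constant across each of its nonzero rows, and keeping the half-angle phase $e^{\imath\pi(\rho-1)n_j/m}$ straight in the trigonometric simplification; the geometric-sum evaluation and the root-of-unity cancellation are otherwise routine.
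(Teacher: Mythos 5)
Your proposal is correct and is essentially the paper's argument in a slightly repackaged form: the paper's key identity $(S_\rho)_{l,j}+\tfrac1\rho=(S_\rho)_{l+\rho,j+\rho}$ for the rows $l\leq\rho-1$ is precisely the statement that $S_\rho+L=\tilde S_\rho$ is circulant, so your decomposition $S_\rho=\tilde S_\rho-L$, the geometric-sum evaluation of $\tilde S_\rho E=E\bar C$, and the full-period root-of-unity sum producing $K$ reproduce the paper's computation step for step (and match the phase $e^{\pi\imath(\rho-1)n_j/m}$ and the value $m/(\rho\sqrt k)$ exactly). No gaps.
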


\begin{remark}
In \eqref{S_shape}, one observes that $S_\rho$ differs from an actual circulant matrix $\tilde{S}_\rho$ by a matrix $L$ with $1/\rho$ on every entry of the first $\rho-1$ rows and zero otherwise. Since $D_\rho L=0$, we can conclude that $D_\rho S_\rho=D_\rho\tilde{S}_\rho$. Thus, it is possible to consider $D_\rho\tilde{S}_\rho$, which is a more natural formulation of decimation than the alternative decimation.
\end{remark}
\begin{proof}
	We start with the computation on $\rho\leq l\leq m$. First, suppose $n_j\neq0$. Then, by \eqref{S_definition},
\[
\begin{split}
	(S_\rho^+ E)_{\rho,j}&=\frac{1}{\rho\sqrt{k}}\sum_{s=1}^{\rho}\exp(-2\pi\imath n_j s/m)\\
				&=\frac{1}{\rho\sqrt{k}}e^{-\pi\imath(\rho+1)n_j/m}\frac{e^{\pi\imath(\rho-1) n_j/m}(1-e^{-2\pi\imath\rho n_j/m)})}{1-\exp(-2\pi\imath n_j/m)}\\
				&=\frac{1}{\sqrt{k}}e^{-\pi\imath(\rho+1)n_j/m}\frac{\sin(\rho n_j\pi/m)}{\rho\sin(n_j\pi/m)}.
\end{split}
\]
For $\rho\leq l\leq m$, 
\begin{equation}
\label{lower_part}
\begin{split}
	(S_\rho E)_{l,j}=(S_\rho^{+}E)_{l,j}&=(S_\rho^{+}E)_{\rho,j}\exp(-2\pi\imath(l-\rho)n_j/m)\\
				&=\frac{1}{\rho\sqrt{k}}\exp(-2\pi\imath ln_j/m)\frac{\sin(\rho n_j\pi/m)}{\sin(n_j\pi/m)}e^{\pi\imath(\rho-1) n_j/m}\\
				&=E_{l,j}\frac{\sin(\rho n_j\pi/m)}{\rho\sin(n_j\pi/m)}e^{\pi\imath(\rho-1) n_j/m} .
\end{split}
\end{equation}
	If $n_j=0$, then $(S_\rho^+ E)_{l,j}=\frac{1}{\sqrt{k}}=E_{l,j}$.\par
	For $l\leq\rho$, we make the following observation:
\begin{equation}
	(S_\rho)_{l,j}+\frac{1}{\rho}=(S_\rho)_{l+\rho,j+\rho},
\end{equation}
	with the cyclic convention on indices. Then for $l\leq\rho-1$, noting that $\exp(-2\pi\imath n_j(s+m)/m)=\exp(-2\pi\imath n_js/m)$,
\begin{equation}
\begin{split}
	(S_\rho E)_{l,j}&=\sum_{s=1}^{m}(S_\rho)_{l,s}E_{s,j}\\
					&=\sum_{s=1}^{m}(S_\rho)_{l+\rho,s+\rho}E_{s,j}-\frac{1}{\rho}\sum_{s=1}^m E_{s,j}\\
					&=\sum_{s=1}^m(S_\rho)_{l+\rho,s+\rho}E_{s+\rho,j}\exp(2\pi\imath\rho n_j/m)-\frac{m}{\rho\sqrt{k}}\delta(n_j)\\
					&=E_{l+\rho,j}e^{2\pi\imath\rho n_j/m}\frac{\sin(\rho n_j\pi/m)}{\rho\sin(n_j\pi/m)}e^{\pi\imath(\rho-1) n_j/m} -\frac{m}{\rho\sqrt{k}}\delta(n_j)\\
					&=E_{l,j}\frac{\sin(\rho n_j\pi/m)}{\rho\sin(n_j\pi/m)}e^{\pi\imath(\rho-1) n_j/m} -\frac{m}{\rho\sqrt{k}}\delta(n_j).
\end{split}
\end{equation}
\end{proof}

	Now we can give the condition for which $D_\rho S_\rho E$ has full rank.
	
\begin{proposition}
	The following statements are equivalent:
\begin{itemize}
\item	$D_\rho S_\rho E$ has full rank.
\item	$\{\rho n_j\}_{j=1}^k$ are distinct residues modulo $m$, and $\rho n_j=0$ modulo $m$ implies $n_j=0$.
\end{itemize}
\end{proposition}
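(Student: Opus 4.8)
The plan is to reduce everything to the factorization already supplied by Lemma \ref{twist_har}, namely $D_\rho S_\rho E = D_\rho E\,\bar C$ with $\bar C$ the diagonal matrix of \eqref{C_form}. Since $\bar C$ is $k\times k$ and diagonal, $D_\rho S_\rho E\in\CC^{\eta\times k}$ has rank $k$ if and only if both (i) $\bar C$ is invertible and (ii) $D_\rho E$ has full column rank $k$. Indeed, if some diagonal entry $\bar C_{j,j}$ vanishes then the $j$-th column of $D_\rho E\bar C$ is $\bar C_{j,j}$ times the $j$-th column of $D_\rho E$, hence zero, and the rank drops below $k$; whereas if $\bar C$ is invertible then right-multiplication by it is an invertible column operation and $\operatorname{rank}(D_\rho E\bar C)=\operatorname{rank}(D_\rho E)$. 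So the proof splits into analyzing (i) and (ii) separately, and I would match each to one of the two stated arithmetic conditions.

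For (i), I read off \eqref{C_form} directly: when $n_j=0$ we have $\bar C_{j,j}=1\neq0$, and when $n_j\neq0$ the denominator $\rho\sin(n_j\pi/m)$ is nonzero (since $0<|n_j|\le k/2<m$ forces $n_j\pi/m\notin\pi\ZZ$), so $\bar C_{j,j}=0$ exactly when $\sin(\rho n_j\pi/m)=0$, i.e.\ when $\rho n_j\equiv 0\pmod m$. Hence $\bar C$ is invertible precisely when no index $j$ has $n_j\neq0$ together with $\rho n_j\equiv0\pmod m$ --- that is, exactly when ``$\rho n_j\equiv0\pmod m$ implies $n_j=0$,'' which is the second stated condition.

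For (ii), the $l$-th row of $D_\rho E$ is row $\rho l$ of $E$, so $(D_\rho E)_{l,j}=\frac1{\sqrt k}\,\omega_j^{\,l}$ with $\omega_j:=e^{-2\pi\imath\rho n_j/m}$; that is, $D_\rho E$ is a weighted Vandermonde matrix in the nodes $\omega_j$. If two nodes coincide the corresponding columns are identical, so full column rank forces the $\omega_j$ to be distinct, which is equivalent to $\{\rho n_j\}_j$ being distinct modulo $m$ (the first stated condition). Conversely, when the $\omega_j$ are distinct I would exhibit the $k\times k$ minor on rows $l=1,\dots,k$: factoring $\omega_j$ out of each column turns it into a standard Vandermonde matrix with determinant $\prod_{j<j'}(\omega_{j'}-\omega_j)\neq0$, so the minor is invertible and $D_\rho E$ has rank $k$. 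Combining (i) and (ii) yields the desired equivalence.

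The only genuinely delicate point is the hidden hypothesis $\eta\ge k$: the $k\times k$ minor invoked in (ii) only exists when there are at least $k$ decimated rows, and no $\eta\times k$ matrix can have rank $k$ when $\eta<k$, regardless of the $n_j$. I would therefore state explicitly that the equivalence is understood under the standing assumption $\eta=\lfloor m/\rho\rfloor\ge k$ (which holds in the regime of Theorem \ref{main}). Everything else is routine bookkeeping once the factorization $D_\rho S_\rho E=D_\rho E\,\bar C$ is in hand, so I expect the main effort to lie in cleanly separating the ``vanishing-entry'' obstruction (i) from the ``distinct-node'' obstruction (ii) rather than in any single hard computation.
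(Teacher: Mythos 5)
Your proof is correct and follows essentially the same route as the paper: factor $D_\rho S_\rho E=D_\rho E\bar C$ via Lemma \ref{twist_har}, characterize invertibility of the diagonal matrix $\bar C$ by the vanishing of $\sin(\rho n_j\pi/m)$, and characterize full column rank of $D_\rho E$ through its Vandermonde structure in the nodes $e^{-2\pi\imath\rho n_j/m}$. Your explicit remark that the equivalence presupposes $\eta\ge k$ is a worthwhile clarification that the paper leaves implicit.
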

\begin{proof}
	By Lemma \ref{twist_har}, we see that
\begin{equation}
	D_\rho S_\rho E=D_\rho E\bar{C}=D\begin{pmatrix}
	E_1\\
	E_2\\
	\vdots\\
	E_m
	\end{pmatrix}\bar{C}=\begin{pmatrix}
	E_\rho\\
	E_{2\rho}\\
	\vdots\\
	E_{\eta\rho}
	\end{pmatrix}\bar{C}.
\end{equation}
	$D_\rho E$ is a sub-matrix of a Vandermonde matrix with parameters $\{\exp(-2\pi\imath \rho n_j/m)\}_{j=1}^k$. Thus this matrix has full rank if and only if $\{\rho n_j\}_{j=1}^k$ are distinct modulo $m$. On the other hand, $\bar{C}$ is an invertible diagonal matrix if and only if $\bar{C}_{j,j}\neq0$. It is true when $\rho n_j\neq0$ for all $j$ except if $n_j=0$ to begin with.\par
\end{proof}

\begin{remark}\label{dist_nj}
	$|\{-\lfloor \eta/2\rfloor,\dots,\lfloor \eta/2\rfloor\}|\geq \eta$, and if $\{n_j\}_{j=1}^k\subset\{-\lfloor \eta/2\rfloor,\dots,\lfloor \eta/2\rfloor\}$ are distinct residues modulo $m$, then $\{\rho n_j\}_j$ are distinct since elements of $\{-\lfloor \eta/2\rfloor,\dots,\lfloor \eta/2\rfloor\}$ are in different cosets of $(\ZZ/m\ZZ)/ker(\sigma)$ where $\sigma:\mathbb{Z}/m\mathbb{Z}\to\mathbb{Z}/m\mathbb{Z}$ satisfies $\sigma(x)=\rho x$.
\end{remark}

	From Lemma \ref{twist_har}, we see that $ D_\rho S_{\rho} E= D_\rho E\bar{C}$. Thus for any dual $\tilde{F}$ to $D_\rho E\bar{C}$, $\tilde{F}=\bar{C}^{-1}\bar{F}$ where $\bar{F}$ is a dual to $ D_\rho E$. The estimate of $\|\bar{C}^{-1}\|_2$ is described in Proposition \ref{F1_est}, and we need a lemma for this proposition:
\begin{lemma}\label{h_est}
	Given any number $\alpha>1$, the function
\[
	h(x)=h_{\alpha}(x)=\frac{\sin(\alpha x)}{\alpha\sin(x)}
\]
	is even and strictly decreasing in $(0,\pi/(2\alpha))$. Moreover, $\min_{x\in[-\pi/2\alpha,\pi/2\alpha]}h_\alpha(x)\geq\frac{2}{\pi}$.
\end{lemma}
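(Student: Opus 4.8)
The plan is to handle the three assertions in turn: the evenness by inspection, the strict monotonicity by a sign analysis of the derivative, and the lower bound by evaluating $h_\alpha$ at the right-hand endpoint of the interval. Evenness is immediate, since replacing $x$ by $-x$ flips the sign of both $\sin(\alpha x)$ and $\sin x$, and these signs cancel; this also lets me reduce the minimization to $[0,\pi/(2\alpha)]$, where $\sin x>0$ and $h_\alpha$ is smooth.

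For the monotonicity I would differentiate. Writing $h_\alpha(x)=\frac{1}{\alpha}\frac{\sin(\alpha x)}{\sin x}$, the quotient rule gives $h_\alpha'(x)=\frac{g(x)}{\alpha\sin^2 x}$ with $g(x):=\alpha\cos(\alpha x)\sin x-\sin(\alpha x)\cos x$. Since the denominator is positive on $(0,\pi/(2\alpha))$, the whole question reduces to showing $g<0$ there. The step that carries the argument is the observation that $g(0)=0$ while differentiating $g$ produces a clean cancellation: the two $\cos(\alpha x)\cos x$ terms drop out, leaving $g'(x)=(1-\alpha^2)\sin(\alpha x)\sin x$. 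On $(0,\pi/(2\alpha))$ both sines are positive and $1-\alpha^2<0$ because $\alpha>1$, so $g'<0$; combined with $g(0)=0$ this forces $g<0$ on the interval, hence $h_\alpha'<0$ and $h_\alpha$ is strictly decreasing.

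Finally, since $h_\alpha$ is even and strictly decreasing on $(0,\pi/(2\alpha))$, its minimum over $[-\pi/(2\alpha),\pi/(2\alpha)]$ is attained at the endpoints, where $h_\alpha(\pi/(2\alpha))=\frac{1}{\alpha\sin(\pi/(2\alpha))}$. To reach the bound $\ge 2/\pi$ I would substitute $t=\pi/(2\alpha)\in(0,\pi/2)$, rewriting this endpoint value as $\frac{2}{\pi}\cdot\frac{t}{\sin t}$; the elementary inequality $\sin t\le t$ for $t\ge 0$ then gives $\frac{t}{\sin t}\ge 1$, which is exactly the claimed lower bound. None of these steps is a genuine obstacle, as the lemma is elementary calculus; the only place requiring a little care is the derivative cancellation that yields $g'(x)=(1-\alpha^2)\sin(\alpha x)\sin x$, and once that is in hand the monotonicity and the final $\sin t\le t$ estimate are routine.
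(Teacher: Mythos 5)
Your proof is correct and follows essentially the same route as the paper's: compute $h'$, show its numerator is negative on $(0,\pi/(2\alpha))$ by one further differentiation, and bound the endpoint value $h_\alpha(\pi/(2\alpha))=1/(\alpha\sin(\pi/(2\alpha)))$ via $\sin t\le t$. The only cosmetic difference is that you differentiate the numerator $g(x)=\alpha\cos(\alpha x)\sin x-\sin(\alpha x)\cos x$ directly to obtain $g'(x)=(1-\alpha^2)\sin(\alpha x)\sin x$, whereas the paper factors the numerator as $\cos(\alpha x)\cos(x)\,(\alpha\tan x-\tan(\alpha x))$ and differentiates the tangent difference; your variant is marginally cleaner since it avoids tracking the sign of the cosine factors.
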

\begin{proof}
	Given any $\alpha>1$, note that $\lim_{x\to0}h(x)=1$. Taking the derivative of $h$, we have\\
\begin{equation}
	h'(x)=\frac{\alpha\cos(\alpha x)\sin(x)-\sin(\alpha x)\cos(x)}{\alpha^2\sin^2(x)}=\frac{\cos(\alpha x)\cos(x)}{\alpha^2\sin^2(x)}\bigg(\alpha\tan(x)-\tan(\alpha x)\bigg).
\end{equation}
	The first factor on the right hand side is even and positive in $(0,\pi/(2\alpha))$, while the second one $\alpha\tan(x)-\tan(\alpha x)$ is odd and decreasing in $(0,\pi/(2\alpha))$ by taking yet another derivative. Thus, the derivative of $h$ is odd and negative in $(0,\pi/(2\alpha)]$. That is, on $I_\alpha=[-\pi/(2\alpha),\pi/(2\alpha)]$, $h$ achieves global maximum at $x=0$ and minimum at $x=\pi/(2\alpha)$. At the minimum point,
\[
	h_\alpha\bigg(\frac{\pi}{2\alpha}\bigg)=\frac{1}{\alpha\sin(\pi/(2\alpha))}\geq \frac{2}{\pi}
\]
	by noting that $\sin(z)\leq z$ for any $z>0$.
\end{proof}

\begin{proposition}
\label{F1_est}
	If $\{n_j\}_{j=1}^k$ are concentrated in $[-\eta/2,\eta/2]$ in $\ZZ/m\ZZ$, then
\[
	\|\bar{C}^{-1}\|_2\leq \frac{\pi}{2}.
\]
\end{proposition}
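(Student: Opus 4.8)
The plan is to exploit the fact that $\bar{C}$ is diagonal, so that controlling $\|\bar{C}^{-1}\|_2$ reduces to a uniform lower bound on the moduli of its diagonal entries. First I would recall from Lemma \ref{twist_har} that the $j$-th diagonal entry of $\bar{C}$ is
\[
	\bar{C}_{j,j}=\frac{\sin(\rho n_j\pi/m)}{\rho\sin(n_j\pi/m)}e^{\pi\imath(\rho-1)n_j/m}
\]
for $n_j\neq0$, and $\bar{C}_{j,j}=1$ for $n_j=0$. Since a diagonal matrix has operator $2$-norm equal to the largest modulus among its entries, its inverse $\bar{C}^{-1}$ (also diagonal) satisfies $\|\bar{C}^{-1}\|_2=\max_j|\bar{C}_{j,j}|^{-1}$, so it suffices to establish $|\bar{C}_{j,j}|\geq 2/\pi$ for every $j$.

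Next I would discard the unimodular phase factor $e^{\pi\imath(\rho-1)n_j/m}$, which does not affect the modulus, leaving $|\bar{C}_{j,j}|=|h_\rho(n_j\pi/m)|$ in the notation of Lemma \ref{h_est} with $\alpha=\rho$. The case $\rho=1$ is trivial since then $\bar{C}=I$, so I assume $\rho\geq2$ to meet the hypothesis $\alpha>1$. The concentration hypothesis $\{n_j\}\subset[-\eta/2,\eta/2]$ then translates into $n_j\pi/m\in[-\eta\pi/(2m),\eta\pi/(2m)]\subseteq[-\pi/(2\rho),\pi/(2\rho)]=I_\rho$, where the inclusion uses $\eta\leq m/\rho$; this places every argument precisely in the interval on which Lemma \ref{h_est} describes the behaviour of $h_\rho$.

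Finally I would invoke Lemma \ref{h_est}: on $I_\rho$ the even function $h_\rho$ attains its global minimum at the endpoint $\pi/(2\rho)$ with value at least $2/\pi$. Hence $|h_\rho(n_j\pi/m)|\geq 2/\pi$ for all $j$ with $n_j\neq0$, and the bound holds trivially (as $1\geq 2/\pi$) when $n_j=0$. Combining with the reduction of the first step gives $\|\bar{C}^{-1}\|_2=\max_j|\bar{C}_{j,j}|^{-1}\leq\pi/2$. There is no genuine obstacle here: the entire analytic difficulty was front-loaded into Lemma \ref{h_est}, and the only point demanding care is checking that the concentration hypothesis maps exactly onto the domain $I_\rho$ on which that lemma's minimum estimate applies.
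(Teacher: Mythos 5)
Your proposal is correct and follows essentially the same route as the paper: reduce $\|\bar{C}^{-1}\|_2$ to the reciprocal of the minimum modulus of the diagonal entries and invoke Lemma \ref{h_est} with $\alpha=\rho$. You are in fact slightly more careful than the paper, which applies the lemma with ``$\alpha=\rho\geq1$'' despite its hypothesis $\alpha>1$; your explicit separation of the trivial case $\rho=1$ and your verification that $\eta\pi/(2m)\leq\pi/(2\rho)$ close that small gap.
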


\begin{proof}
	By \eqref{C_form}, we see that
\[
	|\bar{C}_{l,l}|=\bigg|\frac{1}{\rho}\frac{\sin(\rho n_l\pi/m)}{\sin(n_l\pi/m)}\bigg|
\]
	with the convention that $\sin(\rho\cdot 0)/(\rho\sin(0))=1$. Thus, 
\[
\|\bar{C}^{-1}\|_2=\max_{1\leq l\leq k}\big(|\bar{C}(l)|^{-1}\big)=\bigg(\min_{1\leq l\leq k}\{\bigg|\frac{\sin(\rho n_l\pi/m)}{\rho\sin(n_l\pi/m)}\bigg|\}\bigg)^{-1}.
\]
	Using the result from Lemma \ref{h_est} with $\alpha=\rho\geq1$, we see that $\|\bar{C}^{-1}\|_2\leq \frac{\pi}{2}$.
\end{proof}


\subsection{Effect of $S_\rho$ on the Difference Structure $\Delta$}
	
	Here, we describe the effect of $D_\rho S_\rho\Delta$ in Proposition \ref{difference_structure}, which is directly connected to the proof of Theorem \ref{main}.

\begin{lemma}
	$ S_\rho=\frac{1}{\rho}\bar{\Delta}_\rho\Delta^{-1}$ where 
\begin{equation}
		(\bar{\Delta}_\rho)_{l,j}=\left\{\begin{array}{lcl}
		\delta([j-l])-\delta([\rho+j-l])&\text{if}& j\neq m,\\
		\delta([j-l])&\text{if}&j=m,
		\end{array}\right.
\end{equation}
	and $\delta:\ZZ/m\ZZ\to\{0,1\}$ is the Kronecker delta.
\end{lemma}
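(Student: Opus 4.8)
The plan is to reduce the claimed factorization to a single, more transparent identity and then verify it entrywise. Since $\Delta$ is lower bidiagonal with $1$'s on the diagonal and $-1$'s on the subdiagonal, it is invertible and its inverse is the discrete summation (lower-triangular all-ones) matrix, $(\Delta^{-1})_{s,j}=1$ for $s\ge j$ and $0$ otherwise. Multiplying the asserted identity on the right by $\Delta$ therefore shows that $S_\rho=\frac1\rho\bar{\Delta}_\rho\Delta^{-1}$ is equivalent to $\rho S_\rho\Delta=\bar{\Delta}_\rho$, which is the real content: the integration operator $S_\rho$ composed with the difference $\Delta$ collapses to a single lag-$\rho$ difference. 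I would prove the statement in the form literally claimed, by computing the product $\bar{\Delta}_\rho\Delta^{-1}$ directly, while keeping the equivalent identity $\rho S_\rho\Delta=\bar{\Delta}_\rho$ in mind as a cross-check.

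The key observation is that each row of $\bar{\Delta}_\rho$ is extremely sparse: by definition row $l$ carries a $+1$ in column $s=l$ (from $\delta([s-l])$) and a $-1$ in the single column $s\equiv l-\rho\pmod m$ (from $-\delta([\rho+s-l])$), with the sole exception that this $-1$ is suppressed when that column equals $m$. Because $(\Delta^{-1})_{s,j}$ is the indicator of $s\ge j$, the entry $(\bar{\Delta}_\rho\Delta^{-1})_{l,j}=\sum_{s\ge j}(\bar{\Delta}_\rho)_{l,s}$ is exactly the right partial sum of row $l$, and it evaluates to $\mathbf{1}\{l\ge j\}-\mathbf{1}\{\,s_-\ge j,\ s_-\ne m\,\}$, where $s_-\in\{1,\dots,m\}$ denotes the residue of $l-\rho$.

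It then remains to match this against $\rho(S_\rho)_{l,j}$, reading off $S_\rho=S_\rho^+-S_\rho^-$ from \eqref{S_definition}, and I would organize the comparison by the position of $l$ relative to $\rho$. For $l>\rho$ one has $s_-=l-\rho$ and the partial sum is $\mathbf{1}\{l-\rho+1\le j\le l\}$, matching the block of $1/\rho$'s in $S_\rho^+$; for $l\le\rho-1$ the residue wraps to $s_-=m-\rho+l$ and the partial sum becomes $-\mathbf{1}\{l+1\le j\le m-\rho+l\}$, matching $-S_\rho^-$; and the borderline row $l=\rho$ is handled by the same computation once the exception is invoked. I expect this last, borderline case to be the only delicate point: when $l=\rho$ the would-be $-1$ lands in column $m$, and it is precisely the $j=m$ clause in the definition of $\bar{\Delta}_\rho$ that removes it. This is exactly the discrete-boundary correction that distinguishes $S_\rho$ from its circulant cousin $\tilde{S}_\rho$ (cf.\ Remark \ref{tilde_S}), so the main care in the proof lies in bookkeeping the cyclic index convention and checking that the $j=m$ column behaves consistently across all three cases, rather than in any substantive difficulty.
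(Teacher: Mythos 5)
Your proposal is correct and follows essentially the same route as the paper: both compute $(\bar{\Delta}_\rho\Delta^{-1})_{l,j}$ as a right partial sum of the sparse row $l$ of $\bar{\Delta}_\rho$ (using that $\Delta^{-1}$ is the lower-triangular summation matrix), split the cyclic delta into its unwrapped and wrapped contributions, and match against $S_\rho^+-S_\rho^-$ by the case analysis $l\leq\rho-1$, $l=\rho$, $l\geq\rho+1$. Your identification of the $j=m$ clause as the boundary correction that kills the would-be $-1$ in row $l=\rho$ is exactly the delicate point the paper's case split handles.
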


\begin{proof}
	Let $\bar{\delta}:\ZZ\to\{0,1\}$ be the Kronecker delta on $\ZZ$. Then,
	
\begin{equation}
\begin{split}
	(\bar{\Delta}_\rho\Delta^{-1})_{l,j}&=\sum_{j\leq t\leq m-1}(\delta([t-l])-\delta([\rho+t-l]))+\delta([j-m])\\
						&=\sum_{j\leq t\leq m}\delta([t-l])-\sum_{j\leq t\leq m-1}(\bar{\delta}(\rho+t-l)+\bar{\delta}(\rho+t-l-m)).
\end{split}
\end{equation}
	By definition, 
\begin{equation}
	\left\{\begin{array}{llcl}
	\sum_{j\leq t\leq m}\delta([t-l])&=1&\text{if}& j\leq l\\
	\sum_{j\leq t\leq m-1}\bar\delta(\rho+t-l)&=1&\text{if}& l\geq \rho+1,\, j\leq l-\rho\\
	\sum_{j\leq t\leq m-1}\bar\delta(\rho+t-l-m)&=1&\text{if}& l\leq \rho-1,\, j\leq m-\rho+l.
	\end{array}\right.
\end{equation}
	Thus, splitting into the cases $l\leq\rho-1$, $l=\rho$, and $l\geq\rho+1$, we see that
\begin{equation}
	\frac{1}{\rho}\bar{\Delta}_\rho\Delta^{-1}=S_\rho,
\end{equation}
	as claimed.
\end{proof}

\begin{proposition}\label{difference_structure}
	Given any $n\in\NN$, let $\Delta^{(n)}\in\NN^{n\times n}$ denote the $n$-dimensional backward difference matrix. For $\rho\vert m$, one has
\[
	D_\rho S_\rho\Delta^{(m)}=\frac{1}{\rho}\Delta^{(m/\rho)}D_\rho.
\]
\end{proposition}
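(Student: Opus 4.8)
The plan is to leverage the factorization $S_\rho=\frac{1}{\rho}\bar{\Delta}_\rho\Delta^{-1}$ established in the preceding lemma, together with the fact that the matrix being multiplied is exactly $\Delta=\Delta^{(m)}$. Since $\Delta^{-1}\Delta^{(m)}=I_m$, the composition collapses immediately:
\[
S_\rho\Delta^{(m)}=\frac{1}{\rho}\bar{\Delta}_\rho\Delta^{-1}\Delta^{(m)}=\frac{1}{\rho}\bar{\Delta}_\rho.
\]
Applying $D_\rho$ on the left then reduces the entire proposition to the purely combinatorial identity
\[
D_\rho\bar{\Delta}_\rho=\Delta^{(m/\rho)}D_\rho,
\]
which no longer involves $\Delta^{-1}$ and can be checked entry by entry.

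To verify this identity I would compute both sides entrywise, writing $\eta=m/\rho$. The operator $D_\rho$ selects rows $\rho,2\rho,\dots,\eta\rho$, so $(D_\rho\bar{\Delta}_\rho)_{l,j}=(\bar{\Delta}_\rho)_{\rho l,j}$. From the definition of $\bar{\Delta}_\rho$, for $j\neq m$ this row carries a $+1$ at the column where $[j-\rho l]=0$, namely $j=\rho l$, and a $-1$ at the column where $[\rho+j-\rho l]=0$, namely $j=\rho(l-1)$. On the other side, using $(\Delta^{(\eta)})_{l,a}=\delta_{a,l}-\delta_{a,l-1}$ (ordinary Kronecker delta) together with $(D_\rho)_{a,j}=\delta_{j,\rho a}$, one obtains $(\Delta^{(\eta)}D_\rho)_{l,j}=\delta_{j,\rho l}-\delta_{j,\rho(l-1)}$. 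Hence in the generic interior range $2\leq l\leq\eta-1$ the two matrices agree at once.

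The step I expect to require the most care is the bookkeeping at the two boundary rows, where the cyclic convention on $\bar{\Delta}_\rho$ and its special $j=m$ clause interact. For $l=1$ the $-1$ term would sit at column $\rho(l-1)=0\equiv m$; but the definition of $\bar{\Delta}_\rho$ suppresses any subtraction in the $j=m$ column, so this wrap-around $-1$ vanishes, leaving only the diagonal $+1$ at $j=\rho$, which matches $\Delta^{(\eta)}$ having no sub-diagonal entry in its first row. Dually, for $l=\eta$ the $+1$ lands at column $\rho\eta=m$, exactly the column governed by the special clause $\delta([j-\rho\eta])=\delta(0)=1$, while the $-1$ sits at $m-\rho$; this reproduces the last row $\delta_{j,m}-\delta_{j,m-\rho}$ of $\Delta^{(\eta)}D_\rho$. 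Confirming that these boundary adjustments align exactly, rather than leaving spurious wrap-around entries, is the crux of the argument, and it is precisely why the non-circulant correction distinguishing $S_\rho$ from $\tilde{S}_\rho$ was placed on the first $\rho-1$ rows. Once these cases are checked, $D_\rho\bar{\Delta}_\rho=\Delta^{(\eta)}D_\rho$ holds, and multiplying by $\frac{1}{\rho}$ yields the stated identity.
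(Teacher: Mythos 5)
Your proposal is correct and follows essentially the same route as the paper: both collapse $S_\rho\Delta^{(m)}=\frac{1}{\rho}\bar{\Delta}_\rho\Delta^{-1}\Delta^{(m)}=\frac{1}{\rho}\bar{\Delta}_\rho$ via the preceding lemma and then verify $D_\rho\bar{\Delta}_\rho=\Delta^{(m/\rho)}D_\rho$ entrywise, treating the $j=m$ clause of $\bar{\Delta}_\rho$ separately. Your explicit check of the boundary rows $l=1$ and $l=\eta$ only makes more visible what the paper's cyclic Kronecker-delta computation handles implicitly.
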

\begin{proof}

	If $\rho\mid m$,
	
\[
	D_\rho S_\rho\Delta^{(m)}=\frac{1}{\rho}D_\rho\bar{\Delta}_\rho.
\]
	Now, note that, for $s\neq m$,
\[
\begin{split}
	(D_\rho\bar{\Delta}_\rho)_{l,s}&=(\bar{\Delta}_\rho)_{l\rho,s}\\
			&=\delta(s-l\rho)-\delta(s+\rho-l\rho)=\delta(s-l\rho)-\delta(s-(l-1)\rho)\\
			&=(\Delta D_\rho)_{l,s}.
\end{split}
\]
	For $s=m$, $(D_\rho\bar{\Delta}_\rho)_{l,m}=\delta(m-l\rho)=(\Delta D_\rho)_{l,m}$.
\end{proof}


\subsection{Proof of Theorem \ref{main}}\label{proof_har}
	Before proving Theorem \ref{main}, we shall need two more lemmas:

\begin{lemma}\label{frame_var_har}
	For any $E=E^{n,k}$ with $n\geq k$, suppose $\{n_j\}_j$ are concentrated between $[-k/2,k/2]$, then $E^\ast$ has frame variation $\sigma(E^\ast)\leq\frac{2\pi(k+1)}{\sqrt{3}}$.
\end{lemma}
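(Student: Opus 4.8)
The plan is to compute $\sigma(E^\ast)$ essentially in closed form by exploiting the modulation structure of the harmonic frame, and then reduce the bound to an elementary estimate on a sum of squares. Write $E^\ast\in\CC^{k\times n}$ with columns $(E^\ast)_1,\dots,(E^\ast)_n$, so that the $l$-th column is $(E^\ast)_l=\tfrac{1}{\sqrt k}\big(e^{2\pi\imath n_j l/n}\big)_{j=1}^k$. The structural observation I would build on is that passing from column $l$ to column $l+1$ multiplies the $j$-th coordinate by the fixed phase $e^{2\pi\imath n_j/n}$, so that
\[
\big((E^\ast)_l-(E^\ast)_{l+1}\big)_j=\tfrac{1}{\sqrt k}\,e^{2\pi\imath n_j l/n}\big(1-e^{2\pi\imath n_j/n}\big),
\]
whose modulus $\tfrac{2}{\sqrt k}|\sin(\pi n_j/n)|$ is independent of $l$. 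Consequently every consecutive column difference has the same Euclidean norm, and summing the $n-1$ equal terms gives the closed form
\[
\sigma(E^\ast)=(n-1)\,\frac{2}{\sqrt k}\Big(\sum_{j=1}^k\sin^2(\pi n_j/n)\Big)^{1/2}.
\]

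From here the rest is a routine estimate. First I would apply $|\sin\theta|\le|\theta|$ to get $\sin^2(\pi n_j/n)\le \pi^2 n_j^2/n^2$ and discard the harmless factor $(n-1)/n\le 1$, which reduces the right-hand side to $\tfrac{2\pi}{\sqrt k}\big(\sum_j n_j^2\big)^{1/2}$. Then, using only that the frequencies satisfy $|n_j|\le k/2$, I would bound $\sum_{j=1}^k n_j^2\le k\,(k/2)^2=k^3/4\le \tfrac{k(k+1)^2}{3}$ (the last inequality being $3k^2\le 4(k+1)^2$). Substituting yields
\[
\sigma(E^\ast)\le \frac{2\pi}{\sqrt k}\sqrt{\frac{k(k+1)^2}{3}}=\frac{2\pi(k+1)}{\sqrt3},
\]
which is exactly the asserted bound.

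There is no genuinely hard step: the only content is the explicit phase computation establishing that all consecutive differences are equal in norm, after which the estimate is pure bookkeeping. Two points are worth flagging in the write-up. First, the localization $|n_j|\le k/2$ is all that the argument uses—distinctness of the $n_j$ is not needed for this particular bound—so the lemma applies to any $E^{n,k}$ with frequencies in $[-k/2,k/2]$. Second, the same chain in fact delivers the sharper bound $\sigma(E^\ast)\le \pi k$ (since $(n-1)/n\le1$ and $\sum_j n_j^2\le k^3/4$), and one recovers the stated $\tfrac{2\pi(k+1)}{\sqrt3}$ merely by the slack $\pi k\le \tfrac{2\pi(k+1)}{\sqrt3}$; I would keep the $(k+1)$ form to match its later use in Theorem~\ref{main}.
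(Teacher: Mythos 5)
Your proof is correct and follows essentially the same route as the paper's: exploit the modulation structure so that all consecutive column differences of $E^\ast$ have equal norm, apply $|1-e^{i\theta}|=2|\sin(\theta/2)|\le|\theta|$, and reduce to bounding $\sum_j n_j^2$. The only (immaterial) difference is at the last step, where you use the crude bound $\sum_j n_j^2\le k(k/2)^2$ from $|n_j|\le k/2$ alone, whereas the paper sums the squares of the distinct integers in $[-k/2,k/2]$ exactly; both yield the stated constant.
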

\begin{proof}
\[
\begin{split}
	\sigma(E^\ast)&=\frac{1}{\sqrt{k}}\sum_{l=1}^{n-1}(\sum_{j=1}^k|e^{-2\pi\imath ln_j/n}-e^{-2\pi\imath (l+1)n_j/n}|^2)^{1/2}\\
				&=\frac{1}{\sqrt{k}}\sum_{l=1}^{n-1}(\sum_{j=1}^k |1-e^{2\pi\imath n_j/n}|)^{1/2}\\
				&\leq\frac{1}{\sqrt{k}}\sum_{l=1}^{n-1}(\sum_{j=1}^k (2\pi n_j/n)^2)^{1/2}\\
				&\leq\frac{1}{\sqrt{k}}2\pi\frac{n-1}{n}(\sum_{j=-k/2}^{k/2}n_j^2)^{1/2}\\
				&\leq \frac{1}{\sqrt{k}}2\pi\sqrt{2\cdot\frac{k/2(k/2+1)(k+1)}{3}}\\
				&\leq \frac{2\pi(k+1)}{\sqrt{3}}.
\end{split}
\]
\end{proof}

\begin{lemma}[\cite{JB_AP_OY_2006}, Theorem III.7]\label{JB_AP_even}
	Given a stable $\Sigma\Delta$ quantization scheme with a mid-rise uniform quantizer of gap $\delta$, if the frame $T=\{e_j\}_{j=1}^m$ satisfies the zero sum condition
\[
	\sum_{j=1}^m e_j=0,
\]
	then the auxiliary variable $u_m$ has
\[
	|u_m|=\left\{\begin{array}{lcl}
		0,&\text{if}& m \text{ even},\\
		\delta/2, &\text{if}& m \text{ odd}.
	\end{array}\right.
\]
\end{lemma}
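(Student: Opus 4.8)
The plan is to reduce the statement to a telescoping identity together with an elementary parity argument, after which the stability hypothesis finishes the job.

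First I would telescope the defining recursion. Subtracting consecutive auxiliary variables in \eqref{recur} gives $u_n-u_{n-1}=y_n-q_n$, and summing over $n=1,\dots,m$ with $u_0=0$ yields
\[
	u_m=\sum_{n=1}^m y_n-\sum_{n=1}^m q_n.
\]
Here the zero sum condition does all the work: since $y_n=\langle x,e_n\rangle$, linearity of the inner product gives $\sum_{n=1}^m y_n=\langle x,\sum_{n=1}^m e_n\rangle=\langle x,0\rangle=0$, so that $u_m=-\sum_{n=1}^m q_n$. Thus $u_m$ is determined entirely by the quantized outputs, independently of the particular signal $x$.

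Next I would exploit the arithmetic of the mid-rise alphabet. Every $q_n\in\mathscr{A}_0$ has the form $q_n=(2j_n+1)\delta/2$ for some integer $j_n$, i.e.\ an odd multiple of $\delta/2$. Hence
\[
	u_m=-\frac{\delta}{2}\Big(2\sum_{n=1}^m j_n+m\Big),
\]
and the integer $2\sum_n j_n+m$ has the same parity as $m$. Consequently $u_m$ is an integer multiple of $\delta$ when $m$ is even, and an odd multiple of $\delta/2$ when $m$ is odd; in either case it is pinned to the corresponding rescaled lattice.

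Finally I would invoke stability. Because $u_n$ is exactly the round-off error $(u_{n-1}+y_n)-Q_0(u_{n-1}+y_n)$ of the mid-rise quantizer, stability (no saturation) forces $|u_m|\le\delta/2$. Intersecting this bound with the lattice constraint from the previous step settles the value: the only multiple of $\delta$ of modulus at most $\delta/2$ is $0$, giving $|u_m|=0$ for $m$ even; and the only odd multiples of $\delta/2$ of modulus at most $\delta/2$ are $\pm\delta/2$, giving $|u_m|=\delta/2$ for $m$ odd, which is the claimed dichotomy. The argument is intrinsically scalar (real) $\Sigma\Delta$, matching the stated constant; for a complex alphabet $\mathscr{A}=\mathscr{A}_0+\imath\mathscr{A}_0$ one would run it on real and imaginary parts separately. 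The computation itself is routine, and the only point deserving care is the sharp stability estimate $|u_m|\le\delta/2$: it is precisely this half-gap bound, set against the half-integer lattice structure inherited from the alphabet, that forces the ball-versus-lattice intersection to be a single point (or antipodal pair), so I would make sure that bound is invoked with the exact constant rather than an $O(\delta)$ one.
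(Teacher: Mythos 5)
Your proof is correct. Note that the paper does not actually prove this lemma---it is imported verbatim as Theorem III.7 of the cited reference \cite{JB_AP_OY_2006}---and your argument (telescoping the recursion to $u_m=-\sum_n q_n$ via the zero sum condition, observing that $\sum_n q_n$ lies in $\tfrac{\delta}{2}(2\ZZ+m)$ because each $q_n$ is an odd multiple of $\delta/2$, and intersecting with the greedy-scheme bound $|u_m|\le\delta/2$) is essentially the proof given there. Your closing caveat is the right one: the dichotomy genuinely needs the sharp no-overflow estimate $|u_m|\le\delta/2$ for the greedy quantizer, not merely the boundedness of $\|u\|_\infty$ that this paper uses as its working definition of stability.
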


	Now we are ready to give the proof of Theorem \ref{main}.\par
\begin{proof} of Theorem \ref{main}:\par
	Adopting the notations above, we see that the reconstruction error is
\begin{equation}
\label{error_est}
\begin{split}
	\|x-\tilde{F}(D_\rho S_\rho q)\|_2&=\|\tilde{F}(D_\rho S_\rho)(y-q)\|_2\\
			&=\|\tilde{F}D_\rho S_\rho\Delta^{(m)} u\|_2\\
			&=\|\frac{1}{\rho}\bar{C}^{-1}\bar{F} \Delta^{(\eta)}D_\rho u\|_2\leq\frac{1}{\rho}\|\bar{C}^{-1}\|_2\|\bar{F} \Delta^{(\eta)}D_\rho u\|_2,
\end{split}
\end{equation}
	where the second equality comes from \eqref{sd_quant}, and the third equality follows from Proposition \ref{difference_structure} along with the fact that $\tilde{F}=\bar{C}^{-1}\bar{F}$ with $\bar{F}$ being the canonical dual frame to $ D_\rho E$.
	Suppose $\bar{F}=(\bar{F}_1, \cdots,\bar{F}_\eta)$, then\\
\begin{equation}
\label{F2_part}
	\bar{F}\Delta^{(\eta)}D_\rho u=\sum_{s=1}^{\eta-1}u_{s\rho}(\bar{F}_s-\bar{F}_{s+1})+u_{\eta\rho}\bar{F}_\eta.
\end{equation}
	By Proposition \ref{F1_est} and \eqref{F2_part},
\begin{equation}
	\frac{1}{\rho}\|\bar{C}^{-1}\|_2\|\bar{F} \Delta^{(\eta)}u^{(\eta)}\|_2\leq\frac{\pi}{2\rho}(\sigma(\bar{F})+\|\bar{F}_\eta\|_2)\|u\|_\infty.
\end{equation}
	For the case $\rho\mid m$, we note that $E^{m/\rho,k}$ is a tight frame with frame bound $\frac{m}{k\rho}$. In particular, $(E^{m/\rho,k})^\ast E^{m/\rho,k}=\frac{m}{k\rho}I_k$. Thus, by Lemma \ref{frame_var_har} ,
\[
	\sigma(\bar{F})\leq  \frac{k}{m/\rho}\frac{2\pi(k+1)}{\sqrt{3}}.
\]
	Thus, we have obtained the following error bound
\begin{equation}
\label{odd_est}
	\mathscr{E}_\rho=\|x-\tilde{F}(D_\rho S_\rho q)\|_2\leq\frac{k}{m/\rho}\frac{\pi}{2\rho}(\frac{2\pi(k+1)}{\sqrt{3}}+1)\|u\|_\infty=\frac{\pi}{2}(\frac{2\pi(k+1)}{\sqrt{3}}+1)\|u\|_\infty\frac{k}{m}.
\end{equation}
	Furthermore, by Lemma \ref{JB_AP_even}, if $m,k$ are even, $n_j$'s are all nonzero, and $\rho\mid m$, then $u_{\eta\rho}=u_m=0$. With that there is a better estimate

\begin{equation}
\label{even_est}
	\mathscr{E}_\rho=\|x-\tilde{F}(D_\rho S_\rho q)\|_2\leq\frac{k}{m/\rho}\frac{\pi}{2\rho}\frac{2\pi(k+1)}{\sqrt{3}}\|u\|_\infty=\frac{\pi^2(k+1)}{\sqrt{3}}\frac{k}{m}\|u\|_\infty.
\end{equation}
	Letting $F=\tilde{F}D_\rho S_\rho$, Theorem \ref{main} (b) is now proven.\par

	For Theorem \ref{main} (c), note that for mid-rise uniform quantizers $\mathscr{A}=\mathscr{A}_0+\imath\mathscr{A}_0$ with length $2L$, each entry $q_j$ of $q\in\CC^m$ is of the form 
\[
q_j=\big((2s_j+1)+\imath(2t_j+1)\big)\frac{\delta}{2},\quad-L\leq s_j,\, t_j\leq L-1.
\] 
	Then, each entry in $D_\rho S_\rho q$ is the average of $\rho$ entries in $q$ which has the form 
\[
	(D_\rho S_\rho q)_j=\big((2\tilde{s}_j+\rho)+\imath(2\tilde{t}_j+\rho)\big)\frac{\delta}{2\rho},\quad-L\rho\leq \tilde{s}_j,\, \tilde{t}_j\leq (L-1)\rho.
\]	
	There are at most $((2L-1)\rho+1)^2\leq (2L\rho)^2$ choices per entry with $\eta=m/\rho$ entries in total. Thus, the vector $D_\rho S_\rho q$ can be encoded by $\mathscr{R}=2\eta\log(2L\rho)$ bits. Noting that $\frac{1}{m}\leq \frac{1}{\eta}\cdot\frac{1}{\rho}$ and
\[
	e^{-\frac{1}{2\eta}\mathscr{R}}=\frac{1}{2L\rho},
\]
	for any estimate we have 
\[
\mathscr{E}\leq C\frac{1}{m}\leq C\frac{1}{\eta}\frac{1}{\rho}= C\frac{2L}{\eta}e^{-\frac{1}{2\eta}\mathscr{R}},
\]	
	for some $C>0$. Substituting the suitable constant for each case, we have
\begin{equation}
	\mathscr{E}(\mathscr{R})\leq C_{F,L}\|u\|_\infty 2^{-\frac{1}{2\eta}\mathscr{R}},
\end{equation}
	where $C_{F,L}\leq\pi L(\sigma(\bar{F})+\|\bar{F}_\eta\|_2)$.
	If $\rho\mid m$, then by \eqref{odd_est}, \eqref{even_est}, 
\begin{equation}
	\mathscr{E}(\mathscr{R})\leq C_{k,L}\|u\|_\infty 2^{-\frac{1}{2\eta}\mathscr{R}},
\end{equation}
	where $C_{k,L}\leq\frac{\pi kL}{\eta}(\frac{2\pi(k+1)}{\sqrt{3}}+1)$, independent of $\rho$.

\end{proof}


\section{Generalization: Decimation on Unitarily Generated Frames}\label{gen_uni}

	Upon examining the proof of Theorem \ref{main}, one can see the following interaction between decimation and the existing sampling scheme:
\begin{itemize}
\item	Commutativity: $D_\rho S_\rho E^{m,k}=E^{m/\rho,k}\bar{C}_{m,\rho}$.
\item	Scalability: $D_\rho S_\rho\Delta^{(m)} u=\frac{1}{\rho}\Delta^{(\eta)}D_\rho u$.	
\end{itemize}
	Fixing the $\Sigma\Delta$ quantization scheme for now, any family of frames satisfying the commutativity condition shall be compatible with decimation, yielding exponential error decay with respect to the bit usage. One example is the unitarily generated frames.\par

	The collection of such elements $T_u=\{\phi_j^{(m)}\}_{j=1}^m$ is the frame of interest.\par
\begin{lemma}\label{twisting_uni}
	For the same $ D_\rho$ and $ S_\rho$ along with the analysis operator $\Phi\in\CC^{m\times k}$ of $T_u$ generated by $(\Omega, \{\lambda_j\}_{j=1}^k, \phi_0)$,
\begin{equation}
\label{uni_comm}
	D_\rho S_\rho\Phi_{m,k}=\begin{pmatrix}
	(\phi_\rho^{(m)})^\ast\\
	(\phi_{2\rho}^{(m)})^\ast\\
	\vdots\\
	(\phi_{\eta\rho}^{(m)})^\ast
	\end{pmatrix}\bar{C}_{m,\rho},
\end{equation}
	where $\eta=\lfloor m/\rho\rfloor$ and $\bar{C}_{m,\rho}=\frac{1}{\rho}\sum_{s=1}^\rho U_{(s-\rho)/m}^\ast$ has eigenvalues $\{e^{\pi\imath(\rho-1)\lambda_j/m}\frac{\sin(\rho\lambda_j\pi/m)}{\rho\sin(\lambda\pi/m)}\}_j$. In particular, if $\rho\mid m$, then
\[
	D_\rho S_\rho\Phi_{m,k}=\Phi_{m/\rho,k}\bar{C}_{m,\rho}.
\]
\end{lemma}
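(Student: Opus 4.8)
The plan is to mirror the computation of Lemma~\ref{twist_har}, replacing the explicit harmonic entries $\frac{1}{\sqrt k}e^{-2\pi\imath n_j l/m}$ by the operator $U_{l/m}^\ast = e^{-2\pi\imath\Omega l/m}$ acting on $\phi_0$. The one structural fact driving everything is that $\bar C_{m,\rho}=\frac1\rho\sum_{s=1}^\rho U_{(s-\rho)/m}^\ast$, being a function of the Hermitian matrix $\Omega$, commutes with every $U_t$; consequently the averaging operator $S_\rho$ acts on the rows of $\Phi$ as right multiplication by the fixed matrix $\bar C_{m,\rho}$, which is precisely the commutativity property isolated at the start of Section~\ref{gen_uni}.

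First I would compute the rows of $S_\rho\Phi$ indexed by $\rho\le l\le m$. On these rows $S_\rho^-$ contributes nothing, so by the definition \eqref{S_definition} of $S_\rho^+$ the $l$-th row equals $\frac{1}{\rho}\sum_{s=l-\rho+1}^{l}(\phi_s^{(m)})^\ast=\frac{1}{\rho}\phi_0^\ast\sum_{s=l-\rho+1}^{l}U_{s/m}^\ast$. Writing $s=l-\rho+t$ with $1\le t\le\rho$ and using $U_{s/m}^\ast=U_{l/m}^\ast\,U_{(t-\rho)/m}^\ast$ (legitimate since all factors are powers of $e^{-2\pi\imath\Omega/m}$), this factors as $(\phi_l^{(m)})^\ast\bigl(\frac1\rho\sum_{t=1}^{\rho}U_{(t-\rho)/m}^\ast\bigr)=(\phi_l^{(m)})^\ast\bar C_{m,\rho}$. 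This is the exact analogue of \eqref{lower_part} and identifies $\bar C_{m,\rho}$ as in the statement. Next I would note that $D_\rho$ selects only the rows $l=\rho,2\rho,\dots,\eta\rho$, all of which satisfy $l\ge\rho$, so the boundary rows $l\le\rho-1$ (which carry a cyclic correction, exactly like the term $K$ in Lemma~\ref{twist_har}) are never picked out; hence the displayed identity holds on every selected row, and stacking them yields \eqref{uni_comm}.

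For the eigenvalues, since $\bar C_{m,\rho}$ is a function of $\Omega$ it is diagonalized by the same orthonormal eigenbasis $\{v_j\}$, with eigenvalue $\frac1\rho\sum_{s=1}^\rho e^{-2\pi\imath\lambda_j(s-\rho)/m}$ on $v_j$. Summing this finite geometric series is the very computation that produced $(S_\rho^+E)_{\rho,j}$ in Lemma~\ref{twist_har}, now carrying the extra phase $e^{2\pi\imath\rho\lambda_j/m}$ coming from the shift $s\mapsto s-\rho$; the phases combine to $e^{\pi\imath(\rho-1)\lambda_j/m}$ and give eigenvalue $e^{\pi\imath(\rho-1)\lambda_j/m}\sin(\rho\lambda_j\pi/m)/(\rho\sin(\lambda_j\pi/m))$, as claimed. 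Finally, when $\rho\mid m$ with $\eta=m/\rho$, the selected vectors satisfy $\phi_{\rho l}^{(m)}=U_{\rho l/m}\phi_0=U_{l/\eta}\phi_0=\phi_l^{(\eta)}$, so the stacked matrix in \eqref{uni_comm} is exactly the analysis operator $\Phi_{m/\rho,k}$, giving $D_\rho S_\rho\Phi_{m,k}=\Phi_{m/\rho,k}\bar C_{m,\rho}$.

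The only delicate point is bookkeeping rather than a genuine obstacle: one must confirm that the cyclic boundary correction on the top $\rho-1$ rows of $S_\rho$ is annihilated by $D_\rho$. This is the precise analogue of showing $D_\rho K=0$ in Lemma~\ref{twist_har}, and here it is immediate since no selected index falls below $\rho$, which is exactly why the averaging can be treated as a clean convolution factoring through $\bar C_{m,\rho}$.
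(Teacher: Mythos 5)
Your proposal is correct and follows essentially the same route as the paper: the key identity is that each averaged row factors as $(\phi_l^{(m)})^\ast\bar C_{m,\rho}$ because the $U_t$ form a one-parameter group, the eigenvalues come from diagonalizing $\bar C_{m,\rho}$ in the eigenbasis of $\Omega$ and summing the geometric series, and the boundary rows are discarded by $D_\rho$. The only cosmetic difference is that the paper handles the boundary by writing $S_\rho=\tilde S_\rho - L$ with $D_\rho L=0$, whereas you compute $S_\rho\Phi$ directly on the rows $l\ge\rho$ and observe that $D_\rho$ selects only such rows; these are the same observation phrased two ways.
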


\begin{proof}
	First, note that $S_\rho=\tilde{S}_\rho+L$, where $L$ has value $1/\rho$ on the first $\rho-1$ rows and $0$ otherwise, and $D_\rho L=0$. Moreover, for any $1\leq t\leq m$,
	
\begin{equation}
\label{S_on_Phi}
\begin{split}
	(\tilde{S}_\rho\Phi_{m,k})_t&=(\frac{1}{\rho}\sum_{t-\rho+1}^t U_{s/m}\phi_0)^\ast\\
					&=(\frac{1}{\rho}\sum_{t+1}^{t+\rho}U_{(s-\rho)/m}\phi_0)^\ast\\
					&=\phi_t^\ast\cdot\frac{1}{\rho}\sum_{s=1}^\rho U_{(s-\rho)/m}^\ast=(\Phi_{m,k})_t\cdot\frac{1}{\rho}\sum_{s=1}^\rho U_{(s-\rho)/m}^\ast.
\end{split}
\end{equation}
	Thus, $D_\rho S_\rho\Phi_{m,k}=D_\rho\Phi_{m,k}\bar{C}_{m,\rho}+D_\rho L\Phi_{m,k}=D_\rho\Phi_{m,k}\bar{C}_{m,\rho}$.

	Note that we can diagonalize $U_t=BT_tB^\ast$ where $B$ is a unitary matrix and $T_t$ is a diagonal matrix with entries $\{e^{2\pi\imath\lambda_j t}\}_{j=1}^k$. Then, $B^\ast\bar{C}_{m,\rho}B=\frac{1}{\rho}\sum_{s=1}^\rho(B^\ast U_{(s-\rho)/m} B )^\ast$ is a diagonal matrix, with entries
\begin{equation}
\label{C_est_uni}
\begin{split}
	(B\ast \bar{C}_{m,\rho}B)_{j,j}&=\frac{1}{\rho}\sum_{s=1}^\rho\exp(-2\pi\imath\frac{s-\rho}{m}\lambda_j)\\
			&=\frac{1}{\rho}e^{2\pi\imath\frac{\lambda_j(\rho-1)}{m}}\frac{e^{-2\pi\imath\rho\lambda_j/m}-1}{e^{-2\pi\imath\lambda_j/m}-1}=e^{\pi\imath(\rho-1)\lambda_j/m}\frac{\sin\rho\lambda_j\pi/m}{\rho\sin(\lambda_j\pi/m)}.
\end{split}
\end{equation} 
\end{proof}
	Now, we can find the conditions under which $D_\rho S_\rho\Phi_{m,k}$ has full rank:
\begin{proposition}\label{frame_bound_uni}
	Let $\{v_s\}_{s=1}^k$ be a set of orthonormal eigenvectors of $\Omega$ with eigenvalues $\{\lambda_s\}_{s=1}^{k}$. Suppose
\begin{itemize} 
\item	$\rho\mid m$, 
\item	$\{\rho(\lambda_s-\lambda_t)\}_{s\neq t}$ are nonzero integers modulo $m$, and 
\item	the base vector $\phi_0=\sum_{s}c_sv_s$ satisfies $c_s\neq0$ for all $s$,
\end{itemize} 
	then $\Phi_{m/\rho,k}$ is a frame with frame bounds
\[
\bigg(\frac{m}{\rho}\min_s|c_s|^2\bigg)\|x\|_2^2\leq \sum_{s=1}^{m/\rho}|{<}x,\phi_{s\rho}^{(m)}{>}|^2\leq\bigg(\frac{m}{\rho}\max_s|c_s|^2\bigg)\|x\|_2^2.
\]

	In particular, the frame operator $\Scal_{m/\rho}=\Phi^\ast \Phi$ satisfies $\|\Scal^{-1}_{m/\rho}\|_2\leq\frac{1}{\eta\min|c_i|^2}$.
\end{proposition}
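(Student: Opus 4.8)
The plan is to compute the frame operator $\Scal_{m/\rho}=\Phi_{m/\rho,k}^\ast\Phi_{m/\rho,k}$ explicitly by passing to the eigenbasis $\{v_j\}_{j=1}^k$ of $\Omega$, in which everything diagonalizes simultaneously. Since $U_t=e^{2\pi\imath\Omega t}$ shares its eigenvectors with $\Omega$, I would first write $x=\sum_j a_jv_j$ and recall $\phi_0=\sum_s c_sv_s$, so that, using $\rho/m=1/\eta$, the frame vectors selected by $D_\rho$ take the form $\phi_{s\rho}^{(m)}=U_{s\rho/m}\phi_0=\sum_j c_je^{2\pi\imath\lambda_js/\eta}v_j$. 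Hence $\la x,\phi_{s\rho}^{(m)}\ra=\sum_j a_j\bar c_je^{-2\pi\imath\lambda_js/\eta}$.

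The core computation is then to square this and sum over $s=1,\dots,\eta$. Interchanging the order of summation reduces $\sum_{s=1}^\eta|\la x,\phi_{s\rho}^{(m)}\ra|^2$ to $\sum_{j,l}a_j\bar a_l\bar c_jc_l\sum_{s=1}^\eta e^{-2\pi\imath(\lambda_j-\lambda_l)s/\eta}$, in which the inner factor is a geometric sum of $\eta$-th roots of unity. Since all $\lambda_j$ are integers, $\lambda_j-\lambda_l$ is an integer, so this sum equals $\eta$ when $\eta\mid(\lambda_j-\lambda_l)$ and vanishes otherwise.

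The key step, and the only place the hypotheses genuinely enter, is translating the modular condition: because $\rho\mid m$ and $m=\rho\eta$, the assumption that $\{\rho(\lambda_s-\lambda_t)\}_{s\neq t}$ are nonzero modulo $m$ is equivalent (after dividing by $\rho$) to $\lambda_s-\lambda_t\not\equiv0\pmod\eta$ for $s\neq t$. This forces every off-diagonal term ($j\neq l$) to vanish, leaving only the diagonal terms, each weighted by $\eta$. The whole expression therefore collapses to $\eta\sum_j|c_j|^2|a_j|^2$.

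Finally, I would read off the result. Using $\|x\|_2^2=\sum_j|a_j|^2$ and $\min_j|c_j|^2\le|c_j|^2\le\max_j|c_j|^2$ gives the stated frame bounds with $\eta=m/\rho$, establishing that $\Phi_{m/\rho,k}$ is a frame. The same computation shows $\Scal_{m/\rho}$ is diagonal in $\{v_j\}$ with entries $\eta|c_j|^2$, so $\|\Scal_{m/\rho}^{-1}\|_2=\max_j(\eta|c_j|^2)^{-1}=(\eta\min_j|c_j|^2)^{-1}$; recalling $c_j=\la\phi_0,v_j\ra$, this is exactly the claimed bound and matches the constant $C_{\phi_0}$ of Theorem~\ref{main_uni}. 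No step presents a real obstacle; the only point requiring care is the modular-arithmetic translation ensuring the off-diagonal cancellation.
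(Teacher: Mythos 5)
Your proposal is correct and follows essentially the same route as the paper: expand $x$ and $\phi_0$ in the eigenbasis of $\Omega$, interchange the sums, and kill the off-diagonal terms via orthogonality of the roots of unity, with the hypothesis $\rho(\lambda_s-\lambda_t)\not\equiv 0\pmod m$ (equivalently $\lambda_s-\lambda_t\not\equiv 0\pmod\eta$) guaranteeing the geometric sums vanish. The only cosmetic difference is that you make the reduction modulo $\eta$ and the diagonalization of $\Scal_{m/\rho}$ explicit, whereas the paper works directly modulo $m$ and states the norm bound on $\Scal_{m/\rho}^{-1}$ without comment.
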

	\begin{proof}
	Suppose the assumptions above are true, then given an arbitrary $x\in\CC^k$,
\[
\begin{split}
\sum_{s=1}^{m/\rho}|{<}x,\phi_{s\rho}^{(m)}{>}|^2&=\sum_{s=1}^{m/\rho}|\sum_{t=1}^k{<}x,v_t{>}{<}v_t,\phi_{s\rho}^{(m)}{>}|^2\\
		&=\sum_{s=1}^{m/\rho}|\sum_{t=1}^k{<}x,v_t{>}{<}U_{-s\rho/m}v_t,\phi_0{>}|^2\\
		&=\sum_{s=1}^{m/\rho}|\sum_{t=1}^ke^{-2\pi\imath s\rho\lambda_t/m}{<}x,v_t{>}{<}v_s,\phi_0{>}|^2\\
		&=\sum_{j,l=1}^k{<}x,v_j{>}\overline{{<}x,v_l{>}}{<}v_j,\rho{>}\overline{{<}v_l,\phi_0{>}}\sum_{s=1}^{m/\rho}e^{-2\pi\imath s\rho(\lambda_j-\lambda_l)/m}\\
		&=\frac{m}{\rho}\sum_{j=1}^k|{<}x,v_j{>}|^2|{<}v_j,\phi_0{>}|^2,
\end{split}
\]
	where the second equality follows from the fact that $U_t$ is unitary, the fourth by expanding the sums, and the last one from the following equality 
\[
\sum_{s=1}^{m/\rho}\exp(-2\pi\imath s\rho(\lambda_j-\lambda_l)/m)=\left\{\begin{array}{lcl}
	\frac{m}{\rho}&\text{if}&j= l\\
	0&\text{if}&j\neq l.
	\end{array}\right.
\]
	Finally, we have
\[
\bigg(\frac{m}{\rho}\min_s|c_s|^2\bigg)\|x\|_2^2\leq \frac{m}{\rho}\sum_{j=1}^k|{<}x,v_j{>}|^2|{<}v_j,\phi_0{>}|^2 \leq\bigg(\frac{m}{\rho}\max_s|c_s|^2\bigg)\|x\|_2^2.
\]

\end{proof}
	Moreover, with the same proof in Proposition \ref{F1_est}, we have the estimate on $\|\bar{C}_{m,\rho}^{-1}\|_2$:
\begin{proposition}
\label{C-1_est_uni}
	If the eigenvalues $\{\lambda_j\}_{j=1}^k$ of the generator $\Omega$ are concentrated between $[-m/(2\rho),m/(2\rho)]$, then
\[
	\|\bar{C}_{m,\rho}^{-1}\|_2=\|B^\ast \bar{C}_{m,\rho}^{-1} B\|_2=\max_{1\leq j\leq k}\Bigg\{\bigg|\frac{\sin(\rho\lambda_j \pi/m)}{\rho\sin(\lambda_j\pi/m)}\bigg|^{-1}\Bigg\}\leq \frac{\pi}{2}.
\]
\end{proposition}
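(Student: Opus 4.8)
The plan is to mirror the proof of Proposition \ref{F1_est} verbatim, the only new ingredient being that $\bar{C}_{m,\rho}$ is no longer diagonal in the standard basis but is unitarily diagonalized by the matrix $B$ whose columns are the orthonormal eigenvectors $\{v_j\}$ of $\Omega$ (equivalently, the $B$ diagonalizing every $U_t$ in the proof of Lemma \ref{twisting_uni}). First I would record that, since $B$ is unitary, conjugation by $B$ is an isometry in the operator $2$-norm; hence $\|\bar{C}_{m,\rho}^{-1}\|_2=\|B^\ast\bar{C}_{m,\rho}^{-1}B\|_2$, and the latter matrix is diagonal. Its diagonal entries are the reciprocals of the eigenvalues listed in Lemma \ref{twisting_uni}, so the norm equals the largest of these reciprocals in modulus, i.e. $\big(\min_{1\leq j\leq k}|\mu_j|\big)^{-1}$ with $\mu_j=e^{\pi\imath(\rho-1)\lambda_j/m}\frac{\sin(\rho\lambda_j\pi/m)}{\rho\sin(\lambda_j\pi/m)}$.

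Second, I would note that the phase factor $e^{\pi\imath(\rho-1)\lambda_j/m}$ is unimodular and therefore drops out of $|\mu_j|$, leaving $|\mu_j|=|h_\rho(\lambda_j\pi/m)|$ in the notation of Lemma \ref{h_est} applied with $\alpha=\rho$. Thus $\|\bar{C}_{m,\rho}^{-1}\|_2=\big(\min_{j}|h_\rho(\lambda_j\pi/m)|\big)^{-1}$, which is exactly the quantity displayed in the statement. Finally I would verify that the concentration hypothesis $\lambda_j\in[-m/(2\rho),m/(2\rho)]$ places each argument $x_j:=\lambda_j\pi/m$ in the interval $[-\pi/(2\rho),\pi/(2\rho)]$ on which Lemma \ref{h_est} guarantees $h_\rho\geq 2/\pi$. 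Combining these gives $\min_j|h_\rho(x_j)|\geq 2/\pi$ and hence $\|\bar{C}_{m,\rho}^{-1}\|_2\leq\pi/2$.

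There is essentially no hard step here; once the unitary diagonalization is in hand the argument is a transcription of Proposition \ref{F1_est}. The only points demanding a little care are (i) confirming that $B$ is genuinely unitary, so that the operator norm passes unchanged through the conjugation and the norm of the resulting diagonal matrix is the maximal modulus of its entries; (ii) matching the concentration interval $[-m/(2\rho),m/(2\rho)]$ to the interval $I_\rho=[-\pi/(2\rho),\pi/(2\rho)]$ of Lemma \ref{h_est} after the rescaling $x_j=\lambda_j\pi/m$; and (iii) the edge cases, namely the convention $\sin(\rho\cdot 0)/(\rho\sin 0)=1$ for any vanishing $\lambda_j$ and the trivial case $\rho=1$, where $h_1\equiv 1$ and the bound is immediate.
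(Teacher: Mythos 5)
Your argument is correct and is exactly the paper's: the paper simply states that the estimate follows ``with the same proof'' as Proposition \ref{F1_est}, namely diagonalizing $\bar{C}_{m,\rho}$ by the unitary $B$, discarding the unimodular phase, and applying Lemma \ref{h_est} with $\alpha=\rho$ on $[-\pi/(2\rho),\pi/(2\rho)]$. Your write-up just makes those steps explicit.
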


	Also, we need to consider the frame variation of $\Phi_{m/\rho,k}^\ast$.

\begin{lemma}\label{uni_var}
	$\sigma(\Phi_{m/\rho}^\ast)\leq 2\pi\max_{1\leq j\leq k}|\lambda_j|$.
\end{lemma}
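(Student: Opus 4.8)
The plan is to exploit the group structure of the unitarily generated frame so that every consecutive column difference of $\Phi_{m/\rho}^\ast$ has the \emph{same} norm, collapsing the frame variation to a single quantity that I then estimate by diagonalizing in the eigenbasis of $\Omega$. First I would identify the columns: by Lemma \ref{twisting_uni} (in the case $\rho\mid m$), the columns of $\Phi_{m/\rho}^\ast$ are exactly the decimated frame elements $\phi_{s\rho}^{(m)}=U_{s\rho/m}\phi_0$ for $s=1,\dots,\eta$, where $\eta=m/\rho$. Hence, directly from the definition of frame variation,
\[
	\sigma(\Phi_{m/\rho}^\ast)=\sum_{s=1}^{\eta-1}\|\phi_{s\rho}^{(m)}-\phi_{(s+1)\rho}^{(m)}\|_2 .
\]

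The key observation is that $\phi_{s\rho}^{(m)}-\phi_{(s+1)\rho}^{(m)}=U_{s\rho/m}(I-U_{\rho/m})\phi_0$, using $U_aU_b=U_{a+b}$; since $U_{s\rho/m}$ is unitary, each summand equals $\|(I-U_{\rho/m})\phi_0\|_2$, independently of $s$. Thus $\sigma(\Phi_{m/\rho}^\ast)=(\eta-1)\|(I-U_{\rho/m})\phi_0\|_2$. To bound the remaining factor I would write $\phi_0=\sum_j c_j v_j$ in the orthonormal eigenbasis $\{v_j\}$ of $\Omega$, so that $U_{\rho/m}v_j=e^{2\pi\imath\lambda_j\rho/m}v_j$ and $\sum_j|c_j|^2=\|\phi_0\|_2^2=1$. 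Orthonormality and the identity $|1-e^{\imath\theta}|^2=4\sin^2(\theta/2)$ then give
\[
	\|(I-U_{\rho/m})\phi_0\|_2^2=\sum_{j=1}^k|c_j|^2\,|1-e^{2\pi\imath\lambda_j\rho/m}|^2=\sum_{j=1}^k|c_j|^2\cdot 4\sin^2(\pi\lambda_j\rho/m).
\]
Applying $|\sin x|\le|x|$ together with $\sum_j|c_j|^2=1$ yields $\|(I-U_{\rho/m})\phi_0\|_2\le 2\pi\max_{j}|\lambda_j|\,\rho/m$.

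Finally I would combine the two estimates. Since $\rho\mid m$ gives $(\eta-1)\rho/m=1-\rho/m\le 1$,
\[
	\sigma(\Phi_{m/\rho}^\ast)=(\eta-1)\|(I-U_{\rho/m})\phi_0\|_2\le 2\pi\max_{j}|\lambda_j|\,(1-\rho/m)\le 2\pi\max_{1\le j\le k}|\lambda_j|,
\]
which is the claimed bound. There is no genuinely hard step here; the only points to watch are that the unitary invariance really does make every difference norm equal (so the telescoping sum reduces cleanly to an $(\eta-1)$ factor rather than requiring a per-$s$ estimate), and that the product $(\eta-1)\rho/m$ is controlled precisely because the hypothesis $\rho\mid m$ forces $\eta=m/\rho$, so that the $\rho$- and $m$-dependence cancels and the bound is uniform in $\rho$.
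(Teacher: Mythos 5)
Your proposal is correct and follows essentially the same route as the paper: factor each consecutive difference as $U_{s\rho/m}(I-U_{\rho/m})\phi_0$, use unitary invariance to reduce to a single norm, expand in the eigenbasis of $\Omega$, bound $|1-e^{2\pi\imath\lambda_j\rho/m}|\le 2\pi|\lambda_j|\rho/m$, and observe that the $(\eta-1)$ terms times $\rho/m$ is at most $1$. The only cosmetic difference is that you collapse the sum to an explicit $(\eta-1)$ factor before estimating, while the paper carries the sum through and bounds it at the end.
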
	

\begin{proof}
	Following the same process of Lemma \ref{frame_var_har}, we see that
\begin{equation}
\label{variation_uni}
\begin{split}
	\sigma(\Phi_{m/\rho}^\ast)&=\sum_{s=1}^{m/\rho-1}\|(U_{s\rho/m}-U_{(s+1)\rho/m})\phi_0\|_2\\
			&=\sum_{s=1}^{m/\rho-1}\|U_{s\rho/m}(1-U_{\rho/m})\phi_0\|_2\\
			&=\sum_{s=1}^{m/\rho-1}\|(1-U_{\rho/m})\phi_0\|_2\\
			&=\sum_{s=1}^{m/\rho-1}\|\sum_{j=1}^k c_j[1-e^{2\pi\imath \lambda_j\rho/m}]v_j\|_2\\
			&=\sum_{s=1}^{m/\rho-1}(\sum_{j=1}^k|c_j|^2|e^{2\pi\imath\lambda_j\rho/m}-1|^{2})^{1/2}\\
			&\leq\sum_{s=1}^{m/\rho-1}\bigg(\sum_{j=1}^k|c_j|^2\cdot\big(2\pi|\lambda_j|\frac{\rho}{m}\big)^{2}\bigg)^{1/2}\\
			&\leq\sum_{s=1}^{m/\rho-1}\bigg(\max_{1\leq j\leq k}2\pi|\lambda_j|\frac{\rho}{m}\bigg)\cdot\|\phi_0\|_2\leq2\pi\max_{1\leq j\leq k}|\lambda_j|\cdot\|\phi_0\|_2.
\end{split}
\end{equation}

\end{proof}

	Now we are ready to prove Theorem \ref{main_uni}.\par
\begin{proof} of Theorem \ref{main_uni}.\par
	First of all, that $D_\rho S_\rho\Phi_{m,k}=\Phi_{m/\rho,k}\bar{C}_{m,\rho}$ has full rank follows from Proposition \ref{frame_bound_uni} and \ref{C-1_est_uni}. For notational clarity, we shall denote $\Phi_{m/\rho,k}=\Phi_{m/\rho}$. \par
	Let $\Scal_{m/\rho}=\Phi_{m/\rho}^\ast\Phi_{m/\rho}$ be the corresponding frame operator, then $\|\Scal^{-1}_{m/\rho}\|_2\leq\rho/(mC_{\phi_0})$ where $C_{\phi_0}:=\min_s|c_s|^2$. Also, note that, by Proposition \ref{difference_structure},
\[
	\Scal_{m/\rho}^{-1}\Phi_{m/\rho}^\ast(D_\rho S_\rho)\Delta^{(m)}u=\frac{1}{\rho}\Scal_{m/\rho}^{-1}(\Phi_{m/\rho}^\ast\Delta^{(m/\rho)})D_\rho u.
\]
	Then, the reconstruction error $\|x-\bar{C}_{m,\rho}^{-1}\Scal_{m/\rho}^{-1}\Phi_{m/\rho}^\ast(D_\rho S_\rho)q\|_2$ is
\begin{equation}
\label{recon_error_uni}
\begin{split}
	\|x-\bar{C}_{m,\rho}^{-1}\Scal_{m/\rho}^{-1}\Phi_{m/\rho}^\ast(D_\rho S_\rho)q\|_2&=\|\bar{C}_{m,\rho}^{-1}\Scal_{m/\rho}^{-1}\Phi_{m/\rho}^\ast(D_\rho S_\rho)\Delta^{(m)}u\|_2\\
	&\leq\frac{1}{\rho}\|\bar{C}_{m,\rho}^{-1}\|_2\|\Scal^{-1}_{m/\rho}\|_2(\sigma(\Phi_{m/\rho}^\ast)+\|\phi_m^{(m)}\|_2)\|D_\rho u\|_\infty\\
			&\leq\frac{\pi}{2\rho}\frac{\rho}{mC_{\phi_0}}(\sigma(\Phi_{m/\rho}^\ast)+\|\phi_m^{(m)}\|_2)\|u\|_\infty,
\end{split}
\end{equation}
	where $\|\bar{C}_{m,\rho}^{-1}\|_2\leq\pi/2$ by Proposition \ref{C-1_est_uni}.\par

	Combining \eqref{recon_error_uni}, Lemma \ref{uni_var}, and the fact that $\|\phi_m^{(m)}\|_2=\|U_{1}\phi_0\|_2=\|\phi_0\|_2=1$, the reconstruction error $\mathscr{E}_{m,\rho}$ can be bounded by
\begin{equation}
\begin{split}
	\mathscr{E}_{m,\rho}&\leq \frac{\pi}{2\rho}\frac{\rho}{mC_{\phi_0}}(\sigma(\Phi_{m/\rho}^\ast)+1)\|u\|_\infty\\
		&\leq\frac{\pi}{2mC_{\phi_0}}(2\pi\max_{1\leq j\leq k}|\lambda_j|+1)\|u\|_\infty\\
		&=\frac{\pi}{2\eta C_{\phi_0}}(2\pi\max_{1\leq j\leq k}|\lambda_j|+1)\|u\|_\infty\frac{1}{\rho}.
\end{split}
\end{equation}
	Theorem \ref{main_uni} (c) follows verbatim from the proof in Theorem \ref{main}.
\end{proof}


\section{The Multiplicative Structure of Decimation Schemes}\label{mult_discuss}
	In this section, we demonstrate the multiplicative structure of alternative decimation. In particular, given $m,\rho,\rho_1,\rho_2\in\NN$ fixed with $\rho=\rho_1\rho_2$ and $\rho\mid m$, consider the following operators:
\[
\left.\begin{array}{ll}
	 D_\rho\in\NN^{(m/\rho)\times m},& S_\rho\in\RR^{m\times m},\\
	D_{\rho_1}\in\mathbb{N}^{(m/\rho_1) \times m}, &S_{\rho_1}\in\mathbb{R}^{m\times m},\\
	D_{\rho_2}\in\mathbb{N}^{( m/\rho)\times ( m/\rho_1)}, &S_{\rho_2}\in\mathbb{R}^{( m/\rho_1)\times (m/\rho_1)}.
\end{array}\right.
\]

	We shall show that $D_\rho S_\rho=D_{\rho_2}S_{\rho_2}D_{\rho_1}S_{\rho_1}$.

\begin{proof} of Theorem \ref{mult}:\par
	The $(m,\rho)$-decimation operator is $D_\rho S_\rho$ while the successive iterations of $(m,\rho_1)$ and $(m/\rho_1,\rho_2)$-decimation combine to be $D_{\rho_2}S_{\rho_2}D_{\rho_1}S_{\rho_1}$.
	
	Note that $D_{\rho_2}D_{\rho_1}=D_\rho$. Then, by Proposition \ref{difference_structure},
\[
\begin{split}
	D_{\rho_2}S_{\rho_2}D_{\rho_1}S_{\rho_1}&=(D_{\rho_2}\bar{\Delta}_{\rho_2})(\Delta^{(m/\rho_1)})^{-1}(D_{\rho_1}\bar{\Delta}_{\rho_1})(\Delta^{(m)})^{-1}\\
			&=\Delta^{(m/\rho_1\rho_2)}D_{\rho_2}(\Delta^{(m/\rho_1)})^{-1}\Delta^{(m/\rho_1)}D_{\rho_1}(\Delta^{(m)})^{-1}\\
			&=\Delta^{(m/\rho)}D_{\rho_2}D_{\rho_1}(\Delta^{(m)})^{-1}\\
			&=D_\rho\bar{\Delta}_\rho(\Delta^{(m)})^{-1}=D_\rho S_\rho,
\end{split}
\]
	which concludes our proof.
	\end{proof}

	The multiplicative property implies the possibility to conduct decimation with multiple steps, gradually down-sizing the dimension $m$. It can be particularly useful for parallel computation and transmission of data through multiple devices with scarce storage resources. In particular, for each stage, it suffices to choose $\rho_j$ to be a small number dividing $m$. It reduces the waiting time between each transmission, and the amplification of quantized sample $q$ will not be large after each stage.\par
	Moreover, although the case where $\rho\nmid m$ does not produce this structure for frames, it is now possible to first reduce $m$ to a number closer to $k$. Only at the last stage do we choose $\rho$ that does not divide $m$. This yields the same result as direct division $m/k$ by the remark above while possibly gaining sharper estimate on the error.

\section{Extension to Second Order Decimation}\label{high_order_generalize}
	So far, we have only defined decimation for the first order $\Sigma\Delta$ quantization, while its counterpart for bandlimited functions, introduced in Section \ref{intro}, applies for arbitrary orders. Due to the boundary effect in finite dimensional spaces, it is harder to extend decimation to arbitrary orders. However, there is no issue generalizing this concept to the second order, as stated in Theorem \ref{thm:high_order}. To prove the theorem, we shall need the following lemmas:

\begin{lemma}[Effect of $D_\rho S_\rho^2$ on the Finite Frame]\label{twisting_uni_high}
	If none of the eigenvalues of $U_{1/m}$ are $1$, then
\begin{equation}
\label{uni_comm_high}
	S_\rho\Phi_{m,k}=\Phi_{m,k}\bar{C}_{m,\rho}.
\end{equation}
	where $\bar{C}_{m,\rho}=\frac{1}{\rho}\sum_{s=1}^\rho U_{(s-\rho)/m}^\ast$ has eigenvalues $\{e^{\pi\imath(\rho-1)\lambda_j/m}\frac{\sin(\rho\lambda_j\pi/m)}{\rho\sin(\lambda\pi/m)}\}_j$. In particular, for any $r\in\NN$,
\[
	D_\rho S_\rho^r\Phi_{m,k}=\Phi_{m/\rho,k}\bar{C}_{m,\rho}^r.
\]
\end{lemma}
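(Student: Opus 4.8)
The plan is to prove the factorization $S_\rho \Phi_{m,k} = \Phi_{m,k} \bar{C}_{m,\rho}$ first, and then bootstrap to the iterated statement $D_\rho S_\rho^r \Phi_{m,k} = \Phi_{m/\rho,k} \bar{C}_{m,\rho}^r$ by combining it with the earlier commutativity lemma. The essential observation is that Lemma \ref{twisting_uni} already establishes $\tilde{S}_\rho \Phi_{m,k} = \Phi_{m,k} \bar{C}_{m,\rho}$ via the computation in \eqref{S_on_Phi}, where each row of $\tilde{S}_\rho \Phi_{m,k}$ equals the corresponding row of $\Phi_{m,k}$ multiplied on the right by $\bar{C}_{m,\rho} = \frac{1}{\rho}\sum_{s=1}^\rho U_{(s-\rho)/m}^\ast$. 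The only gap between that and the claim \eqref{uni_comm_high} for the \emph{full} (unsampled) operator $S_\rho$ is the boundary correction $L$, which has value $1/\rho$ on each entry of its first $\rho-1$ rows.

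First I would show that $L \Phi_{m,k} = 0$ under the hypothesis that no eigenvalue of $U_{1/m}$ equals $1$, i.e.\ that $\lambda_j \not\equiv 0 \pmod m$ for all $j$. Since each row of $L$ is the constant vector $\frac{1}{\rho}(1,\dots,1)$, the $l$-th row of $L\Phi_{m,k}$ is $\frac{1}{\rho}\sum_{t=1}^m (\phi_t^{(m)})^\ast = \frac{1}{\rho}\sum_{t=1}^m \phi_0^\ast U_{t/m}^\ast$. Diagonalizing $U_{t/m} = B T_{t/m} B^\ast$ as in the proof of Lemma \ref{twisting_uni}, the relevant geometric sums are $\sum_{t=1}^m e^{-2\pi\imath \lambda_j t/m}$, which vanish precisely when $\lambda_j \not\equiv 0 \pmod m$ and otherwise equal $m$. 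The nonzero-eigenvalue hypothesis thus forces every such sum to vanish, giving $L\Phi_{m,k} = 0$. Combining with $S_\rho = \tilde{S}_\rho + L$ yields $S_\rho \Phi_{m,k} = \tilde{S}_\rho \Phi_{m,k} = \Phi_{m,k}\bar{C}_{m,\rho}$, which is \eqref{uni_comm_high}.

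Once \eqref{uni_comm_high} holds for the full operator, the iterated identity follows by a clean induction. Applying the factorization $r$ times gives $S_\rho^r \Phi_{m,k} = \Phi_{m,k} \bar{C}_{m,\rho}^r$, since each application pushes one copy of $\bar{C}_{m,\rho}$ to the right past the frame matrix while $\bar{C}_{m,\rho}$ itself commutes as a fixed $k\times k$ matrix. Finally I would left-multiply by the subsampling operator $D_\rho$ and invoke the $\rho\mid m$ reduction already recorded in Lemma \ref{twisting_uni}, namely $D_\rho \Phi_{m,k} = \Phi_{m/\rho,k}$, to conclude $D_\rho S_\rho^r \Phi_{m,k} = D_\rho \Phi_{m,k}\bar{C}_{m,\rho}^r = \Phi_{m/\rho,k}\bar{C}_{m,\rho}^r$. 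The eigenvalue formula for $\bar{C}_{m,\rho}$ is quoted verbatim from \eqref{C_est_uni}.

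The main obstacle — and the reason for the nonzero-eigenvalue hypothesis — is exactly the vanishing of $L\Phi_{m,k}$. For first order decimation this step was unnecessary because $D_\rho L = 0$ killed the correction term after subsampling, so one could factor $D_\rho S_\rho$ without ever needing $S_\rho$ itself to factor. For the second order operator $D_\rho S_\rho^2$ the inner $S_\rho$ is applied \emph{before} subsampling, so its boundary term $L$ is no longer annihilated by $D_\rho$ and instead propagates through the second integration; this is precisely the phenomenon flagged in Remark \ref{tilde_S} that $D_\rho \tilde{S}_\rho^2 \neq D_\rho S_\rho^2$. The cleanest resolution is therefore to arrange that $S_\rho$ acts on the frame with no boundary defect at all, which is guaranteed once $L\Phi_{m,k}=0$ — and this is where the zero-sum-type cancellation of the frame vectors, forced by the nonzero eigenvalue condition, does the real work.
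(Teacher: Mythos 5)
Your proposal is correct and follows essentially the same route as the paper's own proof: establish $L\Phi_{m,k}=0$ from the vanishing of the geometric sums $\sum_t e^{-2\pi\imath\lambda_j t/m}$ under the nonzero-eigenvalue hypothesis (the paper phrases this as $\mathbbm{1}^\ast\Phi_{m,k}=0$), conclude $S_\rho\Phi_{m,k}=\tilde{S}_\rho\Phi_{m,k}=\Phi_{m,k}\bar{C}_{m,\rho}$ via the computation in \eqref{S_on_Phi}, and finish by induction on $r$ followed by subsampling with $D_\rho$. Your closing discussion of why the boundary term $L$ must be killed before, rather than after, the inner integration matches the paper's remark exactly.
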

\begin{remark}
The proof is very similar to the one of Lemma \ref{twisting_uni}. However, since we are now dealing with $D_\rho S_\rho^r$, we are no longer able to use the fact that $D_\rho L=0$. Instead, we impose the condition that $U_{1/m}$ has no eigenvalue equal to $1$. 
\end{remark}

\begin{proof}
	First, note that if $\mathbbm{1}\in\CC^m$ is the constant vector with value $1$, then
\[
	\mathbbm{1}^\ast\Phi_{m,k}=(\sum_{s=0}^{m-1}U_{s/m}\phi_0)^\ast=\phi_0^\ast B(\sum_{s=0}^{m-1}T_{s/m})^\ast B^\ast=0.
\]
	Given $1\leq t\leq m$, note that $S_\rho=\tilde{S}_\rho-L$, where $L$ has value $1/\rho$ on the first $\rho-1$ rows and $0$ otherwise, and $L\Phi_{m,k}=0$. Then, by \eqref{S_on_Phi}, $S_\rho\Phi_{m,k}=\tilde{S}_\rho\Phi_{m,k}=\Phi_{m,k}\bar{C}_{m,\rho}$. Using induction on $r$, $S_\rho^r\Phi_{m,k}=\Phi_{m,k}\bar{C}_{m,\rho}^r$, and $D_\rho S_\rho^r \Phi_{m,k}=\Phi_{m/\rho,k} \bar{C}_{m,\rho}^r$. The properties of $\bar{C}_{m,\rho}$ follow from Lemma \ref{twisting_uni}.
\end{proof}

\begin{lemma}\label{high_commute}
	For any $r,m,\rho\in\NN$,
\[
	D_\rho\bar{\Delta}_\rho^r=(\Delta^{(m/\rho)})^r D_\rho.
\]
\end{lemma}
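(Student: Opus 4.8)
The plan is to reduce the lemma to a single \emph{intertwining relation} between the subsampling operator $D_\rho$ and the two difference operators, and then to iterate it by induction on $r$. The key identity is
\[
	D_\rho \bar{\Delta}_\rho = \Delta^{(m/\rho)} D_\rho,
\]
which is precisely what is established entrywise in the course of proving Proposition \ref{difference_structure}: for $s \neq m$ one has $(D_\rho \bar{\Delta}_\rho)_{l,s} = (\bar{\Delta}_\rho)_{l\rho,s} = \delta(s - l\rho) - \delta(s - (l-1)\rho) = (\Delta^{(m/\rho)} D_\rho)_{l,s}$, and the boundary column $s = m$ is checked separately. I would therefore take this relation as the base case $r = 1$ and cite the computation already carried out there.

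For the inductive step, assume $D_\rho \bar{\Delta}_\rho^{r-1} = (\Delta^{(m/\rho)})^{r-1} D_\rho$. Writing $\bar{\Delta}_\rho^r = \bar{\Delta}_\rho \cdot \bar{\Delta}_\rho^{r-1}$ and using associativity together with the base case,
\[
	D_\rho \bar{\Delta}_\rho^r = (D_\rho \bar{\Delta}_\rho)\, \bar{\Delta}_\rho^{r-1} = \Delta^{(m/\rho)} \big(D_\rho \bar{\Delta}_\rho^{r-1}\big) = \Delta^{(m/\rho)} (\Delta^{(m/\rho)})^{r-1} D_\rho = (\Delta^{(m/\rho)})^r D_\rho.
\]
The structural feature that makes the induction close is that in the base identity $D_\rho$ reappears \emph{on the right}, so each application peels off one factor of $\bar{\Delta}_\rho$, replaces it by one factor of $\Delta^{(m/\rho)}$, and pushes $D_\rho$ past it. Consequently the cyclic power $\bar{\Delta}_\rho^r$ never has to be expanded directly.

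The only genuinely delicate point is the base case, and that delicacy is already absorbed by Proposition \ref{difference_structure}. The subtlety there is that $\bar{\Delta}_\rho$ is defined cyclically on $\ZZ/m\ZZ$, carrying the wrap-around term $\delta([\rho + j - l])$, whereas $\Delta^{(m/\rho)}$ is an ordinary non-cyclic lower-bidiagonal matrix; one must verify that after left-multiplication by $D_\rho$, i.e.\ after retaining only the rows indexed by multiples of $\rho$, both the wrap-around contribution and the last column $s = m$ align exactly with the entries of $\Delta^{(m/\rho)} D_\rho$. Since that verification is exactly the entrywise computation performed for Proposition \ref{difference_structure}, I would simply invoke it, after which the present lemma follows from the two-line induction above. (The hypothesis $\rho \mid m$ is used only to ensure $m/\rho$ is an integer so that $\Delta^{(m/\rho)}$ is well defined.)
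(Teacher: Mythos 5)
Your proof is correct and takes essentially the same route as the paper: the paper also obtains the base case $D_\rho\bar{\Delta}_\rho=\Delta^{(m/\rho)}D_\rho$ from Proposition \ref{difference_structure} (via $D_\rho\bar{\Delta}_\rho=D_\rho(\bar{\Delta}_\rho(\Delta^{(m)})^{-1})\Delta^{(m)}=\Delta^{(m/\rho)}D_\rho$) and then closes the argument by the same induction on $r$, peeling off one factor of $\bar{\Delta}_\rho$ at a time. There is nothing to add.
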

\begin{proof}
	By Proposition \ref{difference_structure},
\[
	D_\rho\bar{\Delta}_\rho=D_\rho (\bar{\Delta}_\rho (\Delta^{(m)})^{-1}) \Delta^{(m)}=\Delta^{(m/\rho)}D_\rho.
\]
	Thus, for $r\in\NN$, we have, by induction on $r$,
\[
	D_\rho\bar{\Delta}_\rho^r=\Delta^{(m/\rho)}D_\rho\bar{\Delta}_\rho^{r-1}=(\Delta^{(m/\rho)})^r D_\rho.
\]
\end{proof}

\begin{lemma}
\label{non-commute}
	$\Delta^{-1}\bar{\Delta}_\rho\Delta=\bar{\Delta}_\rho+\Ecal$, where $\Ecal_{l,s}=\delta(s-(m-\rho))$.  
\end{lemma}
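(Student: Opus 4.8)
The plan is to reduce the claim to a single commutator computation. Writing $\Delta^{-1}\bar{\Delta}_\rho\Delta-\bar{\Delta}_\rho=\Delta^{-1}(\bar{\Delta}_\rho\Delta-\Delta\bar{\Delta}_\rho)$, it suffices to evaluate the commutator $\bar{\Delta}_\rho\Delta-\Delta\bar{\Delta}_\rho$ and then left-multiply by $\Delta^{-1}$, the all-ones lower-triangular matrix (so that $\Delta^{-1}e_1=\mathbf{1}$, the constant vector). The point of this reformulation is that all of the boundary information lives in the commutator, while the conjugation itself is otherwise inert.

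The key idea is to separate each of $\Delta$ and $\bar{\Delta}_\rho$ into a circulant bulk plus a rank-one boundary correction. Let $W$ be the cyclic shift $We_j=e_{[j+1]}$. Then $\Delta=(I-W)+e_1e_m^{\ast}$, where the rank-one term repairs the single wraparound entry that distinguishes the genuine backward-difference matrix from its circulant counterpart; and, exactly in the spirit of Remark \ref{tilde_S}, $\bar{\Delta}_\rho=(I-W^\rho)+e_\rho e_m^{\ast}$, differing from the circulant $I-W^\rho$ only in its last column. Since $I-W^\rho$ and $I-W$ are both polynomials in $W$, they commute, so the commutator collapses to
\[
\bar{\Delta}_\rho\Delta-\Delta\bar{\Delta}_\rho=[\,I-W^\rho,\;e_1e_m^{\ast}\,]+[\,e_\rho e_m^{\ast},\;I-W\,]+[\,e_\rho e_m^{\ast},\;e_1e_m^{\ast}\,].
\]
The third bracket vanishes because $e_m^{\ast}e_1=e_m^{\ast}e_\rho=0$ (here $1,\rho\not\equiv m$), and the first two are pure rank-one algebra, evaluated from $W^\rho e_1=e_{1+\rho}$, $e_m^{\ast}W^\rho=e_{m-\rho}^{\ast}$, $We_\rho=e_{\rho+1}$, and $e_m^{\ast}W=e_{m-1}^{\ast}$.

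After these substitutions the commutator is a short sum of rank-one matrices $e_i e_j^{\ast}$; applying $\Delta^{-1}$ sends each $e_i$ to its partial-sum column $\Delta^{-1}e_i$, and the term $e_1 e_{m-\rho}^{\ast}$ in particular becomes $\mathbf{1}\,e_{m-\rho}^{\ast}$, the all-ones column in position $m-\rho$, which is $\mathcal{E}$. The \emph{main obstacle} is the boundary bookkeeping: were every matrix circulant the commutator would vanish identically and the conjugation would simply fix $\bar{\Delta}_\rho$, so the entire content of $\mathcal{E}$ is manufactured by the interaction of the two non-circulant corrections $e_1e_m^{\ast}$ and $e_\rho e_m^{\ast}$. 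One must therefore track the wraparound indices and their cancellations with care, and handle the degenerate cases (e.g. $\rho=m$, or very small $m$) separately. As an independent check one can instead verify the identity entrywise, splitting into the row regimes $l<\rho$, $l=\rho$, and $l>\rho$ and using the cumulative-sum formula $(\Delta^{-1}\bar{\Delta}_\rho\Delta)_{l,s}=\sum_{k\le l}(\bar{\Delta}_\rho\Delta)_{k,s}$ together with the explicit columns of $\bar{\Delta}_\rho\Delta$.
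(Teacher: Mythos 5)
Your reduction to the commutator and the decompositions $\Delta=(I-W)+e_1e_m^{\ast}$, $\bar{\Delta}_\rho=(I-W^{\rho})+e_\rho e_m^{\ast}$ are both correct, and this is a genuinely different (and cleaner) route than the paper's, which verifies the identity entrywise via cumulative column sums. But the proposal stops exactly where the content is: you extract the one rank-one term that produces $\Ecal$ and never check that the remaining terms cancel. They do not. Completing your own algebra gives
\[
[\bar{\Delta}_\rho,\Delta]=[I-W^{\rho},\,e_1e_m^{\ast}]+[e_\rho e_m^{\ast},\,I-W]
=\bigl(e_1e_{m-\rho}^{\ast}-e_{\rho+1}e_m^{\ast}\bigr)+\bigl(e_{\rho+1}e_m^{\ast}-e_\rho e_{m-1}^{\ast}\bigr)
=e_1e_{m-\rho}^{\ast}-e_\rho e_{m-1}^{\ast},
\]
so that
\[
\Delta^{-1}\bar{\Delta}_\rho\Delta-\bar{\Delta}_\rho=\mathbf{1}\,e_{m-\rho}^{\ast}-\Bigl(\sum_{l\geq\rho}e_l\Bigr)e_{m-1}^{\ast}.
\]
The first term is $\Ecal$, but the second does not vanish: it is a column of $-1$'s in rows $\rho,\dots,m$ at column $m-1$. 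A direct check at $m=6$, $\rho=2$ confirms this: column $5$ of $\Delta^{-1}\bar{\Delta}_2\Delta$ is $(-1,-1,-1,-1,0,-1)^{T}$, whereas column $5$ of $\bar{\Delta}_2+\Ecal$ is $(-1,0,0,0,1,0)^{T}$.

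So the gap is not an omitted routine verification; the omitted terms change the statement. (The paper's own proof has the same blind spot: it substitutes the generic formula $(\bar{\Delta}_\rho)_{j,s+1}=\delta([s+1-j])-\delta([s+1+\rho-j])$ even when $s+1=m$, where by definition the last column of $\bar{\Delta}_\rho$ lacks the second term; that is precisely the source of the surviving $-e_\rho e_{m-1}^{\ast}$.) Carried to the end, your method proves the corrected identity $\Delta^{-1}\bar{\Delta}_\rho\Delta=\bar{\Delta}_\rho+\Ecal-(\sum_{l\geq\rho}e_l)e_{m-1}^{\ast}$. The extra term is another rank-one perturbation supported on a single column, so it enters Proposition \ref{small_error} as one more $O(1)$ column of $V$ and only perturbs the constant in Theorem \ref{thm:high_order}, not the $\rho^{-2}$ decay rate. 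But as a proof of the lemma as stated, the proposal does not close, and no completion of it can, because the statement itself is missing this term; you would have caught this had you carried out either the rank-one bookkeeping or the entrywise check you propose as a backup, paying attention to the special last column of $\bar{\Delta}_\rho$ in the regime $s=m-1$.
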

\begin{proof}
	For $s\neq m$,
\[
\begin{split}
	(\Delta^{-1}\bar{\Delta}_\rho\Delta)_{l,s}&=\sum_{j,n}\Delta^{-1}_{l,j}(\bar{\Delta}_\rho)_{j,n}\Delta_{n,s}\\
				&=\sum_{j=1}^l(\bar{\Delta}_\rho)_{j,s}-(\bar{\Delta}_\rho)_{j,s+1}\\
				&=\sum_{j=1}^l\bigg[\delta(s-j)-\delta(s+\rho-j)-\delta(s+1-j)+\delta(s+1+\rho-j)\bigg]\\
				&=\sum_{j=1}^l(\delta(s-j)-\delta(s+1-j))-\sum_{j=1}^l(\delta(s+\rho-j)-\delta(s+1+\rho-j))\\
				&=\delta(s-l)-\delta(s+\rho-l)+\delta(s+\rho)=(\bar{\Delta}_\rho)_{l,s}+\delta(s+\rho),
\end{split}
\]
	where the $\delta(s+\rho)=\delta(s-(m-\rho))$ comes from the second term in the second-to-last line. When $s+1+\rho=m+1$, the term $\delta(s+1+\rho-j)$ wraps around, producing an additional $-1$.
	
	When $s=m$,
\[
	(\Delta^{-1}\bar{\Delta}_\rho\Delta)_{l,s}=\sum_{j}\Delta^{-1}_{l,j}(\bar{\Delta}_\rho)_{j,m}=\sum_{j=1}^l\delta(m-j)=\delta(m-l).
\]
	Combining the two equations above, we see that $\Delta^{-1}\bar{\Delta}_\rho\Delta=\bar{\Delta}_\rho+\Ecal$.
\end{proof}

\begin{proposition}\label{small_error}
	For $\Phi_{m/\rho,k}^\ast=(\phi_1^{(\eta)}\mid\dots\mid \phi_\eta^{(\eta)})$,
\[
	\Phi_{m/\rho,k}^\ast D_\rho S_\rho^2\Delta^2=\frac{1}{\rho^2}\Phi_{m/\rho,k}^\ast\Delta^2 D_\rho +\frac{1}{\rho^2}V,
\]
	where $V$ is zero except for the $(m-\rho)$-th column, which is $\phi_1^{(\eta)}$.

\end{proposition}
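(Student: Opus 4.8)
The plan is to reduce everything to the factorization $S_\rho=\frac{1}{\rho}\bar{\Delta}_\rho\Delta^{-1}$ established earlier, and then to push $D_\rho$ through the resulting product of difference matrices using Lemma \ref{high_commute} and Lemma \ref{non-commute}. Squaring the factorization gives
\[
	S_\rho^2=\frac{1}{\rho^2}\bar{\Delta}_\rho\Delta^{-1}\bar{\Delta}_\rho\Delta^{-1}.
\]
The obstruction to collapsing this cleanly is precisely that $\Delta^{-1}$ and $\bar{\Delta}_\rho$ fail to commute, and this is exactly what Lemma \ref{non-commute} measures. Rewriting that lemma as $\Delta^{-1}\bar{\Delta}_\rho=(\bar{\Delta}_\rho+\Ecal)\Delta^{-1}$ and then multiplying on the right by $\Delta^2$, the two trailing copies of $\Delta^{-1}$ cancel and I obtain
\[
	S_\rho^2\Delta^2=\frac{1}{\rho^2}\big(\bar{\Delta}_\rho^2+\bar{\Delta}_\rho\Ecal\big).
\]

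Next I would apply $D_\rho$ on the left and invoke Lemma \ref{high_commute} with $r=2$, which yields $D_\rho\bar{\Delta}_\rho^2=(\Delta^{(m/\rho)})^2 D_\rho$. After premultiplying by $\Phi_{m/\rho,k}^\ast$ this produces the main term $\frac{1}{\rho^2}\Phi_{m/\rho,k}^\ast\Delta^2 D_\rho$. Everything then rests on identifying the correction $\frac{1}{\rho^2}\Phi_{m/\rho,k}^\ast D_\rho\bar{\Delta}_\rho\Ecal$ with $\frac{1}{\rho^2}V$.

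To evaluate the correction I would use that $\Ecal$ vanishes except in its $(m-\rho)$-th column, which (by $\Ecal_{l,s}=\delta(s-(m-\rho))$) equals the all-ones vector $\mathbbm{1}$; hence $\bar{\Delta}_\rho\Ecal$ is supported on that same single column, where it equals $\bar{\Delta}_\rho\mathbbm{1}$. The crucial computation is the vector of row sums $\bar{\Delta}_\rho\mathbbm{1}$: summing $(\bar{\Delta}_\rho)_{l,j}=\delta([j-l])-\delta([\rho+j-l])$ over all columns (including the $j=m$ column) under the cyclic convention, the positive $\delta$ contributes $1$ in every row, while the negative $\delta$ contributes $1$ in every row except $l=\rho$, where the required index $j\equiv 0$ is excluded from the range $1\le j\le m-1$. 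Thus $\bar{\Delta}_\rho\mathbbm{1}$ has a single $1$ in position $\rho$ and is zero elsewhere.

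Finally I would chase this structure through the remaining operators. Sub-sampling by $D_\rho$ keeps row $\rho$ as its first retained row, so $D_\rho\bar{\Delta}_\rho\Ecal$ has a lone $1$ in position $(1,m-\rho)$; premultiplying by $\Phi_{m/\rho,k}^\ast$ then selects its first column $\phi_1^{(\eta)}$ and places it in the $(m-\rho)$-th column, which is exactly $V$. The main obstacle is the careful bookkeeping of the boundary term $\Ecal$ arising from the non-commutativity in Lemma \ref{non-commute}, together with the cyclic wrap-around in the row-sum computation; once those are tracked correctly, the column-supported form of $V$ follows mechanically.
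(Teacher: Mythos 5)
Your argument is correct and follows essentially the same route as the paper: factor $S_\rho=\frac{1}{\rho}\bar{\Delta}_\rho\Delta^{-1}$, use Lemma \ref{non-commute} to absorb the non-commutativity into the boundary term $\Ecal$, commute $D_\rho$ past $\bar{\Delta}_\rho^2$ via Lemma \ref{high_commute}, and identify $D_\rho\bar{\Delta}_\rho\Ecal$ as the single entry at $(1,m-\rho)$ producing $V$. Your explicit row-sum computation showing $\bar{\Delta}_\rho\mathbbm{1}=e_\rho$ is a correct verification of the identity $(\bar{\Delta}_\rho\Ecal)_{l,s}=\delta(l-\rho)\delta(s+\rho)$ that the paper states without derivation.
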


\begin{proof}
	When $r=2$, we consider the $(2,\rho)$-decimation operator $D_\rho S_\rho^2$. Then,
\[
\begin{split}
	D_\rho S_\rho^2\Delta^2&=\frac{1}{\rho^2}D_\rho\bar{\Delta}_\rho\Delta^{-1}\bar{\Delta}_\rho\Delta\\
				&=\frac{1}{\rho^2}D_\rho\bar{\Delta}_\rho(\bar{\Delta}_\rho+\Ecal)\\
				&=\frac{1}{\rho^2}\Delta^2 D_\rho+\frac{1}{\rho^2}D_\rho\bar{\Delta}_\rho\Ecal,
\end{split}
\]
	where the first term in the last line follows from Lemma \ref{high_commute}. Now, $(\bar{\Delta}_\rho\Ecal)_{l,s}=\delta(l-\rho)\delta(s+\rho)$, and $(D_\rho\bar{\Delta}_\rho\Ecal)_{l,s}=\delta(l-1)\delta(s+\rho)$. Thus,
\[
	\Phi_{m/\rho,k}^\ast D_\rho S_\rho^2\Delta^2=\frac{1}{\rho^2}\Phi_{m/\rho,k}^\ast\Delta^2 D_\rho+\frac{1}{\rho^2}V.
\]

\end{proof}

\begin{lemma}\label{uni_var_higher}
	For any $r\in\NN$, $\sum_{s=1}^{m/\rho}\|\Phi_{m/\rho}^\ast\Delta^r v_s\|_2\leq r2^r+\eta(2\pi\max_{1\leq j\leq k}|\lambda_j|\frac{1}{\eta})^r$, where $(v_s)_{j}=\delta(s-j)$, the $s$-th canonical coordinate.
\end{lemma}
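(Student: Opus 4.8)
The plan is to read $\sum_{s=1}^{\eta}\|\Phi_{m/\rho}^\ast\Delta^r v_s\|_2$ (with $\eta=m/\rho$ and $\Delta=\Delta^{(\eta)}$) as the sum of the $\ell^2$-norms of the columns of $\Phi_{m/\rho}^\ast\Delta^r$, and to split the index set $\{1,\dots,\eta\}$ into an interior part $s\le\eta-r$ and a boundary part $s>\eta-r$. To keep the eigenvectors $\{v_j\}$ of $\Omega$ free, I write $e_s$ for the canonical coordinate denoted $v_s$ in the statement. First I would record the action of $\Delta^r$ on the canonical basis: interpreting $e_j=0$ for $j>\eta$, an induction on $r$ starting from $\Delta e_s=e_s-e_{s+1}$ yields
\[
\Delta^r e_s=\sum_{l=0}^{\min(r,\,\eta-s)}(-1)^l\binom{r}{l}e_{s+l}.
\]
The truncation of the upper summation limit when $s>\eta-r$ is precisely the finite-dimensional boundary effect; the point to check is that the out-of-range basis vectors genuinely drop out rather than wrap around, since $\Delta$ is the non-cyclic backward difference.

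For the interior indices $s\le\eta-r$ the sum is untruncated, and here I would exploit the unitary structure. Writing $\phi_j^{(\eta)}=U_{j/\eta}\phi_0$ and factoring out $U_{s/\eta}$,
\[
\Phi_{m/\rho}^\ast\Delta^r e_s=\sum_{l=0}^{r}(-1)^l\binom{r}{l}\phi_{s+l}^{(\eta)}=U_{s/\eta}(I-U_{1/\eta})^r\phi_0
\]
by the binomial theorem for the commuting operator $U_{1/\eta}$. Because $U_{s/\eta}$ is unitary, the norm is independent of $s$ and equals $\|(I-U_{1/\eta})^r\phi_0\|_2$. Expanding $\phi_0=\sum_j c_jv_j$ in the eigenbasis of $\Omega$, with $U_{1/\eta}v_j=e^{2\pi\imath\lambda_j/\eta}v_j$,
\[
\|(I-U_{1/\eta})^r\phi_0\|_2^2=\sum_j|c_j|^2\,|1-e^{2\pi\imath\lambda_j/\eta}|^{2r}\le\Big(2\pi\frac{\max_j|\lambda_j|}{\eta}\Big)^{2r},
\]
using $|1-e^{2\pi\imath\lambda_j/\eta}|=2|\sin(\pi\lambda_j/\eta)|\le 2\pi|\lambda_j|/\eta$ and $\sum_j|c_j|^2=\|\phi_0\|_2^2=1$. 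Summing over the at most $\eta$ interior indices produces the main term $\eta\big(2\pi\max_j|\lambda_j|/\eta\big)^r$.

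For the $r$ boundary indices $s\in\{\eta-r+1,\dots,\eta\}$ I would discard any cancellation and estimate crudely, using $\|\phi_{s+l}^{(\eta)}\|_2=\|\phi_0\|_2=1$:
\[
\|\Phi_{m/\rho}^\ast\Delta^r e_s\|_2\le\sum_{l=0}^{\min(r,\,\eta-s)}\binom{r}{l}\le 2^r,
\]
so these $r$ columns contribute at most $r2^r$; adding the two regimes gives the claimed bound. As a sanity check, for $r=1$ the whole sum equals $\sigma(\Phi_{m/\rho}^\ast)+\|\phi_\eta^{(\eta)}\|_2$, and the estimate reduces to the content of Lemma \ref{uni_var}. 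I expect the only genuine subtlety to be the bookkeeping of the boundary truncation of $\Delta^r e_s$: this is exactly what breaks the clean unitary factorization near $s=\eta$ and forces the separate, cruder boundary estimate responsible for the $r2^r$ term.
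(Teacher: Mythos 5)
Your proposal is correct and follows essentially the same route as the paper: split the index set into the interior part where $\Delta^r e_s$ is an untruncated binomial sum, use the unitary factorization $U_{s/\eta}(I-U_{1/\eta})^r\phi_0$ and the eigenbasis expansion of $\phi_0$ to bound each interior column by $(2\pi\max_j|\lambda_j|/\eta)^r$, and handle the $O(r)$ boundary columns by the crude triangle-inequality bound $2^r$. Your bookkeeping of exactly which indices are truncated is in fact slightly more careful than the paper's, but the argument is the same.
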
	

\begin{proof}
	Note that for any $A=(A_1,\dots, A_n)$, $A\Delta e_s=A_s-A_{s+1}$. Thus,
\begin{equation}
\label{variation_uni}
\begin{split}
	\sum_{s<m/\rho-r}\|\Phi_{m/\rho}^\ast\Delta^r v_s\|_2&=\sum_{s<m/\rho-r}\|\sum_{t=s}^{s+r} (-1)^t\binom{r}{t-s}U_{t\rho/m}\phi_0\|_2\\
			&=\sum_{s<m/\rho-r}\|U_{s\rho/m}\sum_{t=0}^r(-1)^t\binom{r}{t}U_{t\rho/m}\phi_0\|_2\\
			&=\sum_{s<m/\rho-r}\|\sum_{t=0}^r(-1)^t\binom{r}{t}U_{t\rho/m}\phi_0\|_2\\
			&=\sum_{s<m/\rho-r}\|\sum_{j=1}^k c_j[\sum_{t=0}^r(-1)^t\binom{r}{t}e^{2\pi\imath t\lambda_j\rho/m}]v_j\|_2\\
			&=\sum_{s<m/\rho-r}(\sum_{j=1}^k|c_j|^2|e^{2\pi\imath\lambda_j\rho/m}-1|^{2r})^{1/2}\\
			&\leq\sum_{s<m/\rho-r}\bigg(\sum_{j=1}^k|c_j|^2\cdot\big(2\pi|\lambda_j|\frac{\rho}{m}\big)^{2r}\bigg)^{1/2}\\
			&\leq\sum_{s<m/\rho-r}\bigg(\max_{1\leq j\leq k}2\pi|\lambda_j|\frac{1}{\eta}\bigg)^r\cdot\|\phi_0\|_2\\
			&\leq\eta(2\pi\max_{1\leq j\leq k}|\lambda_j|\frac{1}{\eta})^r\cdot\|\phi_0\|_2,
\end{split}
\end{equation}
	where we note that $m/\rho=\eta$.
	
	For $s\geq m/\rho-r$, with trivial estimates one has 
\[
	\sum_{s\geq m/\rho-r}\|\Phi_{m/\rho}^\ast\Delta^r e_s\|_2\leq r\|\sum_{j=1}^k |c_j|\sum_{t=0}^r\binom{r}{t}v_j\|\leq r2^r.
\]

\end{proof}

\begin{proposition}["Frame Variation" Estimate]\label{fra_var_est_high}
\[
	\|\Phi_{m/\rho,k}^\ast D_\rho S_\rho^2 \Delta^2 u\|_2\leq \left(9+\eta\left(2\pi\max|\lambda_j|\frac{1}{\eta}\right)^2\right)\frac{1}{\rho^2}.
\]
\end{proposition}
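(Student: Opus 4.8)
The plan is to feed the vector $u$ into the operator identity already established in Proposition \ref{small_error} and then estimate the two resulting contributions separately, the second of which is exactly the content of Lemma \ref{uni_var_higher}. Proposition \ref{small_error} gives
\[
	\Phi_{m/\rho,k}^\ast D_\rho S_\rho^2 \Delta^2 = \frac{1}{\rho^2}\Phi_{m/\rho,k}^\ast (\Delta^{(\eta)})^2 D_\rho + \frac{1}{\rho^2}V,
\]
so after applying both sides to $u$ and invoking the triangle inequality the whole estimate splits as
\[
	\|\Phi_{m/\rho,k}^\ast D_\rho S_\rho^2 \Delta^2 u\|_2 \leq \frac{1}{\rho^2}\|\Phi_{m/\rho,k}^\ast (\Delta^{(\eta)})^2 (D_\rho u)\|_2 + \frac{1}{\rho^2}\|Vu\|_2.
\]
The target constant $9+\eta(2\pi\max_j|\lambda_j|/\eta)^2$ will emerge as an $8$ from the main term plus a $1$ from the boundary term $V$ (each carrying the factor $\|u\|_\infty$, which the normalization used in the displayed statement suppresses).

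For the boundary term I would use that $V$ vanishes off its $(m-\rho)$-th column, which equals $\phi_1^{(\eta)}$; hence $Vu = u_{m-\rho}\,\phi_1^{(\eta)}$, and since $\phi_1^{(\eta)} = U_{1/\eta}\phi_0$ is a unit vector (as $U$ is unitary and $\|\phi_0\|_2 = 1$), one gets $\|Vu\|_2 = |u_{m-\rho}| \le \|u\|_\infty$. This accounts for the lone $+1$ in the final constant.

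For the main term I would expand $D_\rho u = \sum_{s=1}^\eta (D_\rho u)_s\, v_s$ in the canonical basis $\{v_s\}$ of $\CC^\eta$, apply the triangle inequality, and factor out $\|D_\rho u\|_\infty \le \|u\|_\infty$ (valid because $D_\rho u$ merely resamples entries of $u$), reducing the bound to
\[
	\|\Phi_{m/\rho,k}^\ast (\Delta^{(\eta)})^2 (D_\rho u)\|_2 \le \|u\|_\infty \sum_{s=1}^\eta \|\Phi_{m/\rho}^\ast \Delta^2 v_s\|_2.
\]
The sum on the right is precisely what Lemma \ref{uni_var_higher} estimates, and with $r=2$ it yields $r2^r + \eta(2\pi\max_j|\lambda_j|/\eta)^r = 8 + \eta(2\pi\max_j|\lambda_j|/\eta)^2$. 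Combining with the boundary contribution produces the claimed bound.

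I do not anticipate a serious obstacle, since Proposition \ref{small_error} and Lemma \ref{uni_var_higher} already isolate the two hard pieces and what remains is assembly plus two triangle inequalities. The only delicate point is conceptual rather than computational: the leftover operator $V$ is exactly the manifestation of the non-commutativity of $\Delta$ and $\bar\Delta_\rho$ recorded in Lemma \ref{non-commute}, and the crux is that this boundary defect contributes only an $O(1)$ constant rather than a term growing with $\eta$. This is also the structural reason the second-order scheme must use $S_\rho^2$ rather than $\tilde S_\rho^2$, as flagged in Remark \ref{tilde_S}.
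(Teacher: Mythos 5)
Your proposal follows essentially the same route as the paper's own proof: decompose via Proposition \ref{small_error}, bound the main term by expanding $D_\rho u$ in the canonical basis and invoking Lemma \ref{uni_var_higher} with $r=2$ to get the $8+\eta(2\pi\max_j|\lambda_j|/\eta)^2$, and bound the boundary term $Vu$ by $\|\phi_1^{(\eta)}\|_2\|u\|_\infty\le\|u\|_\infty$ for the $+1$. Your remark that the stated inequality suppresses the factor $\|u\|_\infty$ is also consistent with how the paper's own computation carries (and then drops) that factor.
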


\begin{proof}
	Let $\{v_s\}_s$ be the canonical basis of $\CC^{\eta}$. Then, by Proposition \ref{small_error}, we have
\[
\begin{split}
	\|\Phi_{m/\rho,k}^\ast D_\rho S_\rho^2 \Delta^2 u\|_2&=\frac{1}{\rho^2}\|\Phi_{m/\rho,k}^\ast\Delta^2 D_\rho u+Vu\|_2\\
					&\leq\frac{1}{\rho^2}\left(\sum_{s=1}^{m/\rho}\|\Phi_{m/\rho}^\ast\Delta^2 v_s\|_2+\|\phi_1\|_2\right)\|u\|_\infty\\
					&\leq\frac{1}{\rho^2}\left(8+\eta\left(2\pi\max|\lambda_j|\frac{1}{\eta}\right)^2+1\right)\\
					&= \left(9+\eta\left(2\pi\max|\lambda_j|\frac{1}{\eta}\right)^2\right)\frac{1}{\rho^2}.
\end{split}
\]

\end{proof}
\begin{lemma}[Total Number of Bits Used]\label{resources_used}
	Given a mid-rise quantizer $\mathscr{A}=\mathscr{A}_0+\imath\mathscr{A}_0$ with length $2L$ and $r\in\NN$, if $q\in\mathscr{A}^m$ is a quantized sample from the alphabet, then  $D_\rho S_\rho^r q\in\CC^\eta$ can be encoded by $\eta\cdot2r\log(2Lm)$ bits.
\end{lemma}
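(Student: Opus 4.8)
The plan is to reduce the whole statement to a counting argument: bound how many distinct complex values a single coordinate of $D_\rho S_\rho^r q$ can assume, then multiply by the number $\eta$ of coordinates. The key structural observation is that $\rho S_\rho$ is an \emph{integer} matrix: by \eqref{S_definition} the supports of $S_\rho^+$ and $S_\rho^-$ lie in disjoint rows, so every entry of $S_\rho$ belongs to $\{0,\pm 1/\rho\}$, and hence $M:=\rho^r D_\rho S_\rho^r=D_\rho(\rho S_\rho)^r$ has integer entries. Writing each sample as $q_j=\bigl((2s_j+1)+\imath(2t_j+1)\bigr)\tfrac{\delta}{2}$ with $-L\le s_j,t_j\le L-1$, the $l$-th coordinate of $D_\rho S_\rho^r q=\rho^{-r}Mq$ then has real part $\tfrac{\delta}{2\rho^r}a_l$ and imaginary part $\tfrac{\delta}{2\rho^r}b_l$, where $a_l=\sum_j M_{l,j}(2s_j+1)$ and $b_l=\sum_j M_{l,j}(2t_j+1)$ are integers. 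So I only need to control how many values the integers $a_l,b_l$ can take.

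First I would bound their size. Since each row of $D_\rho$ has a single nonzero entry equal to $1$, we have $\|D_\rho\|_{\infty\to\infty}=1$ in the induced $\ell^\infty$ operator norm (maximum absolute row sum), while a direct inspection of \eqref{S_definition} gives $\|S_\rho\|_{\infty\to\infty}\le m/\rho$: the bulk rows $l\ge\rho$ contribute absolute row sum $1$, and the boundary rows $l\le\rho-1$ contribute $(m-\rho)/\rho$, both bounded by $m/\rho$. By submultiplicativity, each row of the integer matrix $M$ has $\ell^1$-norm at most $\rho^r(m/\rho)^r=m^r$, whence $|a_l|\le m^r\max_j|2s_j+1|\le m^r(2L-1)$, and likewise for $b_l$.

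Next comes the parity refinement that keeps the constant sharp. Because $a_l=2\sum_j M_{l,j}s_j+\sum_j M_{l,j}$, its parity equals that of the fixed row sum $\sum_j M_{l,j}$, independently of the signal; hence $a_l$ ranges over integers of a single fixed parity in $[-m^r(2L-1),\,m^r(2L-1)]$, of which there are at most $m^r(2L-1)+1\le (2Lm)^r$. The same holds for $b_l$, so each complex coordinate of $D_\rho S_\rho^r q$ takes at most $(2Lm)^{2r}$ distinct values, encodable in $\log\bigl((2Lm)^{2r}\bigr)=2r\log(2Lm)$ bits. Summing over the $\eta$ coordinates yields the claimed $\eta\cdot 2r\log(2Lm)$ bits.

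The only genuinely delicate point is pairing the integrality of $\rho S_\rho$ with the estimate $\|S_\rho\|_{\infty\to\infty}\le m/\rho$: these two facts together convert the analytic averaging operator into a controlled integer matrix of known row size. The parity observation is what recovers the correct constant, since without it one loses a factor of $2$ and the estimate already fails at $r=1$; everything else is bookkeeping. I would also remark that this bound is deliberately crude — for $r=1$ the sharper count in the proof of Theorem~\ref{main}(c) gives $2L\rho$ rather than $2Lm$ per coordinate — but the uniform form stated here is exactly what the second-order error analysis of Theorem~\ref{thm:high_order} requires.
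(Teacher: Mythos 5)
Your proof is correct and follows the same counting strategy as the paper: each coordinate of $D_\rho S_\rho^r q$ is $\tfrac{\delta}{2\rho^r}$ times a Gaussian integer of bounded magnitude and fixed parity, giving at most $(2Lm)^{2r}$ values per entry and hence $2r\log(2Lm)$ bits each. Your consolidated matrix-norm bound $\rho^r\|D_\rho S_\rho^r\|_{\infty\to\infty}\le m^r$ together with the explicit parity observation makes rigorous the step the paper treats informally by ``iterating $r$ times,'' but the underlying argument is the same.
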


\begin{proof}
	Given the assumption above, each entry $q_j$ of $q$ is a number of the form 
\[
q_j=\big((2s_j+1)+\imath(2t_j+1)\big)\frac{\delta}{2},\quad-L\leq s_j,\, t_j\leq L-1.
\] 
	Then, each entry in $S_\rho q$ is the average of $\rho$ entries in $q$, which has the form 
\[
	(S_\rho q)_j=\big((2\tilde{s}_j+\rho)+\imath(2\tilde{t}_j+\rho)\big)\frac{\delta}{2\rho},\quad-Lm\leq \tilde{s}_j,\, \tilde{t}_j\leq (L-1)m.
\]	
	There are at most $((2L-1)m+1)^2\leq (2Lm)^2$ choices per entry. Note that there are $(2Lm)^2$ choices instead of $(2L\rho)^2$ as we need to account for the first $\rho-1$ rows, which sums $m-\rho$ terms. Iterating $r$ times, there are $(2Lm)^{2r}$ choices for each entry of $S_\rho^r q$. Thus, the vector $D_\rho S_\rho^r q$ can be encoded by $\mathscr{R}=\eta\cdot2r\log(2Lm)$ bits. 
\end{proof}

\begin{proof} of Theorem \ref{thm:high_order}:

	To estimate the reconstruction error, we note that
\[
	D_\rho S_\rho^2\Phi_{m,k}=\Phi_{m/\rho,k}\bar{C}_{m,\rho}^2,
\]
	which follows from Lemma \ref{twisting_uni_high}. Moreover, $(D_\rho S_\rho^2 \Phi_{m,k})^\dagger=\bar{C}_{m,\rho}^{-2}\Scal^{-1} \Phi_{m/\rho,k}^\ast$, where $\Scal=\Phi_{m/\rho,k}^\ast\Phi_{m/\rho,k}$ has lower frame bound $\frac{m}{\rho} C_{\phi_0}$. Since $\|\bar{C}_{m,\rho}^{-1}\|_2\leq\frac{\pi}{2}$, the reconstruction error is
\begin{equation}
\label{error_2nd_order}
\begin{split}
	\mathscr{E}_{m,\rho}&=\|x-\bar{C}_{m,\rho}\Scal^{-1}\Phi_{m/\rho,k}^\ast q\|_2\\
					&=\|\bar{C}_{m,\rho}^{-2}\Scal^{-1}\Phi_{m/\rho,k}^\ast D_\rho S_\rho^2(\Delta^{(m)})^2u\|_2\\
					&\leq\frac{1}{\rho^2}\|\bar{C}^{-1}\|^2_2\|\Scal^{-1}\|_2\|\Phi_{m/\rho}^\ast D_\rho S_\rho^2\Delta^2 u\|_2\\
					&\leq\frac{\pi^2}{4\eta C_{\phi_0}}\bigg(9+\eta(2\pi\max_{1\leq j\leq k}|\lambda_j|\frac{1}{\eta})^2\bigg)\|u\|_\infty\frac{1}{\rho^2},
\end{split}
\end{equation}
	where $\{v_j\}_j\subset\CC^m$ denotes the canonical basis in $\CC^m$, the first inequality comes from Proposition \ref{fra_var_est_high}, and the second follows from Lemma \ref{uni_var_higher}. Here, we see that the error decays quadratically with respect to the oversampling rate $\rho$.
	
	As for the bits used, note that $\frac{1}{m}= \frac{1}{\eta}\cdot\frac{1}{\rho}$ and
\[
	e^{-\frac{1}{2\eta}\mathscr{R}}=\frac{1}{(2Lm)^2}=\frac{1}{(2L\eta)^2}\frac{1}{\rho^2},
\]
	where $\mathscr{R}=\eta\cdot4\log(2Lm)$ comes from Lemma \ref{resources_used}. Thus, we have 
\begin{equation}
	\mathscr{E}(\mathscr{R})\leq\frac{\pi^2}{4\eta C_{\phi_0}}\bigg(9+\eta(2\pi\max_{1\leq j\leq k}|\lambda_j|\frac{1}{\eta})^2\bigg)\|u\|_\infty\frac{1}{\rho^2}\leq C_{k,\phi_0,L,\eta}\|u\|_\infty 2^{-\frac{1}{2\eta}\mathscr{R}},
\end{equation}
	where $C_{k,\phi_0,L,\eta}\leq\frac{\pi^2}{4\eta C_{\phi_0}}\bigg(9+\eta(2\pi\max_{1\leq j\leq k}|\lambda_j|\frac{1}{\eta})^2\bigg)(2 L\eta)^2$, independent of $\rho$.

\end{proof}

	Lemma \ref{non-commute} shows that $\Delta^{-1}$ and $\bar{\Delta}_\rho$ do not commute, and such non-commutativity limits the potential to generalize alternative decimation to higher orders. For the sake of demonstration, we show explicit calculation in Appendix \ref{higher_order_diff} which highlights the difficulty in the generalization of our results. Thus, to achieve exponential error decay with respect to the bit usage for higher order $\Sigma\Delta$ quantization schemes, we need to employ different approaches. The new scheme we propose will be published in a subsequent manuscript.

\section{Acknowledgement}
The author greatly acknowledges the support from ARO Grant W911NF-17-1-0014, and John Benedetto for the thoughtful advice and insights. Further, the author appreciates the constructive analysis and suggestions of the referees.


\begin{appendix}

\section{Limitation of Alternative Decimation: Third Order Decimation}\label{higher_order_diff}
	The non-commutativity between $\bar{\Delta}_\rho$ and $\Delta^{-1}$ results in incomplete difference scaling when applying $D_\rho S_\rho^r$ on $\Delta^r$, creating substantial error terms. This phenomenon already occurs for $r=3$.
	
\begin{proposition}\label{third_order_fail}
	Given $m,\rho\in\NN$ with $\rho\mid m$, the third order decimation satisfies $D_\rho S_\rho^3\Delta^3=\frac{1}{\rho^3}(\Delta^{(\eta)})^3 D_\rho+O(\rho^{-2})$. In particular, $D_\rho S_\rho^3$ only yields quadratic error decay with respect to the oversampling ratio $\rho$.
\end{proposition}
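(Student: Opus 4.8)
The plan is to mimic the second-order computation of Proposition \ref{small_error}, but now iterating the non-commutation relation of Lemma \ref{non-commute} twice and keeping careful track of the error terms it produces. Writing $S_\rho=\frac{1}{\rho}\bar{\Delta}_\rho\Delta^{-1}$ as in the lemma preceding Proposition \ref{difference_structure}, one factor of $\Delta^{-1}$ cancels against $\Delta^3$, so I would first record
\[
	D_\rho S_\rho^3(\Delta^{(m)})^3=\frac{1}{\rho^3}D_\rho\,\bar{\Delta}_\rho\,\Delta^{-1}\,\bar{\Delta}_\rho\,\Delta^{-1}\,\bar{\Delta}_\rho\,\Delta^2.
\]
I would then push the two interior $\Delta^{-1}$'s to the right using $\Delta^{-1}\bar{\Delta}_\rho\Delta=\bar{\Delta}_\rho+\Ecal$ (Lemma \ref{non-commute}), together with the factorization $\Delta^{-1}\bar{\Delta}_\rho\Ecal\Delta=(\bar{\Delta}_\rho+\Ecal)\,\Delta^{-1}\Ecal\Delta$, obtained by inserting $\Delta\Delta^{-1}$ between $\bar{\Delta}_\rho$ and $\Ecal$. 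This gives the exact identity
\[
	D_\rho S_\rho^3\Delta^3=\frac{1}{\rho^3}D_\rho\bar{\Delta}_\rho\Big[(\bar{\Delta}_\rho+\Ecal)^2+(\bar{\Delta}_\rho+\Ecal)\Delta^{-1}\Ecal\Delta\Big].
\]

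The main term is $\frac{1}{\rho^3}D_\rho\bar{\Delta}_\rho^3=\frac{1}{\rho^3}(\Delta^{(\eta)})^3 D_\rho$ by Lemma \ref{high_commute}, which is the claimed leading contribution; by Lemma \ref{uni_var_higher} with $r=3$ it produces only order $\rho^{-3}$ in the reconstruction error. Every remaining summand carries at least one factor $\Ecal$, and I would split them into two groups. The ``bare'' terms $\bar{\Delta}_\rho\Ecal$, $\Ecal\bar{\Delta}_\rho$, $\Ecal^2$ coming from $(\bar{\Delta}_\rho+\Ecal)^2-\bar{\Delta}_\rho^2$ collapse, after left multiplication by $D_\rho\bar{\Delta}_\rho$, to sparse matrices with $O(1)$ entries: using $\Ecal=\mathbbm{1}e_{m-\rho}^\ast$, $\bar{\Delta}_\rho\mathbbm{1}=e_\rho$, and $\Ecal^2=\Ecal$, each reduces to a single surviving coordinate, and hence each contributes only at order $\rho^{-3}$.

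The decisive group is the term $\frac{1}{\rho^3}D_\rho\bar{\Delta}_\rho(\bar{\Delta}_\rho+\Ecal)\Delta^{-1}\Ecal\Delta$. Here I would exploit that $\Delta^{-1}\mathbbm{1}=(1,2,\dots,m)^\ast=:w$ grows linearly, so that $\Delta^{-1}\Ecal\Delta=w\,(e_{m-\rho}-e_{m-\rho-1})^\ast$, and crucially $\Ecal w=(m-\rho)\mathbbm{1}$. Substituting and using $\bar{\Delta}_\rho\mathbbm{1}=e_\rho$ reduces the left factor (the vector multiplying $(e_{m-\rho}-e_{m-\rho-1})^\ast$) to $\bar{\Delta}_\rho^2 w+(m-\rho)e_\rho$. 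A direct entrywise computation of $\bar{\Delta}_\rho w$ (equal to $\rho$ on indices $\ge\rho$ and to $\rho-m$ below) and then of $\bar{\Delta}_\rho^2 w$ shows that after sampling by $D_\rho$ only the first coordinate survives, giving $D_\rho\bar{\Delta}_\rho^2 w=\rho\,e_1^{(\eta)}$. Adding the two contributions yields $(\rho+(m-\rho))e_1^{(\eta)}=m\,e_1^{(\eta)}$, so this term equals $\frac{m}{\rho^3}e_1^{(\eta)}(e_{m-\rho}-e_{m-\rho-1})^\ast=\frac{\eta}{\rho^2}e_1^{(\eta)}(e_{m-\rho}-e_{m-\rho-1})^\ast$, which is exactly of order $\rho^{-2}$ for fixed $\eta$ and cannot cancel the $O(\rho^{-3})$ terms. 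This establishes the identity $D_\rho S_\rho^3\Delta^3=\frac{1}{\rho^3}(\Delta^{(\eta)})^3 D_\rho+O(\rho^{-2})$, and upon multiplying by $\Phi_{m/\rho,k}^\ast$ on the left and a bounded $u$ on the right (using $\|\Phi_{m/\rho,k}^\ast e_1^{(\eta)}\|=\|\phi_1^{(\eta)}\|=1$) it forces the error to be $\Theta(\rho^{-2})$, i.e.\ only quadratic decay.

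The step I expect to be the main obstacle is the bookkeeping around $\Delta^{-1}\Ecal\Delta$: one must simultaneously track the linear growth injected by $\Delta^{-1}\mathbbm{1}$ and the boundary corrections built into $\bar{\Delta}_\rho$, and then verify that the sampling operator $D_\rho$ does not annihilate the surviving growth as it does for the higher-index coordinates. Confirming that the two sources of growth — the $(m-\rho)$ from $\Ecal w$ and the $\rho$ from $D_\rho\bar{\Delta}_\rho^2 w$ — land in the same sampled coordinate and add rather than cancel is the crux; it is precisely this non-cancellation, a direct consequence of the non-commutativity in Lemma \ref{non-commute}, that degrades the decay from cubic to quadratic and explains why alternative decimation breaks down beyond second order.
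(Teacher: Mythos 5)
Your proposal is correct and follows essentially the same route as the paper: the identical factorization of $D_\rho S_\rho^3\Delta^3$ via the relation $\Delta^{-1}\bar{\Delta}_\rho\Delta=\bar{\Delta}_\rho+\Ecal$ of Lemma \ref{non-commute}, with the $O(\eta/\rho^2)$ obstruction isolated in the terms containing $\Delta^{-1}\Ecal\Delta$ and the remaining $\Ecal$-terms shown to be $O(\rho^{-3})$. The only difference is presentational — you exploit the rank-one structure $\Ecal=\mathbbm{1}e_{m-\rho}^\ast$ and the identities $\bar{\Delta}_\rho\mathbbm{1}=e_\rho$, $\Ecal w=(m-\rho)\mathbbm{1}$ to organize the bookkeeping, where the paper's Lemma \ref{third_order_calc} computes the same five terms entrywise, and your final coefficient $m\,e_1^{(\eta)}(e_{m-\rho}-e_{m-\rho-1})^\ast$ agrees with the paper's $\frac{\eta}{\rho^2}\Ecal_1$ term.
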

	
	First, by noting that $\Delta^{-1}\bar{\Delta}_\rho\Delta=\Ecal$ as in Lemma \ref{non-commute}, one has
	
\begin{equation}
\label{3rd-order-deci}
\begin{split}
	D_\rho S_\rho^3\Delta^3&=\frac{1}{\rho^3}D_\rho\bar{\Delta}_\rho\Delta^{-1}\bar{\Delta}_\rho\Delta^{-1}\bar{\Delta}_\rho\Delta^2\\
				&=\frac{1}{\rho^3}D_\rho\bar{\Delta}_\rho(\Delta^{-1}\bar{\Delta}_\rho\Delta)\Delta^{-2}\bar{\Delta}_\rho\Delta^2\\
				&=\frac{1}{\rho^3}D_\rho\bar{\Delta}_\rho(\bar{\Delta}_\rho+\Ecal)\Delta^{-1}(\bar{\Delta}_\rho+\Ecal)\Delta\\
				&=\frac{1}{\rho^3}D_\rho\bar{\Delta}_\rho(\bar{\Delta}_\rho+\Ecal)(\bar{\Delta}_\rho+\Ecal+\Delta^{-1}\Ecal\Delta)\\
				&=\frac{1}{\rho^3}D_\rho\bigg(\bar{\Delta}_\rho^3+\bar{\Delta}_\rho^2\Ecal+\bar{\Delta}_\rho^2(\Delta^{-1}\Ecal\Delta)+\bar{\Delta}_\rho\Ecal\bar{\Delta}_\rho+\bar{\Delta}_\rho\Ecal^2+\bar{\Delta}_\rho\Ecal(\Delta^{-1}\Ecal\Delta)\bigg).
\end{split}
\end{equation}

	We shall calculate all terms one-by-one.
\begin{lemma}
\label{third_order_calc}
	We have the following equalities:
\begin{itemize}

\item[(1)]:\[	(D_\rho\bar{\Delta}_\rho^2\Ecal)_{l,s}=\delta(s-(m-\rho))\bigg(\delta(l-1)-\delta(l-2)\bigg),\]
\item[(2)]:\[	(D_\rho\bar{\Delta}_\rho^2(\Delta^{-1}\Ecal\Delta))_{l,s}=\left\{\begin{array}{lcl}
							-\rho&\text{if}& (l,s)=(1,m-\rho-1)\\
							\rho&\text{if}& (l,s)=(1,m-\rho)\\
							0&\text{otherwise}\end{array}\right.,\]
\item[(3)]:\[	(D_\rho\bar{\Delta}_\rho\Ecal\bar{\Delta}_\rho)_{l,s}=\delta(l-1)\bigg(\delta(s-(m-\rho))-\delta(s-(m-2\rho))\bigg),\]
\item[(4)]:\[	(D_\rho\bar{\Delta}_\rho\Ecal^2)_{l,s}=\delta(l-1)\delta(s-(m-\rho)),\]
\item[(5)]:\[	(D_\rho\bar{\Delta}_\rho\Ecal(\Delta^{-1}\Ecal\Delta))_{l,s}=(m-\rho)\delta(l-1)\bigg(\delta(s-(m-\rho))-\delta(s-(m-\rho-1)\bigg),\]
\end{itemize}
	where given $n\in\NN$, $[n]:=\{1,\dots, n\}$.
	In particular, $D_\rho\big(\bar{\Delta}_\rho^2(\Delta^{-1}\Ecal\Delta)+\bar{\Delta}_\rho\Ecal(\Delta^{-1}\Ecal\Delta)\big)=O(m)$, and $D_\rho(\bar{\Delta}_\rho^2\Ecal+\bar{\Delta}_\rho\Ecal\bar{\Delta}_\rho+\bar{\Delta}_\rho\Ecal^2)=O(1)$.
\end{lemma}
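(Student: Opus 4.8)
The plan is to exploit the rank-one structure of $\Ecal$ and reduce each of the five products to the action of $\bar{\Delta}_\rho$ on a handful of explicit vectors, followed by the row-subsampling $D_\rho$. By Lemma \ref{non-commute}, $\Ecal_{l,s}=\delta(s-(m-\rho))$, so $\Ecal=\mathbbm{1}\,e_{m-\rho}^\ast$, where $\mathbbm{1}\in\CC^m$ is the all-ones vector and $e_j$ is the $j$-th standard basis vector. Since $\Delta^{-1}$ is the lower-triangular all-ones matrix and $\Delta$ is lower bidiagonal, conjugation factors as $\Delta^{-1}\Ecal\Delta=(\Delta^{-1}\mathbbm{1})(e_{m-\rho}^\ast\Delta)=w\,r^\ast$, where $w=(1,2,\dots,m)^\ast=\Delta^{-1}\mathbbm{1}$ and $r^\ast=e_{m-\rho}^\ast\Delta$ has entries $r_s=\delta(s-(m-\rho))-\delta(s-(m-\rho-1))$. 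These two factorizations let every product in the lemma be read off from how $\bar{\Delta}_\rho$ transforms the fixed left factors $\mathbbm{1}$, $w$, or a standard basis vector, together with the scalar contractions $\Ecal^2=\Ecal$, $e_{m-\rho}^\ast\mathbbm{1}=1$, and $e_{m-\rho}^\ast w=m-\rho$.

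First I would assemble a short toolbox of $\bar{\Delta}_\rho$-actions. From the column description of $\bar{\Delta}_\rho$ one has $\bar{\Delta}_\rho e_j=e_j-e_{j+\rho}$ (cyclically, for $j\neq m$), and summing columns gives the row-sum identity $\bar{\Delta}_\rho\mathbbm{1}=e_\rho$. Applying $\bar{\Delta}_\rho$ to the cumulative vector $w$ is the only genuinely new computation: a direct count of the two defining deltas yields $\bar{\Delta}_\rho w=\rho\mathbbm{1}-m\sum_{j=1}^{\rho-1}e_j$, and a second application gives $\bar{\Delta}_\rho^2 w=\rho e_\rho-m\sum_{j=1}^{\rho-1}(e_j-e_{j+\rho})$. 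Finally, $D_\rho$ keeps only rows indexed by multiples of $\rho$, collapsing $\delta(\rho\,l-c\rho)$ to $\delta(l-c)$ modulo $\eta$ and annihilating any basis vector supported on a non-multiple of $\rho$; in particular $D_\rho e_\rho=e_1$, $D_\rho e_{2\rho}=e_2$, and $D_\rho e_j=0$ for $1\le j\le\rho-1$ and for $\rho+1\le j\le 2\rho-1$.

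With these pieces each identity is a one-line substitution. For (1) and (3)--(4) the contractions reduce the left factor to $\bar{\Delta}_\rho\mathbbm{1}=e_\rho$ or $\bar{\Delta}_\rho^2\mathbbm{1}=\bar{\Delta}_\rho e_\rho=e_\rho-e_{2\rho}$, and then $D_\rho$ produces the stated $\delta(l-1)$ and $\delta(l-1)-\delta(l-2)$ row-patterns times the appropriate row of $\bar{\Delta}_\rho$ or column index. For (2) I would substitute $\Delta^{-1}\Ecal\Delta=w\,r^\ast$ and use $D_\rho\bar{\Delta}_\rho^2 w=\rho\,e_1$ (all the $\pm m$ terms sit on non-multiples of $\rho$ and die under $D_\rho$), yielding exactly the $\pm\rho$ entries at $(1,m-\rho)$ and $(1,m-\rho-1)$. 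For (5), $\Ecal(\Delta^{-1}\Ecal\Delta)=(m-\rho)\,\mathbbm{1}\,r^\ast$ since $e_{m-\rho}^\ast w=m-\rho$, after which $\bar{\Delta}_\rho\mathbbm{1}=e_\rho$ and $D_\rho$ give the factor $(m-\rho)\delta(l-1)$ against $r^\ast$. Reading off magnitudes, the two terms carrying the factor $w$, namely (2) and (5), have entries of size $O(m)$, while (1), (3), (4) have $O(1)$ entries; since each of these matrices has only boundedly many nonzero columns, this gives the claimed $D_\rho(\bar{\Delta}_\rho^2(\Delta^{-1}\Ecal\Delta)+\bar{\Delta}_\rho\Ecal(\Delta^{-1}\Ecal\Delta))=O(m)$ and $D_\rho(\bar{\Delta}_\rho^2\Ecal+\bar{\Delta}_\rho\Ecal\bar{\Delta}_\rho+\bar{\Delta}_\rho\Ecal^2)=O(1)$.

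The main obstacle is bookkeeping but genuinely delicate, and it lives entirely in the step $\Delta^{-1}\mathbbm{1}=(1,2,\dots,m)^\ast$: conjugating the bounded rank-one $\Ecal$ by $\Delta^{\mp1}$ replaces the all-ones vector by a cumulative sum whose entries grow linearly in $m$. I must check that this linear growth actually survives the subsampling $D_\rho$ (it does, the surviving entry landing at $l=1$), since this is precisely the mechanism that degrades third-order decimation to $O(\rho^{-2})$ in Proposition \ref{third_order_fail}. The care required is in mixing the two index conventions --- the cyclic $\ZZ/m\ZZ$ deltas defining $\bar{\Delta}_\rho$ against the non-cyclic triangular and bidiagonal $\Delta^{\pm1}$ --- so that the boundary wrap-around contributions in $\bar{\Delta}_\rho w$ are counted correctly; everything else is routine once the toolbox above is in place.
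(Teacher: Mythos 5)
Your proposal is correct and follows essentially the same route as the paper: both reduce everything to the row-sum identity $\bar{\Delta}_\rho\mathbbm{1}=e_\rho$, the explicit two-column (rank-one) form of $\Delta^{-1}\Ecal\Delta$, and the computation of $\bar{\Delta}_\rho$ applied to the cumulative vector $(1,\dots,m)^T$, followed by subsampling. Your packaging via the factorizations $\Ecal=\mathbbm{1}e_{m-\rho}^\ast$ and $\Delta^{-1}\Ecal\Delta=w r^\ast$ is just a cleaner bookkeeping of the same entrywise calculations the paper performs.
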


\begin{proof}
	We will first compute each term without the effect of $D_\rho$ since $D_\rho$ is the sub-sampling matrix retaining only the $t\rho$-th rows for $t\in[\eta]$.

\begin{itemize}
\item[(1), (3)] First, note that $(\bar{\Delta}_\rho\Ecal)_{l,s}=\delta(l-\rho)\delta(s+\rho)$, so 
\[
(\bar{\Delta}_\rho^2\Ecal)_{l,s}=\delta(s+\rho)(\bar{\Delta}_\rho)_{l,\rho}=\delta(s-(m-\rho))(\delta(l-\rho)-\delta(l-2\rho)). 
\]

	Similarly, 
\[
(\bar{\Delta}_\rho\Ecal\bar{\Delta}_\rho)_{l,s}=\delta(l-\rho)(\bar{\Delta}_\rho)_{m-\rho,s}=\delta(l-\rho)(\delta(s-(m-\rho))-\delta(s-(m-2\rho))).
\]

\item[(5)]	Now, to compute $\Delta^{-1}\Ecal\Delta$, we see that, for $s\neq m$,
\[
	(\Delta^{-1}\Ecal\Delta)_{l,s}=\sum_{j=1}^l(\Ecal_{j,s}-\Ecal_{j,s+1})=l(\delta(m-\rho-s)-\delta(m-\rho-(s+1))),
\]
	and $(\Delta^{-1}\Ecal\Delta)_{l,m}=0$. In particular,
\[
	\Delta^{-1}\Ecal\Delta=\begin{pmatrix}
			0&\hdots&0&-1&1&0&\hdots&0\\
			\vdots&&\vdots&-2&2&\vdots&&\vdots\\
			\vdots&&\vdots&\vdots&\vdots&\vdots&&\vdots\\
			0&\hdots&0&-m&m&0&\hdots&0
			\end{pmatrix},
\]
	where the nonzero columns occur at the $(m-\rho-1)$ and $(m-\rho)$-th positions.

	For $\bar{\Delta}_\rho\Ecal(\Delta^{-1}\Ecal\Delta)$,
\[
\begin{split}
	(\bar{\Delta}_\rho\Ecal(\Delta^{-1}\Ecal\Delta))_{l,s}&=\delta(l-\rho)(\Delta^{-1}\Ecal\Delta)_{m-\rho,s}\\
					&=\delta(l-\rho)(m-\rho)(\delta(s-(m-\rho))-\delta(s-(m-\rho-1))).
\end{split}
\]

\item[(4)]	Note that $\bar{\Delta}_\rho\Ecal^2=\bar{\Delta}_\rho\Ecal$. The result then follows from the calculation on the first term.

\item[(2)]	Finally, as $\Delta^{-1}\Ecal\Delta$ only has non-zero entries on the $(m-\rho-1)$ and $(m-\rho)$-th columns, and the two columns differ by a sign, it suffices to calculate the $(m-\rho)$-th column of $\bar{\Delta}_\rho^2(\Delta^{-1}\Ecal\Delta)$.
\[
\begin{split}
	(\bar{\Delta}_\rho(\Delta^{-1}\Ecal\Delta))_{l,m-\rho}&=\sum_{j=1}^mj(\bar{\Delta}_\rho)_{l,j}\\
					&=\left\{\begin{array}{lcl}
						l-(l-\rho)=\rho&\text{if}&l>\rho\\
						l-(l-\rho+m)=-(m-\rho)&\text{if}&l<\rho\\
						l=\rho&\text{if}&l=\rho.
						\end{array}\right.
\end{split}
\]
	Then,
\[
\begin{split}
	(\bar{\Delta}_\rho^2(\Delta^{-1}\Ecal\Delta))_{l,m-\rho}&=\sum_{j=1}^m(\bar{\Delta}_\rho)_{l,j}(\bar{\Delta}_\rho(\Delta^{-1}\Ecal\Delta))_{j,m-\rho}\\
				&=\left\{\begin{array}{lcl}
					-m&\text{if}& l\in[2\rho-1]\backslash\{\rho\}\\
					\rho&\text{if}& l=\rho\\
					0&\text{otherwise}.
					\end{array}\right.
\end{split}
\]
\end{itemize}
\end{proof}

\begin{proof} of Proposition \ref{third_order_fail}:

	From \eqref{3rd-order-deci} and Lemma \ref{third_order_calc}, we see that
\[
	D_\rho S_\rho^3\Delta^3=\frac{1}{\rho^3}D_\rho\bar{\Delta}_\rho^3+\frac{\eta}{\rho^2}\Ecal_1+\frac{1}{\rho^3}\Ecal_2=\frac{1}{\rho^3}D_\rho\bar{\Delta}_\rho^3+O(\rho^{-2}),
\]
	where
\[
	(\Ecal_1)_{l,s}=\frac{1}{m}\bigg(D_\rho(\bar{\Delta}_\rho^2(\Delta^{-1}\Ecal\Delta)+\bar{\Delta}_\rho\Ecal(\Delta^{-1}\Ecal\Delta))\bigg)_{l,s}=\left\{\begin{array}{lcl}
			-1&\text{if}& (l,s)=(1,m-\rho-1)\\
			1&\text{if}& (l,s)=(1,m-\rho)\\
			0&\text{otherwise,}
			\end{array}\right.
\]
	and
\[
	(\Ecal_2)_{l,s}=\bigg(D_\rho(\bar{\Delta}_\rho^2\Ecal+\bar{\Delta}_\rho\Ecal\bar{\Delta}_\rho+\bar{\Delta}_\rho\Ecal^2)\bigg)_{l,s}=\left\{\begin{array}{lcl}
			-1&\text{if} &(l,s)=(2,m-\rho)\,\text{ or }\, (1,m-2\rho)\\
			3&\text{if} &(l,s)=(1,m-\rho)\\
			0&\text{otherwise}.
			\end{array}\right.
\]
\end{proof}

	Even in higher order cases, alternative decimation still only yields quadratic error decay with respect to the oversampling ratio, as can be seen in Figure \ref{fig:compare_4} and \ref{fig:compare_5}. 
	
	Alternative decimation is limited by this incomplete cancellation, but canonical decimation has even worse error decay. Contrary to the quadratic decay for alternative decimation, canonical decimation only has linear decay for high order $\Sigma\Delta$ quantization. The same thing applies to plain $\Sigma\Delta$ quantization, as can be seen in Figure \ref{fig:compare_2}.

\section{Numerical Experiments}\label{num_exp}
	Here, we present numerical evidence that the alternative decimation on frames has linear and quadratic error decay rate for the first and the second order, respectively. Moreover, it is shown that the canonical decimation, as described in Remark \ref{tilde_S}, is not suitable for our purpose when $r\geq2$.

	Recall that given $m,r,\rho$, one can define the canonical decimation operator $D_\rho\tilde{S}_\rho^r\in\RR^{\eta\times m}$, where $\tilde{S}_\rho\in\RR^{m\times m}$ is a circulant matrix.

\subsection{Setting}
	In our experiment, we look at three different quantization schemes: alternative decimation, canonical decimation, and plain $\Sigma\Delta$. Given observed data $y\in\CC^m$ from a frame $E\in\CC^{m\times k}$ and $r\in\NN$, one can determine the quantized samples $q\in\CC^m$ by
\[
	y-q=\Delta^r u
\]
	for some bounded $u$. The three schemes differ in the choice of dual frames:
\begin{itemize}
\item	Alternative decimation: $\tilde{x}=(D_\rho S_\rho^r E)^\dagger D_\rho S_\rho^r q=F_a q$.
\item	Canonical decimation:  $\tilde{x}=(D_\rho \tilde{S}_\rho^r E)^\dagger D_\rho \tilde{S}_\rho^r q=F_c q$.
\item	Plain $\Sigma\Delta$: $\tilde{x}=E^\dagger q=F_p q$.
\end{itemize}
	
	For each experiment, we use the mid-rise quantizer $\mathscr{A}$ and fix $k=55, \delta=0.5, L=100$, and $\eta=65$. For each $\rho$, we set $m=\rho\eta$ and pick 10 randomly generated vectors $\{x^j\}_{j=1}^{10}\subset\CC^k$. $\Sigma\Delta$ quantization on each signal gives $\{q^j\}_{j=1}^{10}\subset\CC^m$. The maximum reconstruction error over the 10 experiments is recorded, namely
\[
	\mathscr{E}_{i}=\max_{1\leq j\leq10}\|x^j-F_i q^j\|_2,\quad i\in\{a,c,p\}.
\]
	
	The frame in our experiment is
\[
(E^{m,k})_{l,j}=(E)_{l,j}=\frac{1}{\sqrt{k}}(\exp(-2\pi\imath(l+1)(j+1)/m))_{l,j}.
\]

\begin{figure}
	\centering
	\begin{subfigure}[t]{0.4\textwidth}
		\includegraphics[width=\textwidth]{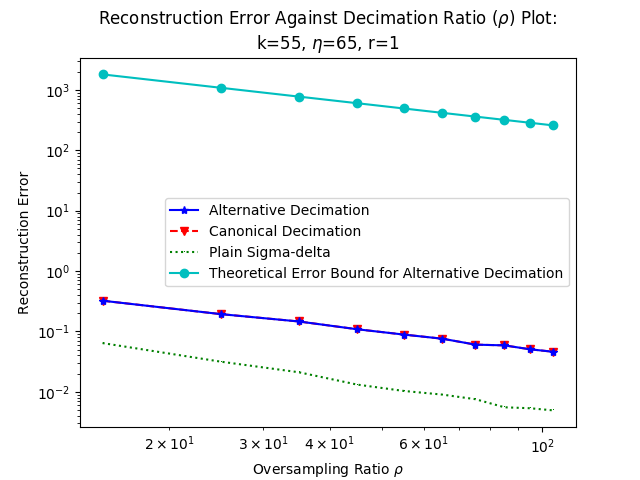}
		\caption{$r=1$.}
		\label{fig:compare_1}
	\end{subfigure}
	~
	\begin{subfigure}[t]{0.4\textwidth}
		\includegraphics[width=\textwidth]{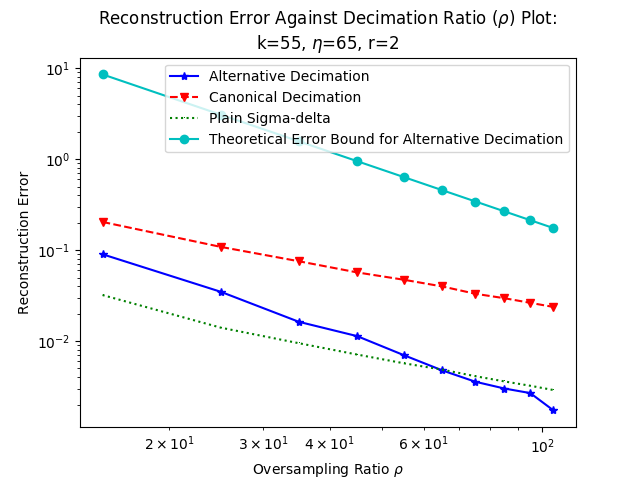}
		\caption{$r=2$.}
		\label{fig:compare_2}
	\end{subfigure}
	~
	\begin{subfigure}[t]{0.3\textwidth}
		\includegraphics[width=\textwidth]{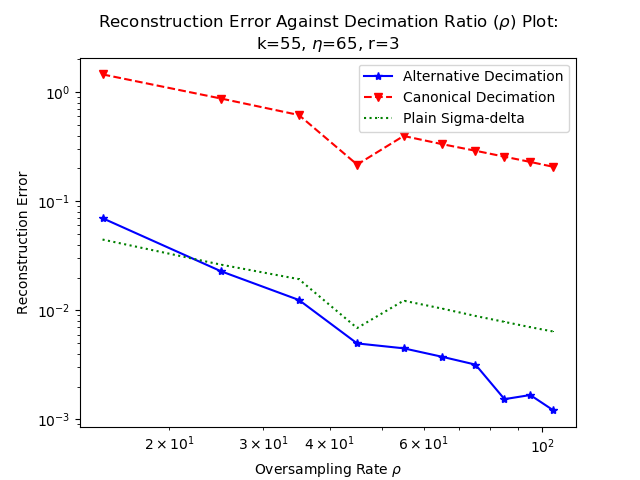}
		\caption{$r=3$.}
		\label{fig:compare_3}
	\end{subfigure}
	~
	\begin{subfigure}[t]{0.3\textwidth}
		\includegraphics[width=\textwidth]{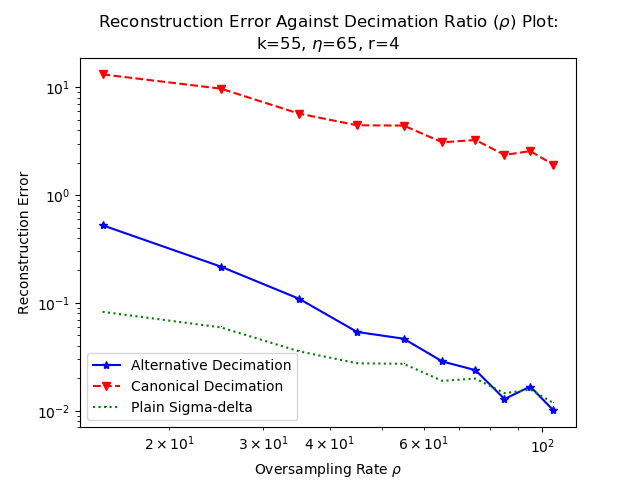}
		\caption{$r=4$.}
		\label{fig:compare_4}
	\end{subfigure}
	~
	\begin{subfigure}[t]{0.3\textwidth}
		\includegraphics[width=\textwidth]{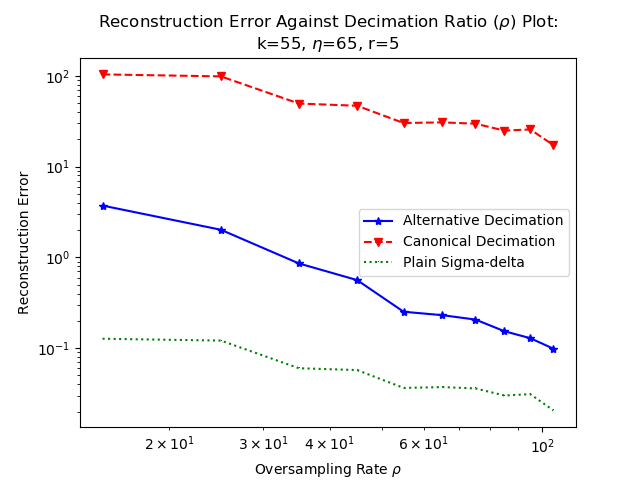}
		\caption{$r=5$.}
		\label{fig:compare_5}
	\end{subfigure}
	\caption{The log-log plot for reconstruction error against the decimation ratio $\rho$ for different quantization schemes. In the case $r=1$, alternative decimation coincides with canonical decimation. For $r\geq2$, alternative decimation has better error decay rate than both canonical decimation and plain $\Sigma\Delta$ quantization.}
	\label{fig:recon_compare}
\end{figure}

	First, we shall compare alternative decimation with plain $\Sigma\Delta$ quantization from Figure \ref{fig:recon_compare}. For $r=1$, alternative decimation performs worse than plain $\Sigma\Delta$ quantization, as plain $\Sigma\Delta$ quantization benefits from the smoothness of the frame elements, having decay rate $O((\frac{m}{k})^{-5/4})$ proven in \cite{JB_AP_OY_2006}. However, for $r\geq2$, alternative decimation supersedes plain $\Sigma\Delta$ quantization as the better scheme. This can be explained by the boundary effect in finite-dimensional spaces that results in incomplete cancellation for backward difference matrices. We are interested in the case $r=1$ or $2$. As we can see, the theoretical error bound does not have a tight constant, although the decay rate is consistent with our experimental result.

\subsection{Necessity of Alternative Decimation}\label{alt_dec_nec}
	
	The main difference between the alternative decimation operator $D_\rho S_\rho^r$ and the canonical one $D_\rho\tilde{S}_\rho^r$ lies in the scaling effect on difference structures. We have $\tilde{S}_\rho^r=(S_\rho+L)^r$ with $\rho L$ having unit entries on the first $\rho-1$ rows and $0$ everywhere else.
	
	In Figure \ref{fig:recon_compare}, we can see the performance drop-off when switching from alternative decimation to canonical decimation for $r\geq2$. we can see that canonical decimation incurs much worse reconstruction error than the alternative one, while generally having worse decay rate. For demonstration, we show explicitly the difference between alternative and canonical decimation schemes for $r=2$:
\[
\begin{split}
	\tilde{S}_\rho^2\Delta^2&=(S_\rho+L)^2\Delta^2\\
					&=S_\rho^2\Delta^2+(LS_\rho+S_\rho L+L^2)\Delta^2\\
					&=S_\rho^2\Delta^2+L(S_\rho+L^2)\Delta^2+S_\rho L\Delta^2.
\end{split}
\]
	Since $D_\rho L=0$, we are left with $D_\rho S_\rho L\Delta^2$. Now,
\[
	(L\Delta^2)_{l,j}=\left\{\begin{array}{ll}
			\frac{-1}{\rho}&\text{if}\quad1\leq l\leq \rho-1,\, j=m-1,\\
			\frac{1}{\rho}&\text{if}\quad1\leq l\leq \rho-1,\, j=m,\\
			0&\text{otherwise}.
			\end{array}\right.
\]
	Then, we see that
\[
	(DS_\rho L\Delta^2)_{l,j}=\left\{\begin{array}{ll}
			\frac{-(\rho-1)}{\rho^2}&\text{if}\quad l=1,\, j=m-1,\\
			\frac{\rho-1}{\rho^2}&\text{if}\quad l=1,\, j=m,\\
			0&\text{otherwise}.
			\end{array}\right.
\]
	We see that $D_\rho\tilde{S}_\rho^2\Delta^2=O(\rho^{-1})$, hence the linear decay for $r=2$.

\end{appendix}

\bibliographystyle{amsplain}

\bibliography{ref}

\end{document}